\newcommand{\Omit}[1]{}
\newcommand{\e}{\varepsilon}
\renewcommand{\d}{\delta}
\newcommand{\fullversion}[1]{}
\newcommand{\myomega}{\omega}
\newcommand{\K}{\mathcal{K}}
\newcommand{\Z}{\mathcal{Z}}
\newtheorem{Example}[thm]{Example}
\newtheorem{Examples}[thm]{Examples}
\newenvironment{example}{\begin{Example}}{\end{Example}}
\newenvironment{examples}{\begin{Examples}}{\end{Examples}}
\newcommand{\myparagraph}{\paragraph}
\newcommand{\licsmath}[1]{\[ #1 \]}
\newcommand{\licsmatht}[2]{\begin{enumerate}
\item[] \quad $#1$
\item[] \quad $#2$
\end{enumerate}}
\newcommand{\licsmathtt}[3]{\begin{enumerate}
\item[] \quad $#1$
\item[] \quad $#2$
\item[] \quad $#3$
\end{enumerate}}
\newcommand{\licsmathtttt}[5]{\begin{enumerate}
\item[] \quad $#1$
\item[] \quad $#2$
\item[] \quad $#3$
\item[] \quad $#4$
\item[] \quad $#5$
\end{enumerate}}
\newcommand{\bnf}{\mathrel{::=}}
\newcommand{\N}{\mathbb{N}}
\newcommand{\R}{\mathbb{R}}
\newcommand{\I}{\mathbb{I}}
\newcommand{\Bool}{2}%{\mathbb{B}}
\newcommand{\Sierp}{\mathcal{S}}
\newcommand{\pN}{\mathcal{N}}
\newcommand{\pBool}{\mathcal{B}}
\newcommand{\If}{\,\mathrel{\operatorname{if}}}
\newcommand{\Then}{\mathrel{\operatorname{then}}}
\newcommand{\Else}{\mathrel{\operatorname{else}}}
\newcommand{\True}{1}%{\operatorname{tt}}%{\operatorname{true}}
\newcommand{\False}{0}%{\operatorname{ff}}%{\operatorname{false}}
\newcommand{\domain}[1]{{\D_{#1}}}
\newcommand{\total}[1]{{\T_{#1}}}
\newcommand{\totaleq}[1]{\sim_{#1}}
\newcommand{\quo}[1]{\qq_{#1}}
\newcommand{\qq}{\rho}
\newcommand{\D}{D}
\newcommand{\C}{C}
\newcommand{\T}{T}
\newcommand{\kk}[1]{{\C_{#1}}}
\newcommand{\tproduct}{\sigma \times \tau}
\newcommand{\tfunction}{\sigma \to \tau}
\newcommand{\rh}[1]{\rho_{#1}}
\newcommand{\comp}{\circ}
\newcommand{\siff}{\iff}%{\Leftrightarrow}
\newcommand{\g}{\operatorname{\text{\boldmath{$g$}}}}
\newcommand{\f}{\operatorname{\text{\boldmath{$f$}}}}
\newcommand{\x}{\operatorname{\text{\boldmath{$x$}}}}
\newcommand{\sK}{\operatorname{\text{\boldmath{$K$}}}}
\newcommand{\sF}{\operatorname{\text{\boldmath{$F$}}}}
\newcommand{\sQ}{\operatorname{\text{\boldmath{$Q$}}}}
\newcommand{\sP}{\operatorname{\text{\boldmath{$P$}}}}
\newcommand{\sE}{\operatorname{\text{\boldmath{$E$}}}}
\def\doi{4 (3:3) 2008}
\begin{document}

\author[M.~Escard\'o]{Mart\'\i n Escard\'o}
\address{School of Computer Science, University of Birmingham, B15 2TT, UK}
\email{m.escardo@cs.bham.ac.uk}

\title[Exhaustible sets]{Exhaustible sets in higher-type computation}

\keywords{Higher-type recursion, continuous functional, PCF, domain
  theory, Scott domain, semantics, topology, compactly generated
  space, functional programming, Haskell}

\subjclass{F.4.1, F.3.2}
\amsclass{%
03D65, % Higher-type recursion theory.
68Q55, % Semantics
06B35, % Continuous lattices and posets, applications 
54D50.} % $k$-spaces}

\begin{abstract}
  We say that a set is \emph{exhaustible} if it admits algorithmic
  universal quantification for continuous predicates in finite time,
  and \emph{searchable} if there is an algorithm that, given any
  continuous predicate, either selects an element for which the
  predicate holds or else tells there is no example.  The Cantor space
  of infinite sequences of binary digits is known to be searchable.
  Searchable sets are exhaustible, and we show that the converse also
  holds for sets of hereditarily total elements in the hierarchy of
  continuous functionals; moreover, a selection functional can be
  constructed uniformly from a quantification functional.  We prove
  that searchable sets are closed under intersections with decidable
  sets, and under the formation of computable images and of finite and
  countably infinite products.  This is related to the fact,
  established here, that exhaustible sets are topologically compact.
  We obtain a complete description of exhaustible total sets by
  developing a computational version of a topological Arzela--Ascoli
  type characterization of compact subsets of function spaces. We also
  show that, in the non-empty case, they are precisely the computable
  images of the Cantor space. The emphasis of this paper is on the
  theory of exhaustible and searchable sets, but we also briefly
  sketch applications. 
\end{abstract}

\maketitle

\section{Introduction}

\noindent
A wealth of computational problems of interest have the following
form:
\begin{quote}
  \em Given a set $K$ and a property $p$ of elements of $K$, decide
  whether or not \emph{all} elements of $K$ satisfy~$p$.
\end{quote}
For $K$ fixed in advance, this is equivalent to the emptiness problem
for~$p$. One is often interested in suitable restrictions on the
possible syntactical forms of the predicate~$p$ that guarantee that
this problem is decidable (or, less ambitiously, that the
non-emptiness problem is semi-decidable) uniformly in the syntactical
form of~$p$. 
In this work, on the other hand, the emphasis is on the set~$K$ rather
than the predicate~$p$, and we study the case in which $K$ is
infinite. Moreover, $p$ is not assumed to be given syntactically or
via any other kind of intensional information: we only use information
about the input-output relation determined by~$p$ considered as a
boolean-valued function.
In the absence of intensional information, continuity of~$p$ plays a
fundamental role, where $p$ is continuous iff for any $x$ in the
domain of $p$, the boolean value $p(x)$ depends only on a finite
amount of information about~$x$.  We work in the realm of higher-type
computation with continuous functionals, using Ershov--Scott domains
to model partial functionals, and Kleene--Kreisel spaces to model
total functionals~\cite{normann:computer,normann:recursion}.

\pagebreak[3]
We say that the set $K$ is \emph{exhaustible} if the above problem can
be algorithmically solved for any continuous $p$ defined on $K$,
uniformly in~$p$. The uniform dependency on $p$ is formulated by
giving the algorithm the type $(D \to \pBool) \to \pBool$, where $D$
is a domain, $K \subseteq D$, and $\pBool$ is the domain of booleans.
The main question investigated in this work is what kinds of infinite
sets are exhaustible.

Clearly, finite sets of computable elements are exhaustible. What may
be rather unclear is whether there are \emph{infinite} examples.
Intuitively, there can be none: how could one possibly check
infinitely many cases in finite time? This intuition is correct
when~$K$ is a set of natural numbers: it is a theorem that, in this
case, $K$ is exhaustible if and only if it is finite.  This can be
proved by reduction to the halting problem, but there is also a purely
topological argument (Remark~\ref{introduction}).  However, it turns
out that there is a rich supply of infinite exhaustible sets. A first
example, the Cantor space of infinite sequences of binary digits, goes
back to the 1950's, or even earlier, with the work of Brouwer, as
discussed in the related-work paragraph below.

We say that $K$ is \emph{searchable} if there is an algorithm that,
given any continuous predicate~$p$, either selects some $x \in K$ such
that $p(x)$ holds, or else reports that there isn't any. It is easy to
see that searchable sets are exhaustible. We show that, for sets of
total elements, the converse also holds and hence the two notions
coincide.  Moreover, a selection functional can be constructed
uniformly from a quantification functional
(Section~\ref{characterization}).

We develop tools for systematically building exhaustible and
searchable sets, and some characterizations, including the following:
they are closed under intersections with decidable sets, under the
formation of computable images and of finite and countably infinite
products (Section~\ref{building}).  In the case of exhaustibility, the
last claim is restricted to sets of total elements, and is open beyond
this case.  The non-empty exhaustible sets of total elements are
precisely the computable images of the Cantor space (Section~\ref{characterization}). We also formulate and
prove an Arzela--Ascoli type characterization of exhaustible sets of
total elements of function types (Section~\ref{arzela:ascoli}).

The above closure properties and characterizations resemble those of
compactness in topology. This is no accident: we show that exhaustible
sets of total elements are indeed compact, in the Kleene--Kreisel
topology (Section~\ref{criteria}).  This plays a crucial role in the
correctness proofs of some of the algorithms, and, indeed, in their
very construction.  Thus, the specifications of all of our algorithms
can be understood without much background, but an understanding of the
working of some of them requires some amount of topology.  We have
organized the presentation so that the algorithms occurring earlier
are motivated by topology but don't rely on knowledge of topology for
their formulation or correctness proofs.

In Section~\ref{background} we include background material that can be
consulted on demand and in Section~\ref{definitions} we define the
central notions investigated in this work. In Section~\ref{technical}
we include technical remarks, further work, announcement of results,
applications, and research directions.  In the concluding
Section~\ref{conclusion} we review the role topology plays in our
investigation of exhaustible and searchable sets.

\pagebreak[4]
\myparagraph{Related work.}  Brouwer's Fan functional gives the
modulus of uniform continuity of a discrete-valued continuous
functional on the Cantor space.  According to personal communication
by Dag Normann, computability of the Fan functional was known in the
late 1950's. This immediately gives rise to the exhaustibility of the
Cantor space. A number of authors have considered the definability of
the Fan functional in various formal systems.
Normann~\cite{normann:computer} cites Tait (1958, unpublished), Gandy
(around 1982, unpublished) and Berger~\cite{berger:thesis} (1990).
Tait showed that the Fan functional is not definable from Kleene's
schemes S1--S9 interpreted over \emph{total} functionals. Berger
observed that, for \emph{partial functionals}, PCF definability
coincides with S1--S9 definability, and showed that the Fan functional
is PCF definable. In order to do that, he first explicitly defined a
selection functional for the Cantor space.  Then Hyland informed the
community that Gandy was aware of the PCF/S1--S9 definability of the
Fan functional for the partial interpretation of Kleene's schemes,
but Gandy's construction seems to be lost.

% Some of the above results crucially rely on a notion of totality.  For
% example, to show that exhaustible sets are searchable, we need to
% assume that they consist of total elements. But there are two
% contenders for a notion of totality in higher-type computation, namely
% Kleene--Kreisel totality and hereditarily effective totality. Our
% results hold for the former but fail for the latter.  This failure is
% to be expected: it is well known that, for the hereditarily effective
% notion, there is no total Fan functional~\cite{beeson}, and hence the
% set of total elements of the Cantor type cannot be exhaustible. Put
% another way, the above algorithms for the Fan functional are total in
% the Kleene--Kreisel sense, but not in the hereditarily effective
% sense. For these results, in the language of
% Plotkin~\cite{plotkin:totality}, we work with PCF-definable
% functionals under the semantic notion of totality.

\pagebreak[3] \myparagraph{Acknowledgements.} I have benefited from
stimulating discussions with, and questions by, Andrej Bauer, Ulrich
Berger, Dan Ghica, Achim Jung, John Longley, Paulo Oliva, Matthias
Schr\"oder, and Alex Simpson. I also thank Dag Normann for having
answered many questions regarding the history and technical
ramifications of the subject of higher-type computation, and for
sending me a copy of Tait's unpublished manuscript --- but the reader
should consult his paper~\cite{normann:computer} for a more accurate
and detailed account.

\section{Background} \label{background}

The material developed here can consulted on demand, except for
Section~\ref{domains}, which introduces and briefly discusses our
model of computation. Some readers will be more familiar with domain
theory and Ershov--Scott continuous functionals (and PCF or functional
programming) via denotational semantics, and others with the
Kleene--Kreisel continuous functionals via higher-type computability
theory, and we consider these two models and their
relationship~\cite{normann:computer}.  Alternatively, we could have
worked with Weihrauch's model of computation via
representations~\cite{weihrauch:analysis}, which generalizes Kleene's
approach via associates~\cite{normann:recursion}.  Even better, we
could have worked with the QCB model of computation, which subsumes
both domain theory and representation theory in a natural
way~\cite{MR2328287,MR1948051}. We adopt the Ershov--Scott and
Kleene--Kreisel approaches as they have played a wide
role~\cite{normann:computer}. A presentation based on QCB spaces would
have been not only more general but also cleaner in several ways, but
less familiar and perhaps more technically demanding.  Our objective
in this paper is to address the essential issues without getting
distracted by an excessive amount of generality.

\subsection{Domains of computation} \label{domains}

We work on a cartesian closed category of computable maps of
effectively given domains that contains the flat domains of booleans,
\[\pBool=\{\False,\True,\bot\},\]
and of natural numbers, 
\[\pN=\N \cup \{\bot\},\] 
such as~\cite{egli:constable} or~\cite{smyth:effectively} among other
possibilities. Notice that these categories are closed under countable
cartesian powers. We don't need to, and we don't, explicitly refer to
effective presentations, and in particular to numberings of finite
elements or abstract bases etc.\ to formulate computability results.
We instead start from well known computable functions and use the fact
that computable functions are closed under definition by lambda
abstraction, application, least fixed points etc.  Moreover, we don't
invoke non-sequential functions such as Platek's
parallel-or~\cite{scott:lcf} or Plotkin's
parallel-exists~\cite{plotkin:lcf} in order to construct new
computable functions. Our algorithms can thus be directly understood
as functional programs in e.g.\ PCF~\cite{plotkin:lcf},
FPC~\cite{plotkin:domains} (PCF extended with recursive types,
interpreted as solutions of domain equations) or practical versions of
FPC such as Haskell~\cite{bird,haskell:hutton}, as done
in~\cite{escardo:lics07}.

At some point we need further assumptions on our domains of
computation to be able to formulate and prove certain results. Some of
those results can be formulated, and perhaps also be proved, for
domains with totality in the sense of Berger~\cite{berger:total}.  We
consider the particular case consisting of the smallest collection of
domains containing $\pBool$ and $\pN$ and closed under finite
products, countable powers and exponentials (=function spaces).

\subsection{Higher-type computation} \label{higher:background}

The remainder of this section is not needed until
Theorem~\ref{uniform:cont}.  As discussed in
e.g.~\cite{normann:computer,MR2143877,longley:ubiquitous}, there are
many approaches to higher-type computation.  Kleene defined the total
functionals directly, but it has been found more convenient to work
with the larger collection of partial functionals and isolate the
total ones within them, as done by Kreisel. The approaches are
equivalent, and such total functionals are often referred to as
\emph{Kleene--Kreisel functionals} or \emph{continuous functionals}.
It turns out that, as discussed by Normann~\cite{normann:computer},
this coincides with another approach that also arises programming
language semantics: equivalence classes of total functionals on
Ershov--Scott domains.  We work with both total functionals on domains
and a characterization of the Kleene--Kreisel functionals, due to
Hyland, in terms of compactly generated spaces.

\myparagraph{Types.}  The \emph{simple types} are defined by induction
as
\licsmath{\sigma,\tau \bnf o \mid \iota \mid \tproduct \mid \tfunction,}
where $o$ and $\iota$ are ground types for booleans and natural
numbers respectively.  The subset of \emph{pure types} is defined by
\licsmath{\sigma \bnf \iota \mid \sigma \to \iota.}  As usual, we'll
occasionally reduce statements about simple types to statements about
pure types.

\myparagraph{Partial functionals.}  
For each type $\sigma$, define a domain~$\domain{\sigma}$ of
\emph{partial functionals} of type $\sigma$ by induction as follows:
\licsmathtt{\domain{o} = \pBool, \quad \domain{\iota} = \pN,}
{\domain{\tproduct} = \domain{\sigma} \times
  \domain{\tau},}{\domain{\tfunction} = (\domain{\sigma} \to
  \domain{\tau}) = \domain{\tau}^{\domain{\sigma}}} where the products
and exponentials are calculated in the cartesian closed category of
continuous maps of Scott domains, where a \emph{Scott domain} is an
algebraic, bounded complete, and directed complete
poset~\cite{abramsky:jung}.

\myparagraph{Total functionals.} 
For each type $\sigma$, define %again by induction,
a set $\total{\sigma}\subseteq \domain{\sigma}$ of \emph{total
  functionals} and a relation~$\totaleq{\sigma}$ on~$\domain{\sigma}$
as follows, where $\gamma$ ranges over the ground types~$o$ and~$\iota$:
\[
\begin{array}{lcl}
\total{o} = \Bool = \{\False,\True\}, \quad \total{\iota} = \N,
& \qquad & x \totaleq{\gamma} y \siff x,y \in \total{\gamma}
    \wedge x = y. \\[1ex]
\total{\tproduct} = \total{\sigma} \times \total{\tau},
&  & 
(x,x') \totaleq{\tproduct} (y,y') \siff \text{$x \totaleq{\sigma} y \land x' \totaleq{\tau} y'$}, \\[1ex]
\total{\tfunction} = \{ f \in \domain{\tfunction} \mid f(\total{\sigma}) \subseteq
\total{\tau} \},
& & 
{f \totaleq{\tfunction} g \siff \forall  x \totaleq{\sigma} y. f(x) \totaleq{\tau} g(y).}
\end{array}
\]
Then the set $\total{\sigma}$ can be recovered from the
relation~$\totaleq{\sigma}$ as 
\[ x \in \total{\sigma} \siff x \totaleq{\sigma} x, \]
and the relation can be recovered from the set as
\begin{eqnarray*}
x \totaleq{\sigma} y & \siff & x \sqcap y \in \total{\sigma} \\
& \siff & x,y
\in \total{\sigma} \text{ and $x$ and $y$ are bounded above}.
\end{eqnarray*}
See e.g.~\cite{berger:total} and~\cite{plotkin:totality}. In
particular, $\totaleq{\sigma}$ is an equivalence relation
on~$\total{\sigma}$.

\myparagraph{Computability.}  Plotkin~\cite{plotkin:lcf} characterized
the computable partial functionals as those that are PCF-definable
from parallel-or and parallel-exists~\cite{plotkin:lcf}.  All
computable functionals we construct from Section~\ref{characterization} onwards
are defined in PCF without parallel extensions. This characterization
of computability includes, in particular, total functionals. An
interesting fact, which we don't need to invoke, is that every
\emph{total} functional definable in PCF with parallel extensions is
equivalent to one definable in PCF without parallel
extensions~\cite{normann:totality}.

\subsection{Kleene--Kreisel functionals} \label{kk:background}

For each type~$\sigma$, define by induction a set~$\kk{\sigma}$ of
\emph{Kleene--Kreisel functionals} of type~$\sigma$ and a surjection
$\rh{\sigma} \colon \total{\sigma} \to \kk{\sigma}$ as follows, so
that
\licsmath{\kk{\sigma} \cong \total{\sigma}/\totaleq{\sigma}.}
For ground types and product types, define
\[
\begin{array}{ll}
\kk{o} = 2 = \total{o}, \quad \kk{\iota} = \N = \total{\iota}, \quad & \rh{\gamma}(x)=x. \\
\kk{\tproduct} = \kk{\sigma} \times \kk{\tau}, \quad & \rh{\tproduct} = \rh{\sigma} \times \rh{\tau}.
\end{array}
\]
For function types, consider the diagram \\[0.1ex]
\licsmath{ (\dagger) \quad
\begin{diagram}
   \domain{\sigma} & \lInto & \total{\sigma} & \rOnto_{\rh{\sigma}} & \kk{\sigma} \\
   \dTo^{f} & (1) & \dDashto & (2) & \dDashto_{\phi} \\
   \domain{\tau} & \lInto & \total{\tau} & \rOnto^{\rh{\tau}} & \kk{\tau}. \\
\end{diagram}
} 

\medskip
\noindent
The square (1) commutes for some map $\total{\sigma} \to \total{\tau}$
if and only if $f \in \total{\tfunction}$, and in this case the map is
uniquely determined as the (co)restriction of~$f$.  Moreover, in this
case, there is a unique map $\phi$ making the square (2) commute,
because $\rh{\sigma}$ is a surjection. 
We define
\begin{eqnarray*}
\kk{\tfunction} & = & \{ \phi \colon \kk{\sigma} \to \kk{\tau}
  \mid \exists f \in \total{\tfunction}.\text{$(2)$ commutes}\}, \\
\rh{\tfunction}(f) & = & \text{the unique $\phi$ such that $(\dagger)$
    commutes.}
\end{eqnarray*}
Then, by construction, for any $\sigma$ and all $x,y \in
\domain{\sigma}$, we have that 
\[\text{$x \totaleq{\sigma} y$ iff $x,y
  \in \total{\sigma}$ and $\rh{\sigma}(x)=\rh{\sigma}(y)$}.\]
If $(\dagger)$ commutes, we say that $f$ is a \emph{representative} of
$\phi$. A Kleene--Kreisel functional is computable iff it has a
computable representative.

\begin{lem} \label{pure} 
Every $\kk{\sigma}$ is a computable retract of
$\kk{\tau \to \iota}$ for some $\tau$.  
\end{lem}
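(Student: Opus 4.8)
The plan is to argue by induction on the structure of the simple type $\sigma$, writing $\kk{\sigma}\trianglelefteq\kk{\sigma'}$ for ``$\kk{\sigma}$ is a computable retract of $\kk{\sigma'}$''. Two elementary closure properties drive everything. First, $\trianglelefteq$ is transitive (compose sections with sections and retractions with retractions). Second, it is respected by the type constructors: from retract data $(s_1,r_1)$ for $\kk{\sigma}\trianglelefteq\kk{\sigma'}$ and $(s_2,r_2)$ for $\kk{\tau}\trianglelefteq\kk{\tau'}$ one gets $\kk{\sigma\times\tau}\trianglelefteq\kk{\sigma'\times\tau'}$ by taking products of the maps, and $\kk{\sigma\to\tau}\trianglelefteq\kk{\sigma'\to\tau'}$ via $\phi\mapsto s_2\comp\phi\comp r_1$ and $\psi\mapsto r_2\comp\psi\comp s_1$; the latter is \emph{contravariant} in the domain. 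These induced maps act on representatives by composition with computable functionals, so they are again computable. I will also use the standard computable currying isomorphism $\kk{(\sigma\times\tau)\to\rho}\cong\kk{\sigma\to\tau\to\rho}$, which descends from a totality-preserving isomorphism of the underlying domains.

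The core is a \emph{product-merging} lemma: the product of two pure types is a computable retract of a single pure type. Recall that the pure types are exactly $P_0=\iota$ and $P_{n+1}=P_n\to\iota$. I first show $\kk{P_n}\trianglelefteq\kk{P_{n+1}}$ for every $n$. The base case $\kk{\iota}\trianglelefteq\kk{\iota\to\iota}$ uses the constant-function section and evaluation at $0$. For the step, given retract data $(s_{n-1},r_{n-1})$ for $\kk{P_{n-1}}\trianglelefteq\kk{P_n}$, set $s_n(f)=f\comp r_{n-1}$ and $r_n(G)=G\comp s_{n-1}$; then $r_n(s_n(f))=f\comp r_{n-1}\comp s_{n-1}=f$ since $r_{n-1}\comp s_{n-1}=\id$. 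Consequently any two pure types are retracts of a common pure type, namely $P_m$ for $m$ the larger of their levels. Now let $\pi_1,\pi_2$ be pure; after replacing $\iota$ by $\iota\to\iota$ where needed I may write $\pi_i=\tau_i\to\iota$ with $\tau_i$ pure, choose a pure $\rho$ with $\kk{\tau_i}\trianglelefteq\kk{\rho}$, and use contravariant functoriality to get $\kk{\pi_i}\trianglelefteq\kk{\rho\to\iota}$. Finally a computable bijection $\N\times\N\to\N$ induces, componentwise, a computable isomorphism $\kk{\rho\to\iota}\times\kk{\rho\to\iota}\cong\kk{\rho\to\iota}$, so $\kk{\pi_1}\times\kk{\pi_2}\trianglelefteq\kk{\rho\to\iota}$, a pure type of the desired shape.

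With these tools the induction is routine. For the ground types, $\kk{o}=2\trianglelefteq\N=\kk{\iota}\trianglelefteq\kk{\iota\to\iota}$, using $0,1$ to code the booleans. For $\sigma=\sigma_1\times\sigma_2$, the inductive hypotheses give pure $\pi_i$ with $\kk{\sigma_i}\trianglelefteq\kk{\pi_i}$, whence $\kk{\sigma}\trianglelefteq\kk{\pi_1}\times\kk{\pi_2}$, which product-merging exhibits as a retract of a pure type. For $\sigma=\sigma_1\to\sigma_2$, pick pure $\pi_1,\pi_2$ with $\kk{\sigma_i}\trianglelefteq\kk{\pi_i}$ and write $\pi_2=\mu\to\iota$; functoriality gives $\kk{\sigma}\trianglelefteq\kk{\pi_1\to(\mu\to\iota)}$, currying rewrites the latter as $\kk{(\pi_1\times\mu)\to\iota}$, product-merging supplies a pure $\rho$ with $\kk{\pi_1\times\mu}\trianglelefteq\kk{\rho}$, and contravariant functoriality yields $\kk{(\pi_1\times\mu)\to\iota}\trianglelefteq\kk{\rho\to\iota}$. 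In each case the terminal pure type has the form $\tau\to\iota$, as the statement demands.

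The main obstacle is exactly the product-merging lemma, and inside it the level-raising retractions $\kk{P_n}\trianglelefteq\kk{P_{n+1}}$. The recursive definition of $s_n,r_n$ is short, but one must check that these maps are computable and, above all, that they preserve totality, so that they descend along the surjections $\rh{\sigma}$ to bona fide maps of Kleene--Kreisel functionals rather than merely of the ambient domains; the same ``descent'' underlies the functoriality and currying isomorphisms invoked above. I expect the delicate bookkeeping to be keeping straight which of $s_{n-1},r_{n-1}$ is applied on which side so that the composite telescopes to the identity, and verifying that the pointwise pairing isomorphism genuinely lands in the total functionals.
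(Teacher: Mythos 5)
Your proof is correct, but it takes a different route from the paper's, which is essentially a citation rather than an argument: the paper appeals to the known stronger fact that ``simple types are retracts of pure types'' (pointing to Longley's survey), and adds only the observation that every pure type is either $\iota$ or of the form $\tau \to \iota$, together with the remark that $\iota$ is a retract of $\tau \to \iota$. What you have done is reconstruct in full the classical result hiding behind that citation: your level-raising retractions $\kk{P_n} \trianglelefteq \kk{P_{n+1}}$, the pairing-based merging of two pure function types into one, currying, and the contravariant functoriality of retract data are precisely the ingredients of the standard folklore argument, and each step checks out (in particular the telescoping identity $r_n(s_n(f)) = f \comp r_{n-1} \comp s_{n-1} = f$, and the componentwise isomorphism $\kk{\rho\to\iota}\times\kk{\rho\to\iota}\cong\kk{\rho\to\iota}$ induced by a computable pairing bijection $\N\times\N\to\N$). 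The paper's approach buys brevity, which is appropriate since the lemma is background material there; yours buys self-containedness and makes explicit the two points the citation glosses over in this particular setting, namely that all the section/retraction pairs are computable (being $\lambda$-definable from computable maps) and that they descend along the quotients $\rh{\sigma}$ from total functionals on domains to genuine maps of Kleene--Kreisel spaces. One cosmetic remark: in the function-type case of your main induction you write $\pi_2 = \mu \to \iota$ without repeating the replacement of $\iota$ by $\iota \to \iota$ that you made explicit in the product-merging step, but the same trick applies verbatim there, so nothing is missing.
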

A stronger form of this is known as ``simple types are retracts of
pure types'' (see e.g.~\cite{longley:ubiquitous}). Here we use the
fact that every pure type is either $\iota$ or of the form $\tau \to
\iota$, and that $\iota$ is a retract of $\tau \to \iota$ for any
$\tau$.

\subsection{Compactly generated spaces} \label{compactly}

The remainder of this section is not needed until
Section~\ref{criteria}, where it is used in order to formulate and
prove the crucial Lemma~\ref{lemma:criterion} that establishes
compactness of exhaustible sets of total elements.  Compactly
generated Hausdorff spaces, or $k$-spaces, to be introduced shortly,
are to total computation as Scott domains are to partial computation.
This becomes clear in Section~\ref{hyland}. Here we briefly introduce
$k$-spaces and some of their fundamental topological properties that
are applied to prove various computational theorems. For more details
and proofs, see e.g.~\cite{escardo:lawson:simpson} or the references
contained therein.

We begin by considering the Hausdorff case. If $F$ is a closed set of a
Hausdorff space~$X$, then $K \cap F$ is closed for every compact set
$K \subseteq X$.  Any set $F$ that satisfies this condition is called
$k$-closed. The Hausdorff space $X$ is a \emph{$k$-space} iff every
$k$-closed set is closed. (This is equivalent to saying that $X$ is
the colimit of its compact subspaces ordered by inclusion.) Any
Hausdorff space can be transformed into a $k$-space by stipulating
that all $k$-closed sets are closed.  In categorical terms, this
construction is a coreflection of the category of Hausdorff spaces
into its subcategory of $k$-spaces. 
% (see e.g.~\cite{maclane,escardo:lawson:simpson}).

The category of Hausdorff $k$-spaces is cartesian closed. Given
objects $X$ and $Y$, their categorical product $X \times Y$ is the
coreflection of their topological product. Their exponential $Y^X$
consists of the continuous maps $X \to Y$ under the coreflection of
the compact-open topology.  The compact-open topology has subbasic
open sets of the form \[N(K,V) = \{ f \in Y^X \mid f(K) \subseteq
V\},\] where $K$ is a compact subset of $X$ and $V$ is an open subset
of $V$.  Cartesian closedness amounts to the fact that the evaluation
map
\begin{eqnarray*}
  Y^X \times X  & \to & Y \\
  (f,x) & \mapsto & f(x)
\end{eqnarray*}
is continuous, and that for any continuous map $f \colon Z \times X \to Y$,
its transpose
\begin{eqnarray*}
  \bar{f} \colon Z & \to&  Y^X \\
  z & \mapsto & (x \mapsto f(z,x))
\end{eqnarray*}
is continuous. Equivalently, $f$ is continuous iff $\bar{f}$ is
continuous.

We also need to consider $k$-spaces without the restriction to the
Hausdorff case. Let $X$ be an arbitrary topological space. A
\emph{probe} is a continuous function $p \colon K \to X$ where $K$ is
a compact Hausdorff space. A set $F \subseteq X$ is $k$-closed if
$p^{-1}(F)$ is closed for every probe $p \colon K \to X$.  Then,
again, $X$ is a $k$-space iff every $k$-closed set is closed, and
$k$-spaces form a cartesian closed category, and the inclusion of
Hausdorff $k$-spaces preserves products and exponentials.  All locally
compact spaces are $k$-spaces, and this includes non-Hausdorff
examples such as Scott domains under the Scott topology.  The
description of the products and exponentials in the general case is
omitted and the reader is referred to the above references, but in any
case they are not needed for the purposes of this work, with one
exception: an exponential of $k$-spaces whose base is the Sierpinski
space has the Scott topology~\cite{escardo:lawson:simpson}. This is
applied in the proof of the following lemma.

Denote by $\Sierp$ the Sierpinski space with an isolated point $\top$
and a limit point $\bot$. This is the same as the domain
$\{\bot,\top\}$ under the Scott topology. For any topological space
$X$, a function $p \colon X \to \Sierp$ is continuous iff
$p^{-1}(\top)$ is open, and a set $U \subseteq X$ is open iff its
characteristic function $\chi_U$, defined by $\chi_U(x)=\top \iff x
\in U$, is continuous. Thus, using the Sierpinski space, the notion of
openness is reduced to that of continuity.
The following reduces the notion of compactness to that of
continuity (a particular case of this is proved
in~\cite{escardo:barbados}, with essentially the same proof as the one
give here).

\pagebreak[3]
\begin{lem} \label{crucial:generalized}
  If $X$ is a $k$-space, a set $K \subseteq X$ is
  compact if and only if the universal quantification functional
  $\forall_K \colon \Sierp^X \to \Sierp$ defined by
\begin{quote}
$\forall_K(p)=\top$ iff $p(x)=\top$ for all $x \in K$
\end{quote}
is continuous. 
\end{lem}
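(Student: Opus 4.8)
The plan is to translate the statement about the functional $\forall_K$ into a statement purely about the open-set lattice of $X$, using the cited fact that the exponential $\Sierp^X$ carries the Scott topology. First I would make the standard identification: a point of $\Sierp^X$ is a continuous map $p \colon X \to \Sierp$, and since such maps correspond bijectively to open sets via $p \mapsto U_p \eqdef p^{-1}(\top)$, with the pointwise order on maps corresponding to inclusion of the associated opens, the poset underlying $\Sierp^X$ is precisely the lattice $\mathcal{O}(X)$ of open subsets of $X$ ordered by $\subseteq$. Its topology is the Scott topology by the quoted result, and this is the one place where the hypothesis that $X$ is a $k$-space is used, since it is what guarantees that the exponential exists and has this topology. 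Under this identification $\forall_K$ becomes the map sending $U \in \mathcal{O}(X)$ to $\top$ exactly when $K \subseteq U$.

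Next I would unwind continuity into $\Sierp$. A map into $\Sierp$ is continuous iff the preimage of the isolated point $\top$ is open; for $\forall_K$ this preimage is $\mathcal{A} \eqdef \{ U \in \mathcal{O}(X) \mid K \subseteq U \}$. Since the topology on $\mathcal{O}(X)$ is the Scott topology, continuity of $\forall_K$ is therefore equivalent to $\mathcal{A}$ being Scott-open. Thus the entire lemma reduces to the single equivalence: $\mathcal{A}$ is Scott-open if and only if $K$ is compact.

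Finally I would prove this equivalence directly. The family $\mathcal{A}$ is automatically upward closed, so Scott-openness of $\mathcal{A}$ is equivalent to its inaccessibility by directed joins, i.e.\ to the condition that whenever a directed family $\{U_i\}_i$ of open sets satisfies $\bigcup_i U_i \supseteq K$, some single $U_i$ already satisfies $U_i \supseteq K$ (recall that directed joins in $\mathcal{O}(X)$ are directed unions). This is exactly the directed-cover formulation of compactness of $K$. For the backward direction I would note that from an arbitrary open cover of $K$ one forms the directed family of its finite unions, whose union still contains $K$, so inaccessibility yields a finite subcover; the forward direction is immediate, since a finite subcover of a directed family can be absorbed into a single member. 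I do not expect a serious obstacle here: the only delicate point is the appeal to the Scott-topology description of $\Sierp^X$, and once that is in hand the argument is just the familiar dictionary between Scott-openness of the set of opens containing $K$ and compactness of $K$.
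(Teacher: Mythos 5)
Your proposal is correct and follows essentially the same route as the paper's own proof: both identify $\Sierp^X$ with the open-set lattice of $X$ under the Scott topology (via the cited result of Escard\'o--Lawson--Simpson), reduce continuity of $\forall_K$ to Scott-openness of the open-neighbourhood filter of $K$, and then invoke the directed-cover formulation of compactness obtained by closing an arbitrary cover under finite unions. The only difference is cosmetic ordering: the paper establishes the compactness/Scott-openness equivalence first and then transports it along the bijection, whereas you transport first and prove the equivalence second.
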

\begin{proof}
  A set $K$ is compact if and only if every directed cover of $K$ by
  open sets has a member that covers $K$, because from any cover one
  obtains a directed cover with the same union by adding the finite
  unions of the members of the cover.  Hence by definition of the
  Scott topology, a set is compact if and only if its
  open-neighbourhood filter is open in the Scott topology of the
  lattice of open sets.  But $U \mapsto \chi_U$ is a bijection from the
  lattice of open sets to the points of $\Sierp^X$, and it was shown
  in~\cite{escardo:lawson:simpson} that the topology of the
  exponential $\Sierp^X$ is the one induced by this bijection. Hence
  the functional $\forall_K$ is continuous iff $\forall_K^{-1}(\top)$
  is open iff the set of characteristic functions $\chi_U$ with $K
  \subseteq U$ is open iff the open neighbourhood filter of $K$ is
  open iff $K$ is compact.
\end{proof}

\subsection{Hyland's characterization of the Kleene--Kreisel
  functionals} \label{hyland} For certain constructions and proofs of
algorithms, we consider a topology on the set of Kleene--Kreisel
functionals.
\begin{defi}
  Endow $\total{\sigma}$ with the relative Scott topology and
  $\kk{\sigma}$ with the quotient topology of the
  surjection~$\rh{\sigma} \colon \total{\sigma} \to \kk{\sigma}$.  We
  refer to this topology on $\kk{\sigma}$ as the \emph{Kleene--Kreisel
    topology}, and to the resulting spaces $\kk{\sigma}$ as the
  \emph{Kleene--Kreisel spaces}. The points of the Kleene--Kreisel
  spaces are often referred to as \emph{the continuous functionals} in
  the higher-type computability (or higher-type recursion) literature.
  \qed
\end{defi}
A proof of the following inductive topological characterization of the
Kleene--Kreisel spaces, attributed to Hyland, can be found in
Normann~\cite{normann:recursion}.
% having~\cite{normann:computer,berger:total} in mind.
\begin{lem} 
  With products and exponentials in the category of
  Hausdorff $k$-spaces,
\begin{enumerate}
\item $\kk{\gamma}$ has the discrete
topology for~$\gamma$ ground, 
\item  $\kk{\tproduct} = \kk{\sigma} \times \kk{\tau}$ and 
\item $\kk{\tfunction}
  = {\kk{\tau}}^{\kk{\sigma}}$.
\end{enumerate}

\end{lem}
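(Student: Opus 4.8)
The plan is to prove the three clauses by induction on the type structure, so the statement to establish is that $\kk{\sigma}$, with the Kleene--Kreisel topology from the definition, coincides as a topological space with the space built up from the ground types using products and exponentials in the cartesian closed category of Hausdorff $k$-spaces. The base case (1) and the product case (2) should be the easy part, and the function-space case (3) is where the real content lies.

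For clause (1), I would argue that $\total{\gamma} = \kk{\gamma}$ for ground $\gamma$ with $\rh{\gamma} = \id$, and that the relative Scott topology on $\N \subseteq \pN$ (and on $2 \subseteq \pBool$) is discrete, since each natural number and each boolean is an isolated (finite, maximal) point of the flat domain. For clause (2), I would use that $\total{\tproduct} = \total{\sigma} \times \total{\tau}$ sits inside $\domain{\sigma} \times \domain{\tau}$, that the Scott topology on a product of domains is the product of the Scott topologies, and that quotients interact well enough with products here; combined with the inductive hypothesis this identifies $\kk{\tproduct}$ with $\kk{\sigma} \times \kk{\tau}$, where the product is the $k$-space product. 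One must be a little careful that the $k$-space product refines the ordinary topological product, but for the spaces in question this can be handled using the inductive hypothesis together with the fact that products of $k$-spaces are computed as the coreflection of the topological product.

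For the function-space clause (3), the goal is to show that the quotient topology on $\kk{\tfunction}$ coming from $\rh{\tfunction} \colon \total{\tfunction} \to \kk{\tfunction}$ agrees with the exponential topology ${\kk{\tau}}^{\kk{\sigma}}$ in the category of Hausdorff $k$-spaces. Here I would invoke the inductive hypotheses $\kk{\sigma}$ and $\kk{\tau}$ are already known as $k$-spaces, so the exponential makes sense. The key technical input is the cartesian closed structure recalled in Section~\ref{compactly}: a map into an exponential is continuous iff its transpose is, and evaluation is continuous. Using that $\rh{\sigma}$ and $\rh{\tau}$ are (topological) quotients and that the relevant squares in the diagram $(\dagger)$ commute, one transports continuity back and forth between a representative $f \in \total{\tfunction}$ and the induced $\phi \in \kk{\tfunction}$, and checks that a subset of $\kk{\tfunction}$ is open in the quotient topology exactly when it is open in the exponential topology.

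The main obstacle I expect is precisely this function-space case, because quotient maps are not in general preserved by exponentiation, so one cannot simply assert that the quotient of a function space is the function space of quotients. Making the argument go through is exactly where the theory of compactly generated spaces is needed: the compactly generated exponential behaves well with respect to the relevant quotients (via the probes and the $k$-ification), which is why the whole development is carried out in the cartesian closed category of Hausdorff $k$-spaces rather than in the category of all topological spaces. Since this is the nontrivial direction, I would either reconstruct it carefully from the cartesian closed structure and the quotient descriptions given above, or — as the excerpt indicates — simply cite the proof attributed to Hyland and worked out in Normann~\cite{normann:recursion}, and confine my own argument to verifying the ground and product cases and setting up the inductive scaffolding that reduces clause (3) to the known compatibility of $k$-space exponentials with topological quotients.
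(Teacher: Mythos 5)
The paper offers no proof of this lemma at all: it attributes the characterization to Hyland and delegates the proof entirely to Normann~\cite{normann:recursion}. So your fallback option --- verifying the ground and product cases and citing Normann for the function-space case --- coincides with (indeed, does slightly more than) what the paper does. Your sketches of clauses (1) and (2) are sound: total elements of flat domains are isolated in the relative Scott topology, and for (2) the two facts you need are that the Scott topology on a finite product of Scott domains is the product of the Scott topologies, and that in the category of $k$-spaces a finite product of quotient maps is again a quotient map, which is what legitimizes identifying the quotient of the product with the product of the quotients.

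The warning concerns clause (3). Your first option --- ``reconstructing it carefully from the cartesian closed structure and the quotient descriptions'' --- would not go through, and your framing of the citation as invoking ``the known compatibility of $k$-space exponentials with topological quotients'' mischaracterizes what is being cited: there is no such general compatibility, since exponentiation does not preserve quotient maps even between $k$-spaces. The statement $\kk{\tfunction} = {\kk{\tau}}^{\kk{\sigma}}$ has two substantive components: every $k$-continuous map $\kk{\sigma} \to \kk{\tau}$ must be shown to have a representative $f \in \total{\tfunction}$ (agreement of underlying sets), and the quotient topology induced by $\rh{\tfunction}$ must be shown to coincide with the exponential topology. Neither follows formally from cartesian closedness: $\total{\tfunction}$ carries the relative Scott topology of the domain $\domain{\tfunction}$, which is not itself presented as an exponential of the spaces involved, so there is no square of adjunctions to chase. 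The actual proof --- Hyland's, as worked out by Normann --- rests on the Kleene--Kreisel density theorem and domain-specific arguments; your sentence about transporting continuity back and forth and checking that the open sets agree restates the goal rather than establishing it. So commit to the citation for clause (3), as the paper does, unless you are prepared to redevelop that machinery.
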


The following two lemmas, which are part of the folklore of the
subject, are applied in order to show that exhaustible sets of total
elements are compact in the Kleene--Kreisel topology
(Lemma~\ref{lemma:criterion}(\ref{lemma:criterion:1})).
A set is called \emph{clopen} if it is both closed and open. 
\begin{lem} \label{clopen:extension} For every clopen $U \subseteq
  \kk{\sigma}$ there is a total predicate $p \in (\domain{\sigma}
  \to \pBool)$ such that $\quo{\sigma}^{-1}(U) \subseteq
  p^{-1}(\True)$ and $\quo{\sigma}^{-1}(\kk{\sigma} \setminus U) \subseteq
  p^{-1}(\False)$.
\end{lem}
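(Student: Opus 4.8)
The plan is to reduce the statement to Hyland's inductive characterization of the Kleene--Kreisel spaces (the preceding lemma). First I would observe that, because $U$ is clopen, its characteristic function $q\colon \kk{\sigma}\to 2$, defined by $q(\phi)=\True \siff \phi\in U$, is continuous: the target $2=\kk{o}$ carries the discrete topology, so continuity amounts to $U$ and its complement both being open, which is exactly clopenness.

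Next I would invoke Hyland's characterization, which gives $\kk{\sigma\to o}=(\kk{o})^{\kk{\sigma}}=2^{\kk{\sigma}}$, the exponential being formed in Hausdorff $k$-spaces. Since the underlying set of such an exponential is precisely the set of \emph{all} continuous maps $\kk{\sigma}\to 2$, the function $q$ is a point of $\kk{\sigma\to o}$. By the definition of $\kk{\tfunction}$ and $\rh{\tfunction}$, membership of $q$ in $\kk{\sigma\to o}$ means that $q$ has a total representative: there is $p\in\total{\sigma\to o}\subseteq(\domain{\sigma}\to\pBool)$ making the square $(2)$ commute with $\phi=q$ and $\tau=o$. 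As $\rh{o}=\id$, commutativity says $p(x)=q(\quo{\sigma}(x))$ for every $x\in\total{\sigma}$. Evaluating on the two preimages then gives $p(x)=\True$ whenever $\quo{\sigma}(x)\in U$ and $p(x)=\False$ whenever $\quo{\sigma}(x)\in\kk{\sigma}\setminus U$, that is, $\quo{\sigma}^{-1}(U)\subseteq p^{-1}(\True)$ and $\quo{\sigma}^{-1}(\kk{\sigma}\setminus U)\subseteq p^{-1}(\False)$; and $p$ is total because $p\in\total{\sigma\to o}$.

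The only real content, and the step I would check most carefully, is the set-level reading of Hyland's characterization: that every continuous map $\kk{\sigma}\to 2$ actually arises from a total functional, so that $q$ genuinely lies in $\kk{\sigma\to o}$. A more hands-on alternative avoids this by working directly in the domain: $\quo{\sigma}^{-1}(U)$ and $\quo{\sigma}^{-1}(\kk{\sigma}\setminus U)$ are disjoint sets open in the relative Scott topology of $\total{\sigma}$, hence traces $\total{\sigma}\cap O_{1}$ and $\total{\sigma}\cap O_{2}$ of Scott-open sets $O_{1},O_{2}\subseteq\domain{\sigma}$; one then wants disjoint Scott-open $V_{1}\supseteq\quo{\sigma}^{-1}(U)$ and $V_{2}\supseteq\quo{\sigma}^{-1}(\kk{\sigma}\setminus U)$, from which $p$ is read off by setting $p^{-1}(\True)=V_{1}$, $p^{-1}(\False)=V_{2}$ and $p=\bot$ elsewhere, a recipe that is Scott-continuous precisely because $V_{1},V_{2}$ are disjoint opens. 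The obstacle for this route is exactly the separation: the natural opens generated by basic neighbourhoods $\uparrow c$ with $\uparrow c\cap\total{\sigma}$ confined to one side can still overlap on \emph{non-total} elements $z$ lying above some compact $c$ that has no total element above it, so disjointness on the partial part is not automatic. Invoking Hyland's characterization is the clean way to sidestep this separation problem, which is why I would make it the primary argument.
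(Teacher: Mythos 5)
Your proof is correct, but it takes a genuinely different route from the paper's. The paper works directly with domains: since $U$ is clopen, the composite $i \comp \chi_U \comp \quo{\sigma} \colon \total{\sigma} \to \pBool$ is continuous, and since $\total{\sigma}$ is dense in $\domain{\sigma}$ (Kleene--Kreisel density) and $\pBool$, being a Scott domain, is densely injective, this map extends to a continuous $p \colon \domain{\sigma} \to \pBool$; totality of $p$ and the two inclusions are then immediate from the extension property. You instead extract the lemma from the set-level content of Hyland's characterization: $\chi_U$ is a continuous map $\kk{\sigma} \to \Bool$, hence a point of the exponential $\Bool^{\kk{\sigma}} = \kk{\sigma \to o}$, and membership in $\kk{\sigma \to o}$ means, by the paper's definition of that set, having a total representative $p$, which is the desired predicate. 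You also correctly isolate the one step carrying real mathematical content, namely that every continuous map $\kk{\sigma} \to \Bool$ arises from a total functional, and that is exactly what the cited characterization supplies. Your route buys brevity and a clean conceptual reading (the clopen subsets of $\kk{\sigma}$ are precisely the shadows of total predicates), but it is special to the discrete target $\Bool$ and defers the substance to Normann's proof of Hyland's theorem, which itself rests on the same density-plus-injectivity extension mechanism that the paper deploys explicitly, so the paper's proof is the more self-contained of the two. Finally, your diagnosis of why the naive approach of separating the two open traces in $\domain{\sigma}$ stalls is apt, and it explains the paper's choice: extending the single $\pBool$-valued function, rather than the two opens, makes disjointness of $p^{-1}(\True)$ and $p^{-1}(\False)$ automatic.
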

\begin{proof}
  Because $U$ is clopen, its characteristic function $\chi_U \colon
  \C_\sigma \to \Bool$ is continuous, and hence so is the composite $i
  \comp \chi_U \comp \quo\sigma \colon \total{\sigma} \to \pBool$,
  where $i \colon \Bool \to \pBool$ in the inclusion. Because $\T$ is
  dense in $\D_\sigma$ (see e.g.\ \cite{berger:total}) and because
  Scott domains, and hence $\pBool$, are densely injective (see e.g.\
  \cite{gierz:domains}), by definition of injectivity this extends to
  a continuous function $p \colon \domain{\sigma} \to \pBool$. Then
  $p$ is total by construction, and the extension property amounts to
  the above set inclusions.
\end{proof}
A space is \emph{zero-dimensional} iff it has a base of clopen sets.
The \emph{zero-dimensional reflection} $\Z \C$ of a space~$\C$ is
obtained by taking the same set of points and the clopen sets as a
base.

\begin{lem} \label{zero:compact} $\Z \kk{\sigma}$ and $\kk{\sigma}$
  have the same compact subsets. 
%
%  Hence any compact subset of $\kk{\sigma}$ is zero-dimensional.
\end{lem}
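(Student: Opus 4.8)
The plan is to prove that $\Z\kk{\sigma}$ and $\kk{\sigma}$ have the same compact subsets by showing both inclusions between the two compactness notions. Write $\C = \kk{\sigma}$ for brevity, and recall that $\Z\C$ has the same underlying point set as $\C$ but a coarser topology (generated by the clopen sets of $\C$). Since the identity map $\C \to \Z\C$ is continuous — every clopen set of $\C$ is in particular open in $\C$ — it sends compact sets to compact sets, so every compact subset of $\C$ is compact in $\Z\C$. This direction is immediate and requires no special structure; the work is entirely in the converse.

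For the converse, suppose $K \subseteq \C$ is compact in $\Z\C$; I must show $K$ is compact in $\C$. First I would reduce to the pure-type case via Lemma~\ref{pure}: every $\kk{\sigma}$ is a computable, hence continuous, retract of some $\kk{\tau \to \iota}$, and both compactness and the passage to the zero-dimensional reflection are preserved under continuous retractions, so it suffices to treat $\C = \kk{\tau \to \iota}$. The key point is that such a space is \emph{zero-dimensional}: the Kleene--Kreisel spaces of the form $\kk{\tau \to \iota}$ have a base of clopen sets, because the ground type $\iota$ is discrete and the finitely-determined basic open sets of a function space into a discrete space are clopen. Once $\C$ is zero-dimensional, its topology and that of $\Z\C$ coincide on a base, and hence everywhere, so compactness in $\Z\C$ and compactness in $\C$ are literally the same notion.

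The main obstacle I anticipate is justifying zero-dimensionality of $\kk{\tau \to \iota}$ cleanly, since this is where the real content lies and where I would invoke the earlier machinery. Concretely, I would argue that an open set $V \subseteq \kk{\tau \to \iota}$ and a point $\phi \in V$ can be separated by a clopen set using the fact that membership in a subbasic open set $N(K, \{n\})$ — functions carrying a compact $K$ into the singleton $\{n\}$ of the discrete space $\N$ — is decided by finitely much information, so such sets are clopen; finite intersections and the complement structure then yield a clopen base. An alternative, perhaps more robust, route is to use Lemma~\ref{clopen:extension}: given $K$ compact in $\Z\C$ and a cover of $K$ by $\C$-open sets, I would first refine to a cover by clopen sets (possible precisely because $\C$ is zero-dimensional), extract a finite subcover using $\Z\C$-compactness (clopen sets being open in $\Z\C$ by definition), and conclude. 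The delicate step is verifying that the refinement to clopen sets is always possible, which again rests on zero-dimensionality of pure function types into the discrete ground space.

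\smallskip
\noindent
If zero-dimensionality of all $\kk{\sigma}$ were available directly (not merely for pure types), the reduction via Lemma~\ref{pure} could be skipped and the whole argument would collapse to the observation that $\C$ and $\Z\C$ share a base, hence the same topology, hence trivially the same compact sets; I expect the stated lemma is nonetheless phrased for general $\sigma$ precisely because zero-dimensionality is most transparent at pure types and is transported along the retractions of Lemma~\ref{pure}.
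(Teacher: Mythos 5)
Your first direction is fine, but the converse rests on a claim that is false: the Kleene--Kreisel spaces $\kk{\tau \to \iota}$ are \emph{not} zero-dimensional. The paper remarks immediately after this lemma that Schr\"oder has shown the spaces $\kk{\sigma}$ are not zero-dimensional, and in fact not even regular; it also notes explicitly that if they were zero-dimensional the lemma would be superfluous --- which is exactly the shape of your argument ($\C$ and $\Z\C$ would share a base, hence coincide as topologies). The source of the error is a conflation of the compact-open topology with the exponential topology: in the category of Hausdorff $k$-spaces, $\kk{\tau\to\iota} = \N^{\kk{\tau}}$ carries the \emph{$k$-coreflection} of the compact-open topology, which is in general strictly finer. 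The subbasic sets $N(K,\{n\})$ are indeed clopen and they generate the compact-open topology, but the additional open sets contributed by the coreflection need not be unions of clopen sets, so zero-dimensionality of the compact-open topology does not transfer to the exponential. Both of your suggested routes (direct clopen base, and refinement of covers to clopen covers) founder on this same point, and your closing paragraph inverts the actual situation: the lemma is needed precisely because $\Z\kk{\sigma}$ and $\kk{\sigma}$ are genuinely different topologies.

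The paper's proof threads this needle as follows, and shows what the correct weakening of your claim is. Write $\K$ for the $k$-coreflector and $O$ for the compact-open topology underlying $C = \kk{\tau\to\iota}$, so that $C = \K O$ with $O$ zero-dimensional and Hausdorff. Every clopen set of $O$ is clopen in the finer topology $C$, so as topologies $O \subseteq \Z C \subseteq C$; applying the (monotone) coreflector gives $C = \K O \subseteq \K \Z C \subseteq \K C = C$, hence $\K \Z C = C$. This property is inherited by retracts and so transports to general $\sigma$ along Lemma~\ref{pure}, exactly as in your reduction. Finally, $\Z C$ is Hausdorff (it refines $O$), and a Hausdorff space has the same compact subsets as its $k$-coreflection, so $\Z C$ and $\K\Z C = C$ have the same compact subsets. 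In short: the fix is not to claim that $C$ itself is zero-dimensional, but only that it is the $k$-coreflection of \emph{some} zero-dimensional Hausdorff topology, and then to replace your ``same topology'' conclusion by the weaker but sufficient fact about compact sets and $k$-coreflections.
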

\begin{proof}
  We first show that $\K \Z C=C$ where $C=\kk{\sigma}$ and $\K$ is the
  coreflector into the category of $k$-spaces.  The property $\K \Z C
  = C$ is easily seen to be inherited by retracts, and hence, by
  Lemma~\ref{pure}, it is enough to consider $\sigma=\tau \to \iota$,
  and hence $C=\N^Y$ for some $k$-space~$Y$.  Exponentials in
  $k$-spaces are given by the $k$-coreflection of the compact-open
  topology on the set of continuous maps. When the target is $\N$, the
  compact-open topology is clearly zero-dimensional and Hausdorff.
  Now, it is easy to see that $\K \Z \C = \C$ iff there is some
  zero-dimensional topology whose $k$-reflection is $\C$, and hence we
  are done. The result then follows from the well-known fact that a
  Hausdorff space has the same compact sets as its $k$-coreflection.
\end{proof}
If the spaces $\kk{\sigma}$ were zero-dimensional, the above lemma
would be superfluous. But Matthias Schr\"oder~\cite{schroeder:not:regular} has recently shown,
after this paper was produced and refereed, that the spaces
$\kk{\sigma}$ are not zero-dimensional, and in fact not even regular,
answering a question of~\cite{bauer:escardo:simpson,normann:zero}.

\section{Exhaustible and searchable sets} \label{definitions}

We now formulate the central notions investigated in this work.
\begin{defi} \label{definedon} if $K$ is a subset of the domain
  $\D$, we say that a predicate $p \in (\D \to \pBool)$ is
  \emph{defined on~$K$} if $p(x) \ne \bot$
  for every $x \in K$. \qed
\end{defi}

\begin{defi} \label{exhaustible:def} We say that a subset $K$ of
  the domain $D$ is \emph{exhaustible} if there is a computable
  functional $\forall_K \colon (\D \to \pBool) \to \pBool$ such that
  for any $p \in (\D \to \pBool)$ defined on $K$,
  \licsmath{
  \forall_K(p)=
  \begin{cases}
    \True & \text{if $p(x)=\True$ for all $x \in K$,} \\
    \False & \text{if $p(x)=\False$ for some $x \in K$.} \\
  \end{cases}
  }
  Such a \emph{universal quantification functional} is not uniquely
  determined, because its behaviour is not specified for predicates
  $p$ that are not defined on $K$. For the sake of clarity, we'll
  often write ``$\forall_K(\lambda x.  \dots)$'' as ``$\forall x \in
  K. \dots$''.  \qed
\end{defi}
Clearly, it is equivalent to instead require the existence of a
computable functional $\exists_K \colon (\D \to \pBool) \to \pBool$
such that for any $p \in (\D \to \pBool)$ defined on $K$,
\licsmath{
\exists_K(p)=
\begin{cases}
  \True & \text{if $p(x)=\True$ for some $x \in K$,} \\
  \False & \text{if $p(x)=\False$ for all $x \in K$,} \\
\end{cases}
} 
because such functionals are inter-definable by the De Morgan Laws
% as $\exists_K(p)=\neg \forall_K(\lambda x.\neg p(x))$ and
% $\forall_K(p)=\neg \exists_K(\lambda x.\neg p(x))$, 
and hence we'll freely switch between them.

We now formulate searchability in a way slightly different from that
of the introduction, which is more convenient for our purposes. The
only essential difference is that the present, official definition
excludes the empty set (cf.\ Remark~\ref{searchable:equivalent}).

\pagebreak[3]
\begin{defi} \label{searchable:def}
  We say that a set $K \subseteq \D$ is \emph{searchable} if there is
  a computable functional $\e_K \colon (\D \to \pBool) \to
  \D$ such that, for every predicate $p \in (\D \to \pBool)$ defined
  on~$K$,
\pagebreak[3]
  \begin{enumerate}
  \item $\e_K(p) \in K$, and
  \item \label{searchable:def:2} $p(\e_K(p))=\True$ if $p(x)=\True$ for some $x \in K$.
  \end{enumerate}
  Again, notice that the \emph{selection functional} $\e_K$ is not
  uniquely determined by $K$. \qed
\end{defi}
Thus, $\e_K(p)$ is an example of an element of $K$ for
which~$p$ holds, if such an element exists, or a counter-example
in~$K$ if no such example exists.
% Condition~\ref{searchable:def:2}
% is equivalent to
% \begin{enumerate}
% \item[$\ref{searchable:def:2}'$.] $p(\e_K(p))=\False$ implies $p(x)=\False$ for all $x \in K$.
% \end{enumerate}
\begin{lem} \label{implicit}
  Searchable sets are exhaustible. 
\end{lem}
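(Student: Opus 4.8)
The plan is to build a quantification functional by a single application of the given selection functional. Since the excerpt already records that $\forall_K$ and $\exists_K$ are interdefinable by the De Morgan laws, it suffices to produce a computable $\exists_K \colon (\D \to \pBool) \to \pBool$ with the required behaviour, and exhaustibility then follows at once. So, assuming $K$ is searchable via a computable selection functional $\e_K$, I would define
\[ \exists_K(p) \eqdef p(\e_K(p)). \]
This is computable, being obtained from the computable $\e_K$ by composition with application, and our computable maps are closed under lambda abstraction and application.

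To verify correctness, fix a predicate $p$ defined on $K$ and run through the two branches of Definition~\ref{exhaustible:def}. If $p(x)=\True$ for some $x \in K$, then clause~(\ref{searchable:def:2}) of searchability gives $p(\e_K(p))=\True$, so $\exists_K(p)=\True$. If instead $p(x)=\False$ for all $x \in K$, then clause~(1) gives $\e_K(p) \in K$, whence $p(\e_K(p))=\False$ and $\exists_K(p)=\False$. This is exactly the specification of $\exists_K$, and the interdefinability noted above yields the corresponding $\forall_K$, so $K$ is exhaustible.

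I expect essentially no genuine obstacle here beyond bookkeeping: the construction is a direct rendering of the intuition that a selector already decides the emptiness question, since one need only evaluate $p$ at the returned element. The two points that warrant a moment's care are that the case analysis genuinely covers both alternatives --- which it does, invoking clause~(\ref{searchable:def:2}) in the positive case and clause~(1) in the negative case --- and, should one prefer to define the universal quantifier directly by $\forall_K(p) \eqdef p(\e_K(\lambda x.\neg p(x)))$, the observation that boolean negation is strict, so that $\lambda x.\neg p(x)$ is again defined on $K$ whenever $p$ is, and the searchability hypotheses apply to it unchanged.
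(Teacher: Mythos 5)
Your proposal is correct and is essentially identical to the paper's proof, which consists of the single line $\exists_K(p)=p(\e_K(p))$; your verification of the two cases and the remark about the negation-based variant are just explicit elaborations of what the paper leaves implicit.
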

\begin{proof}
  Define $\exists_K(p)=p(\e_K(p))$.
\end{proof}
The empty set is exhaustible with $\forall_{\emptyset}(p)=\True$, but
it is not searchable because the condition $\e_{\emptyset}(p)
\in \emptyset$ cannot hold. But we'll see in
Section~\ref{characterization} that, under fairly general and natural
conditions, the two notions turn out to agree in the non-empty case.
Moreover, it is clear that non-empty finite sets of computable
elements are both exhaustible and searchable.

\begin{rem} \label{searchable:equivalent}
With $1=\{\star\}$, an equivalent definition of searchability is that
\begin{enumerate}
\item $K$ has a computable element $e_K$, and
\item there is $\e'_K \colon (\D \to
  \pBool) \to 1+\D$ computable such that $\e'_K(p)
  = \star$ if there is no example, and otherwise $\e'_K(p)
  \in K$ and $p(\e'_K(p))=\True$.
\end{enumerate}
In fact, given $\e_K$ one can define $e_K =
\e_K(\lambda x.\True)$ and \licsmath{\e'_K(p) = \If
  p(\e_K(p)) \Then \e_K(p) \Else \star.}
Conversely, given $\e'_K$ and $e_K$ as specified, one can
define \licsmath{\e_K(p)= \linebreak[3] \If
  \e'_K(p)=\star \Then e_K \Else \e'_K(p).} \qed
\end{rem}

Regarding examples, we'll deduce later the known fact that the
Cantor space is searchable.  For the moment, we show that the natural
numbers with a point at infinity form a searchable set.
\begin{defi}
  The \emph{one-point compactification of the natural numbers} is the
  subspace $\N_\infty$ of the Cantor space $2^\omega \subseteq
  \pBool^\omega$ consisting of the sequences $0^n 1^\omega$
  (representing natural numbers~$n$) and $0^\omega$ (representing the
  added point at infinity).  \qed
\end{defi}
The relative Scott topology on the Cantor space agrees with the
product topology of the discrete space~$\Bool$, but such topological
considerations are not needed until Section~\ref{criteria}. In
constructive mathematics, $\N_\infty$ is equivalently defined as the
set of sequences $\alpha \in 2^\omega$ with $\alpha_i \le
\alpha_{i+1}$, to avoid excluded middle. In functional programming,
$\N_{\infty}$ also arises as the set of maximal elements of the domain
of lazy natural numbers.
\begin{example}
  $\N_\infty$ is searchable, with selection functional
  $\e_{\N_\infty}$ defined by primitive recursion as
\[
\e_{\N_\infty}(p)(i) = \exists n \le i.\, p(0^n 1^\omega).
\]
Notice that
$\e_{\N_\infty}(p)$ is the infimum of the set of solutions
$\alpha \in \N_\infty$ of $p(\alpha)=\True$, including the case in
which the set is empty, for which $\e_{\N_\infty}(p)=\infty$.
\qed
\end{example}

This construction is implicit in Exercise 1 of Barendregt~\cite[Page
581]{barendregt}, attributed to Kreisel.  The point of that exercise
is that this algorithm can be interpreted as a functional in the full
type hierarchy, defined in G\"odel's system~$T$, that also works for
discontinuous~$p$. The exercise uses this to prove that the
substructure of definable elements is not extensional, or
equivalently, that the set-theoretical model of system~$T$ fails to be
fully abstract. This exercise was brought to my attention by Gordon
Plotkin and Alex Simpson, after I posed this full abstraction question
to them. Notice that the Kleene--Kreisel model of system~$T$ is fully
abstract, using the fact that the elements of a dense set are
definable.

\section{Building new searchable sets from old} \label{building}

In this section we develop algorithms that don't require knowledge of
topology but are motivated by topological considerations.  Starting
from the finite sets, the algorithms allow us to systematically build
plenty of infinite searchable sets.
The intuition behind the topological notion of compactness is that
compact sets behave, in many relevant respects, as if they were
finite.  Infinite sets that admit exhaustive search in finite time
share the same intuition. Hence it is natural to conjecture that they
also share similar structural properties. For example, compact sets
are closed under the formation of products (Tychonoff theorem).
Motivated by this, in this section we show that searchable sets are
closed under countable products, and we also export other closure
properties from topology to computation.

\begin{defi} \label{decidableonk} For a given set $K \subseteq
  D$, we say that a set $F \subseteq K$ is \emph{decidable on $K$} if
  there is a computable map $\psi_F \colon D \to \pBool$ defined on
  $K$ such that, for all $x \in K$, $\psi_F(x)=\True$ iff $x \in F$. \qed
\end{defi}
\pagebreak[3]
\begin{prop} \label{prop:intersec} Let $K \subseteq D$ and let $F \subseteq
  K$ be decidable on~$K$.
\begin{enumerate}
\item If $K$ is exhaustible then so is $K \cap F$. 
\item If $K$ is searchable then so is $K \cap F$, provided it is
  non-empty.
\end{enumerate}
\end{prop}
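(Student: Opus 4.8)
The plan is to construct the quantification/selection functionals for $K \cap F$ directly from those for $K$, using the decidability map $\psi_F$ to ``filter out'' the elements of $K$ that are not in $F$. The key idea is that to quantify over $K \cap F$, I should quantify over all of $K$ but treat any element outside $F$ as vacuously satisfying (resp.\ not witnessing) the predicate.

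For part~(1), suppose $K$ is exhaustible with quantification functional $\forall_K$, and let $\psi_F \colon D \to \pBool$ witness decidability of $F$ on $K$. Given a predicate $p$ defined on $K \cap F$, I would define
\licsmath{\forall_{K \cap F}(p) = \forall_K\bigl(\lambda x.\, \If \psi_F(x) \Then p(x) \Else \True\bigr).}
First I would check that the inner predicate $q \eqdef \lambda x.\, \If \psi_F(x) \Then p(x) \Else \True$ is defined on $K$: for $x \in K$ we have $\psi_F(x) \ne \bot$ since $\psi_F$ is defined on $K$; if $\psi_F(x) = \True$ then $x \in F$, so $x \in K \cap F$ and $p(x) \ne \bot$ by hypothesis; if $\psi_F(x) = \False$ then $q(x) = \True \ne \bot$. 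Hence $q$ is defined on $K$ and $\forall_K(q)$ is governed by the specification in Definition~\ref{exhaustible:def}. Then I would verify the two cases: if $p(x) = \True$ for all $x \in K \cap F$, then $q(x) = \True$ for all $x \in K$ (splitting on whether $x \in F$), so $\forall_{K \cap F}(p) = \True$; and if $p(x) = \False$ for some $x \in K \cap F$, then that same $x$ lies in $K$ with $\psi_F(x) = \True$, giving $q(x) = \False$, so $\forall_{K \cap F}(p) = \False$. Computability of $\forall_{K \cap F}$ follows since it is built by lambda-abstraction, application, and conditional from the computable data $\forall_K$, $\psi_F$, and $p$.

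For part~(2), suppose additionally that $K$ is searchable with selection functional $\e_K$ and that $K \cap F$ is non-empty. The natural attempt is to search $K$ for an element of $F$ satisfying $p$, using the same filtering trick to steer $\e_K$ into $F$. The subtlety, which I expect to be the main obstacle, is that the specification of $\e_K$ only guarantees a witness when one exists, and when searching for an element of $K \cap F$ satisfying $p$ I must ensure the returned element actually lands in $K \cap F$ even in the failure case (clause~(1) of Definition~\ref{searchable:def}). The clean approach is to combine the two conditions into a single predicate: set $\e_{K \cap F}(p) = \e_K\bigl(\lambda x.\, \psi_F(x) \wedge p(x)\bigr)$, where $\wedge$ denotes a sequential boolean conjunction that is defined whenever its relevant arguments are. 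Here I must take care that the predicate passed to $\e_K$ is defined on all of $K$, which requires evaluating $\psi_F(x)$ first and only consulting $p(x)$ when $\psi_F(x) = \True$ (so that $x \in K \cap F$ and $p(x) \ne \bot$); thus the conjunction should be read as $\If \psi_F(x) \Then p(x) \Else \False$.

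Finally I would discharge the correctness of this $\e_{K \cap F}$. The returned element $\e_{K \cap F}(p)$ lies in $K$ by clause~(1) for $\e_K$; but I still need it to lie in $K \cap F$, and here is where the argument is delicate, since $\e_K$ may return a counterexample outside $F$ when no witness exists. To repair this I would use the non-emptiness hypothesis: either replace a bad output by a fixed default element $e_{K \cap F} \in K \cap F$ (obtainable, e.g., via $e_{K \cap F} = \e_K(\lambda x. \psi_F(x))$, which by non-emptiness returns an element of $K$ with $\psi_F = \True$, hence in $K \cap F$), or equivalently phrase the whole construction through the reformulation of searchability in Remark~\ref{searchable:equivalent}, computing an element of $1 + \D$ and falling back on the default $e_{K \cap F}$ when the search reports no example. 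Concretely I would define $\e_{K \cap F}(p) = \If \psi_F(z) \wedge p(z) \Then z \Else e_{K \cap F}$ where $z = \e_K(\lambda x.\, \psi_F(x) \wedge p(x))$; then whenever some $x \in K \cap F$ satisfies $p$, the filtered predicate is witnessed in $K$, so by clause~(\ref{searchable:def:2}) for $\e_K$ the element $z$ satisfies both $\psi_F(z) = \True$ and $p(z) = \True$, placing $z \in K \cap F$ with $p(z) = \True$, and otherwise the fallback $e_{K \cap F}$ guarantees clause~(1). The routine verification that all intermediate predicates are defined on $K$ and that every clause of Definition~\ref{searchable:def} holds then completes the proof.
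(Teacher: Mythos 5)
Your proposal is correct and follows essentially the same route as the paper: part~(1) is the De Morgan dual of the paper's $\exists_{K \cap F}(p) = \exists x \in K.\, x \in F \wedge p(x)$, and part~(2) coincides with the paper's construction $\e_{K \cap F}(p) = \If \exists x \in K.\, x \in F \wedge p(x) \Then \e_K(\lambda x.\, x \in F \wedge p(x)) \Else \e_K(\lambda x.\, x \in F)$, with the existential guard unfolded via Lemma~\ref{implicit} into your test $\psi_F(z) \wedge p(z)$ for $z = \e_K(\lambda x.\, \psi_F(x) \wedge p(x))$, and the same fallback element. Your write-up merely makes explicit the definedness checks and the sequential reading of the conjunction, which the paper leaves implicit.
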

\begin{proof}
  Define \\
$\exists_{K \cap F}(p) = \exists x \in K. x \in F
  \wedge
  p(x)$, \\
$\e_{K \cap F}(p) = \If \exists x \in K.\,
  x \in F \wedge p(x) \Then \e_K(\lambda x. x \in F \wedge
  p(x)) \Else \e_K(\lambda x. x \in F)$.
\end{proof}
The topological motivation for the above proposition is that the
intersection of a closed set with a compact set is compact. Decidable
sets correspond to sets that are open and closed, and hence, bearing
in mind that exhaustible sets (ought to) correspond to compact sets,
the above proposition ought to be true, which it is.  It is an easy
exercise to show that exhaustible and searchable sets are closed under
binary unions.  But binary intersections are problematic.  In fact,
in topology, in the absence of assumptions such as the Hausdorff
separation axiom, compact sets fail to be closed under binary
intersections. Hence any algorithm for binary intersections would have
to exploit specialized topological and/or order-theoretic properties
of domains.  The topological motivation for the following proposition
is that, in topology, continuous images of compact sets are compact.
In fact, it arises by replacing continuity by computability and
compactness by exhaustibility.

\begin{prop} \label{image} Exhaustible and searchable sets are closed
  under the formation of computable images.
\end{prop}
\begin{proof}
  Let $f \in (\D \to \D')$ be computable and let $K$ be a subset of
  $D$.  For any quantification functional $\forall_K \colon (D \to
  \pBool) \to \pBool$, the functional $\forall_{f(K)} \colon (D' \to
  \pBool) \to \pBool$ defined by
\licsmath{\forall_{f(K)}(q)=\forall x \in K.q(f(x))}
is clearly a quantification functional for $f(K)$.  

For any selection functional $\e_K \colon (D \to \pBool) \to
D$, the following definition gives a selection functional
$\e_{f(K)} \colon (D \to \pBool) \to D$:
\licsmath{ \e_{f(K)}(q) =
  f(\e_K(\lambda x.q(f(x))).  }
That is, first find $x$ such that $q(f(x))$ holds, using
$\e_K$, and then apply $f$ to this~$x$.
\end{proof}

The following corresponds to the fact that compact sets in topology
are closed under finite products:
\begin{prop} \label{fin:prod}
  Exhaustible and searchable sets are closed under the formation of
  finite products.
\end{prop}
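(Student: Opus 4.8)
The plan is to reduce the claim to the binary case $K \times L$ (the empty product is the one-point set $\{\star\}$, which is searchable, and finite products are built by iterating the binary case), and then construct the required functionals directly from those for the factors. First I would observe that a predicate $p \in (\D \times \D' \to \pBool)$ defined on $K \times L$ can, for each fixed $x$, be curried into a predicate $\lambda y.\, p(x,y)$ on $\D'$; since $p$ is defined on $K \times L$, this curried predicate is defined on $L$ whenever $x \in K$. This lets me quantify over the second coordinate using $\forall_L$ and then quantify the result over the first coordinate using $\forall_K$. Concretely, for exhaustibility I would set

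\licsmath{\forall_{K \times L}(p) = \forall x \in K.\, \forall y \in L.\, p(x,y).}

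The correctness argument is a short case analysis: if $p(x,y)=\True$ for all $(x,y) \in K \times L$, then for each $x \in K$ the inner quantification $\forall y \in L.\, p(x,y)$ returns $\True$, so the outer one returns $\True$; and if $p(x_0,y_0)=\False$ for some $(x_0,y_0)$, then the inner quantification at $x_0$ returns $\False$, witnessing a $\False$ value for the outer quantification, which therefore returns $\False$. The key point making this well-typed is that $\lambda x.\, \forall_L(\lambda y.\, p(x,y))$ is a predicate on $\D$ defined on $K$, so $\forall_K$ may legitimately be applied to it.

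For searchability I would do the analogous currying at the level of selection functionals. Given $\e_K$ and $\e_L$, and a predicate $p$ defined on $K \times L$, the idea is to first choose a good first coordinate and then a good second coordinate relative to it. I would set

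\licsmath{\e_{K \times L}(p) = \text{let } a = \e_K(\lambda x.\, \exists y \in L.\, p(x,y)) \text{ in } (a,\, \e_L(\lambda y.\, p(a,y))),}

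where the inner existential quantifier $\exists_L$ comes from $\e_L$ via Lemma~\ref{implicit}. The correctness is again a two-case check: if some $(x,y) \in K \times L$ satisfies $p$, then the predicate $\lambda x.\, \exists y \in L.\, p(x,y)$ holds of some element of $K$, so by clause~(\ref{searchable:def:2}) of searchability $a = \e_K(\dots)$ is an element of $K$ for which $\exists y \in L.\, p(a,y)$ holds; then $\lambda y.\, p(a,y)$ holds of some element of $L$, so $\e_L$ returns an element $b \in L$ with $p(a,b)=\True$, and $(a,b) \in K \times L$ is the desired witness. If no element satisfies $p$, the membership clauses alone guarantee $a \in K$ and $\e_L(\lambda y.\,p(a,y)) \in L$, so the output still lies in $K \times L$.

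I do not expect a serious obstacle here: the only thing to verify carefully is that each auxiliary predicate handed to an inner quantifier or selector is actually \emph{defined} on the relevant factor, so that the specifications of $\forall$, $\exists$, and $\e$ apply. This follows because $p$ is defined on the whole product $K \times L$, so $p(x,y) \ne \bot$ for all $(x,y) \in K \times L$, whence each curried or projected predicate is defined on its factor for the arguments that matter. The closest thing to a subtlety is that this uses the inter-definability of $\exists_L$ from $\e_L$ (Lemma~\ref{implicit}) inside the selection construction, but that is already established. Iterating the binary construction then yields closure under arbitrary finite products.
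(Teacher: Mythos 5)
Your proposal is correct and is essentially the paper's own proof: the same nested-quantifier functional $\forall_{K \times K'}(p)=\forall x \in K.\,\forall x' \in K'.\,p(x,x')$ for exhaustibility, and the same two-stage selection (pick the first coordinate via $\e_K$ applied to $\lambda x.\,\exists x' \in K'.\,p(x,x')$, using Lemma~\ref{implicit} for the existential, then pick the second via $\e_{K'}$) for searchability. Your added checks that each curried predicate is defined on its factor make explicit what the paper leaves implicit, but the construction is identical.
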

\begin{proof}
  %For the nullary product $1=\{\star\}$, this is clear. 
  For $K \subseteq \D$ and $K' \subseteq \D'$ exhaustible, define
  \licsmath{\forall_{K \times K'}(p)=\forall x \in K.\forall x' \in
    K'.p(x,x').}
  For $K \subseteq \D$ and $K' \subseteq \D'$ searchable, to compute
  $\e_{K \times K'}(p)$ we first find $x \in K$ such that
  there is $x' \in K'$ with $p(x,x')$, and then find $x' \in K'$ such
  that $p(x,x')$, i.e.\
\licsmatht{x = \e_K(\lambda x.\exists x' \in K'.p(x,x')),}
{x' = \e_{K'}(\lambda x'. p(x,x')),}
using the fact that searchable sets are exhaustible,
and let $\e_{K \times K'}(p) = (x,x')$.
\end{proof}

Compact sets in topology are closed under arbitrary products.  We now
show that searchable sets are closed under countable products.  We
would like to show that for any sequence of searchable sets $K_i
\subseteq D_i$, their product $\prod_i {K_i} \subseteq \prod_i D_i$ is
also searchable, but this would require dependent types, which are not
part of the traditional higher-type computation formalism (but
see~\cite{berger:dependent} and~\cite{berger:dependent:1}).  So we
assume that the components~$K_i$ of the product are all subsets of the
same domain~$\D$, so that $\prod_i K_i \subseteq \D^\myomega$ instead,
leaving the more general question for future work.

\pagebreak[3]
Given selection functionals
\licsmath{\e_{K_i} \in ((\D \to \pBool) \to \D),}
we wish to construct a selection functional 
\licsmath{\text{$\e_{\prod_i K_i} \in ((\D^\myomega \to \pBool)
    \to \D^\myomega)$}.}  
The idea, which iterates the proof of Proposition~\ref{fin:prod}, is to
let
\licsmath{\e_{\prod_i K_i}(p)=%\vec{x}=
  x_0x_1x_2\dots x_n\dots,}
where
\licsmathtttt{\text{$x_0 \in K_0$ is such that $\exists \alpha \in \prod_{i} K_{i+1}.p(x_0\alpha)$,}}
{\text{$x_1 \in K_1$ is such that  $\exists \alpha \in \prod_{i} K_{i+2}.p(x_0x_1\alpha)$,}}
%{\text{$x_2 \in K_2$ is such that $\exists \alpha\in \prod_{i} K_{i+3}.p(x_0x_1x_2\alpha)$,}}
{\dots}%{\vdots}
{\text{$x_n \in K_n$ is such that $\exists \alpha\in \prod_{i} K_{i+n+1}.p(x_0x_1\dots x_n\alpha)$,}}
{\dots}%{\vdots}
The component $x_n$ will be found using $\e_{K_n}$, and
existential quantifications will be recursively reduced to search.  To
make this precise, we change notation.  Given a sequence
\licsmath{\e \in ((\D \to \pBool) \to \D)^\myomega,}
such that $\e_i$ is a selection functional for $K_i$, we wish
to find
\licsmath{\Pi(\e) \in (\D^\myomega \to \pBool) \to \D^\myomega}
that is a selection functional for~$\prod_i K_i$. That is, we are
looking for a computable functional
\licsmath{\Pi \colon ((\D \to
  \pBool) \to \D)^\myomega \to ((\D^\myomega \to \pBool) \to \D^\myomega)}
that transforms any sequence of selection functionals for subsets
of~$\D$ into a selection functional for a subset of~$\D^\myomega$:
%
%\licsmathtttt{\Pi(\e)(p)(0) = \text{$x_0$ such that\ $\exists\alpha \in \prod_{i} K_{i+1}.p(x_0\alpha)$,}}
%{\Pi(\e)(p)(1) = \text{$x_1$ such that\ $\exists \alpha \in \prod_{i} K_{i+2}.p(x_0x_1\alpha)$,}}
%{\Pi(\e)(p)(2) = \text{$x_2$ such that\ $\exists \alpha\in \prod_{i} K_{i+3}.p(x_0x_1x_2\alpha)$,}}
%{\dots}%{\vdots}
\licsmath{\Pi(\e)(p)(n) = \text{$x_n$ such that\ $\exists \alpha\in \prod_{i} K_{i+n+1}.p(x_0x_1\dots x_n\alpha)$.}}
%{\dots}%{\vdots}
%
To complete the derivation of the functional $\Pi$, we reduce the
existential quantification to a suitable recursive call to~$\Pi$.  If
the functional~$\Pi$ is to meet its specification, $\Pi(\lambda
i.\e_{i+n+1})$ should be a selection functional for the
set~$\prod_{i} K_{i+n+1}$. But a searchable set is exhaustible by
Lemma~\ref{implicit}. To implement the proof of this lemma in our
situation, for any given $p,n,x_n$, define
\begin{eqnarray*}
p_{n,x_n}(\alpha)& = & p(x_0x_1\dots x_{n-1} x_{n}\alpha) \\
& = & p(\Pi(\e)(p)(0) * \Pi(\e)(p)(1) * \operatorname{\dots} * \Pi(\e)(p)(n-1) * x_n * \alpha).
\end{eqnarray*}
For the sake of clarity, here we have used ``$*$'', rather than
juxtaposition as above, to indicate concatenation of
elements and sequences.  Then
\licsmath{\exists \alpha\in \prod_{i} K_{i+n+1}.p(x_0x_1\dots x_n\alpha)}
is equivalent to
\licsmath{p_{n,x_n}(\Pi(\lambda i.\e_{n+i+1})(p_{n,x_n})).}
To find $x_n$ such that this holds, we use $\e_n$:
\licsmath{\Pi(\e)(p)(n) = \e_n(\lambda
  x_n.p_{n,x_n}(\Pi(\lambda i.\e_{n+i+1})(p_{n,x_n}))).}
Because we don't want a different variable $x_n$ for each $n$, we
rename the variable to simply~$x$. This completes our derivation of
the product functional:
\pagebreak[3]
\begin{defi} \label{product:functional} The product functional
  $\Pi \colon ((\D \to \pBool) \to \D)^\myomega \to ((\D^\myomega \to
  \pBool) \to \D^\myomega)$ is recursively defined by
  \licsmath{\Pi(\e)(p)(n) = \e_n(\lambda
    x.p_{n,x,\e}(\Pi(\e^{(n+1)})(p_{n,x,\e})))}
where
\licsmath{
p_{n,x,\e}(\alpha)= p\left(\lambda i.
\begin{cases}
\Pi(\e)(p)(i) & \text{if $i < n$,}\\
x & \text{if $i = n$,} \\
\alpha_{i-n-1} & \text{if $i>n$,}
\end{cases}
\right)\qquad\qquad} and where for any sequence $\beta$ we write
$\beta^{(k)}$ to denote the sequence $\beta$ with the first~$k$
elements removed:
\[
\beta^{(k)}_i = \beta_{k+i}.
\]
\pagebreak[3]

\noindent
For future use, we also write $\beta' = \beta^{(1)}$ and
\[ \overline{\beta}(n) = \langle\beta_0, ... , \beta_{n-1}\rangle \]
so that
\[ p_{n,x,\e}(\alpha) = p(\overline{\Pi(\e)(p)}(n) * x * \alpha).\]
\qed
\end{defi}

The original proof of the following theorem, sketched
in~\cite{escardo:lics07}, uses an auxiliary recurrence relation and
dependent choices. The following more elegant proof, based on
alternative recurrences and bar induction, was presented to me by
Ulrich Berger and is included with his permission:
\begin{thm} \label{searchable:tychonoff} If each $\e_i$
  is a selection functional for a set $K_i \subseteq \D$ then
  $\Pi(\e)$ is a selection functional for the set~$\prod_i
  K_i \subseteq D^\omega$.
\end{thm}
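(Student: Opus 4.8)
The plan is to verify directly the two clauses defining a selection functional for $\sK=\prod_i K_i$ (Definition~\ref{searchable:def}): for every $p$ defined on $\sK$, \emph{membership}, $\Pi(\e)(p)\in\sK$, and \emph{selection}, $p(\Pi(\e)(p))=\True$ whenever $p(\alpha)=\True$ for some $\alpha\in\sK$. The first move is to replace the official recurrence of Definition~\ref{product:functional}, which awkwardly refers to the prefix $\overline{\Pi(\e)(p)}(n)$ of the sequence being defined, by an equivalent head--tail corecurrence: writing $q_x=\lambda\alpha.p(x\cons\alpha)$, $a=\e_0(\lambda x.\,q_x(\Pi(\e')(q_x)))$, and $q=\lambda\alpha.p(a\cons\alpha)$, one shows by a routine unfolding (an induction on $n$ matching prefixes) that $\Pi(\e)(p)=a\cons\Pi(\e')(q)$. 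This identity is what makes both correctness clauses amenable to induction, since the tail $\Pi(\e')(q)$ is a structurally analogous instance built from the shifted sequence $\e'=\e^{(1)}$ of selection functionals for $K_1,K_2,\dots$ and from the residual predicate $q$, which is defined on $\prod_{i\ge 1}K_{i}$ precisely because $p$ is defined on $\sK$ and $a\in K_0$.

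For \emph{membership}, I would show $\Pi(\e)(p)(n)\in K_n$ for all $n$, which by the head--tail form reduces to the claim that the head $a$ lies in $K_0$, applied to shifted instances. Since $\e_0$ is a selection functional, $a\in K_0$ holds provided the predicate $\lambda x.\,q_x(\Pi(\e')(q_x))$ is defined on $K_0$; for $x\in K_0$ the predicate $q_x$ is defined on $\prod_{i\ge1}K_i$, so by the membership clause for the tail the argument $\Pi(\e')(q_x)$ lands in $\prod_{i\ge1}K_i$, whence $q_x(\Pi(\e')(q_x))=p(x\cons\Pi(\e')(q_x))\ne\bot$. The apparent circularity --- membership of the whole sequence invoking membership of its tail --- is harmless because each coordinate depends only on finitely much data above it, and it is resolved formally by the same bar induction that drives the selection clause.

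For \emph{selection}, I would argue by bar induction over the tree of finite sequences whose $i$-th entry lies in $K_i$. The bar is furnished by continuity of $p$: since $p\colon\D^\myomega\to\pBool$ is continuous, its value on any $\alpha\in\sK$ is decided by a finite prefix, so the set $B$ of finite sequences $s$ for which $p(s\cons\gamma)$ is independent of the tail $\gamma$ is a bar. On $B$ the selection clause is immediate, as $p$ has already been decided. For the hereditary step, suppose the selection clause holds for every residual problem $(\e',q_x)$ obtained by committing a legal head $x\in K_0$; I must deduce it for $(\e,p)$. By Lemma~\ref{implicit} the exhaustion $\exists\alpha\in\prod_{i\ge1}K_i.\,p(x\cons\alpha)$ is computed correctly as $q_x(\Pi(\e')(q_x))$, so the predicate fed to $\e_0$ decides, for each $x\in K_0$, whether $x$ extends to a witness; the selection property of $\e_0$ then picks $a=\Pi(\e)(p)(0)$ to be extendable whenever any head is, and the inductive hypothesis for the tail completes the extension to a full witness with $p(\Pi(\e)(p))=\True$.

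The main obstacle is the absence of structural well-foundedness: $\Pi$ recurses through infinitely many coordinates, so neither clause can be proved by ordinary induction on a coordinate index. Bar induction is exactly what licenses the backward induction along the continuity bar, and the delicate point is to thread the hypothesis ``$p$ is defined on $\sK$'' through the recursion so that, at each node, the predicate handed to the selection functional $\e_n$ is genuinely defined on $K_n$ --- this interlocking of membership with definedness is what the head--tail reformulation is designed to expose, and where the bulk of the care is required.
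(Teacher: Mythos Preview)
Your proposal is correct and follows essentially the same architecture as the paper's proof: first derive the head--tail recurrence $\Pi(\e)(p)=a\cons\Pi(\e')(q)$ (the paper's Claim~(3), proved via course-of-values induction), then establish both membership and selection simultaneously by bar induction on the continuity of~$p$, using the selection property of~$\e_0$ at the head and the bar-inductive hypothesis for the tail. The paper phrases the bar slightly differently --- it takes ``$p$ defined on $\prod_i K_i$'' as an inductively generated predicate with base case $p(\bot)\ne\bot$ rather than your bar of prefix-determined sequences --- but this is a cosmetic difference in how the same continuity-driven bar induction is packaged.
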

\begin{proof}
Define
\begin{eqnarray*}
p_x(\alpha) & = & p(x * \alpha), \\
x_{\e,p} & = & \Pi(\e)(p)(0) = \e_0(\lambda x.p_x(\Pi(\e')(p_x))), \\
p_\e & = & p_{x_{\e,p}} = p_{0,x_{\e,p},\e}.
\end{eqnarray*}
Claim:
\begin{enumerate}
\item[(1)$(n)$] \quad $p_{n+1,x,\e}  = (p_\e)_{n,x,\e'}$,
\item[(2)$(n)$] \quad $\Pi(\e)(p)(n+1) = \Pi(\e')(p_\e)(n)$,
\item[(3)\phantom{$(n)$}] \quad $\Pi(\e)(p) = x_{\e,p} * \Pi(\e')(p_\e)$.
\end{enumerate}
We prove the properties (1)$(n)$ and (2)$(n)$ simultaneously by course
of values induction.

\noindent
\emph{Proof of (1)$(n)$ assuming $(2)(k)$ for all $k<n$:} 
By the assumption,
\begin{eqnarray*}
\overline{\Pi(\e)(p)}(n+1) 
% & = & \langle\Pi(\e)(p)(0), ... , \Pi(\e)(p)(n)\rangle \\
& = & \Pi(\e)(p)(0) * \langle\Pi(\e)(p)(1), ... , \Pi(\e)(p)(n)\rangle \\
& = & x_{\e,p} * \langle\Pi(\e')(p_\e)(0), ... , \Pi(\e')(p_\e)(n-1)\rangle \\
& = &  x_{\e,p} * \overline{\Pi(\e')(p_\e)}(n).
\end{eqnarray*}
Hence
\begin{eqnarray*}
p_{n+1,x,\e}(\alpha) 
& = & p(\overline{\Pi(\e)(p)}(n+1) * x * \alpha) \\
& = & p(x_{\e,p} * \overline{\Pi(\e')(p_\e)}(n) * x * \alpha) \\
& = & p_\e(\overline{\Pi(\e')(p_\e)}(n) * x * \alpha)\\
& = & (p_\e)_{n,x,\e'}(\alpha).
\end{eqnarray*}

\noindent
\emph{Proof of (2)$(n)$ assuming $(1)(n)$:}
 \begin{eqnarray*}
\Pi(\e)(p)(n+1) 
& = & \e_{n+1}(\lambda x.p_{n+1,x,\e}(\Pi(\e^{(n+2)})(p_{n+1,x,\e}))) \\
& = & \e'_n (\lambda x.(p_\e)_{n,x,\e'}(\Pi(\e'^{(n+1})((p_\e)_{n,x,\e})))  \\
& = & \Pi(\e')(p_\e)(n).
\end{eqnarray*}

\noindent
\emph{Proof of (3):} We have $(\Pi(\e)(p))' = \Pi(\e')(p_\e)$ by (2),
and hence
\[ \Pi(\e)(p) = \Pi(\e)(p)(0) * (\Pi(\e)(p))' = x_{\e,p} *
\Pi(\e')(p_\e).\]
This completes the proof of the claim.

\newcommand{\Solve}{\operatorname{S}} For a subset~$L$ of a domain~$E$
and $q \colon E \to \pBool$ defined on $L$, say that an element $x \in
L$ solves $q$ over $L$ if $q(x) \ne \bot$, and $q(x)=\True$ provided
$q(y) = \True$ for some $y \in L$.  Then $\delta \colon (E \to \pBool)
\to E$ is a selection function for~$L$ iff $\delta(q)$ solves~$q$
over~$L$ for every~$q$ defined on~$L$.  Define the proposition
$\Solve(p,\prod_i K_i)$ by
\begin{eqnarray*}
  \Solve(p,\prod_i K_i) & \iff & \textstyle{\text{$\Pi(\e)(p)$ solves $p$ over $\prod_i
  K_i$ whenever $\e_i$ is a}} \\[-1.5ex]
& & \text{selection functional for~$K_i$.}
\end{eqnarray*}
We need to show that if $p$ is defined on $\prod_i K_i$ then
$\Solve(p,\prod_i K_i)$ holds.  The set of continuous predicates $p$
defined on $\prod_i K_i$ can be defined as follows by bar induction:
\begin{enumerate}
\item if $p(\bot) \ne \bot$ then $p$ is defined on $\prod_i K_i$, and
\item if $p_x$ is defined on $\prod_i K_{i+1}$ for all $x \in K_0$ then
$p$ is defined on $\prod_i K_i$.
\end{enumerate}
Therefore, it suffices to show that for all $p \in (D^\omega \to \pBool)$,
\begin{enumerate}
\item[(i)] if $p(\bot) \ne \bot$ then $\Solve(p,\prod_i K_i)$, and
\item[(ii)] if $\Solve(p_x,\prod_i K_{i+1})$ for all $x \in K_0$ then $\Solve(p,\prod_i K_i)$.
\end{enumerate}
\noindent\emph{Proof of (i):} Let $b=p(\bot)$ and assume that $b \ne\bot$.  Then $p =
\lambda \alpha.b$, by monotonicity of $p$. Let $\e_i$ be a selection
function for~$K_i$. Then $\Pi(\e)(p)(n) = \e_n (\lambda x.b)$.  Since
$\lambda x.b$ is defined on~$K_n$, % and $\e_n$ searches over $K_n$,
it follows that $\e_n (\lambda x.b) \in K_n$. Hence $\Pi(\e)(p) \in
\prod_i K_i$. If $p(\alpha) = \True$ for some $\alpha \in \prod_i K_i$
then $p = \lambda \alpha.\True$ and hence $p(\Pi(\e)(p)) = \True$.

\medskip
\noindent
\emph{Proof of (ii):} Assume the bar induction hypothesis
\begin{quote}
($\dagger$)  $\Solve(p_x,\prod_i K_{i+1})$ for all $x \in K_0$.
\end{quote}
We need to show that if $\e_i$ is a selection function for~$K_i$ then:
\begin{enumerate}  
\item[(ii)(a)] $\Pi(\e)(p)  \in  \prod_i K_i$.
\item[(ii)(b)] If $p(\alpha) = \True$ for some $\alpha \in \prod_i K_i$, then $p(\Pi(\e)(p)) = \True$.
\end{enumerate}
\noindent
\emph{Proof of (ii)(a):} We show that $\Pi(\e)(p)(n) \in K_n$ by
induction on~$n$.

\medskip
\noindent
\emph{Base case for (ii)(a):} $\Pi(\e)(p)(0) = x_{\e,p} = \e_0
(\lambda x.p_x(\Pi(\e')(p_x)))$.  Since $e'_i$ is a selection
function for $K_{i+1}$, it follows from ($\dagger$) that
$\Pi(\e')(p_x)$ solves $p_x$ over $\prod_i K_{i+1}$ for all $x \in
K_0$. Then $\lambda x.p_x(\Pi(\e')(p_x))$ is defined on $K_0$, and
since $\e_0$ is a selection function for $K_0$, it follows that
$\Pi(\e)(p)(0) \in K_0$.

\medskip
\noindent
\emph{Induction step for (ii)(a):} $\Pi(\e)(p)(n+1) =
\Pi(\e')(p_{\e_n}) = \Pi(\e')(p_{x_{\e,p}})(n) \in K_{n+1} $ by
Claim~(2), by ($\dagger$) and by the fact that $x_{\e,p} = \Pi(\e)(p)(0)
\in K_0$ (base case).

\medskip
\noindent
\emph{Proof of (ii)(b):} Assume $p(\alpha) = \True$ for some $\alpha
\in \prod_i K_i$.  Then $\alpha' \in \prod_i K_{i+1}$ and
$p_{\alpha_0}(\alpha') = \True$, and, by ($\dagger$), we have
$\Solve(p_{\alpha_0},\prod_i K_{i+1})$. Since $\e'$ is a selection
function for $\prod_i K_{i+1}$, it follows that $\Pi(\e')
(p_{\alpha_0}) \in \prod_i K_{i+1}$ and $p_{\alpha_0}(\Pi(\e')
(p_{\alpha_0})) = \True$.  Then $p_\e(\Pi(\e')(p_\e)) = \True$ because
$\e_0$ is a selection function for $K_0$. But $p_\e(\Pi(\e')(p_\e)) =
p(x_{\e,p} * \Pi(\e')(p_\e)) = p(\Pi(\e)(p))$, by Claim~(3).
\end{proof}

\pagebreak[3]
\begin{examples}
  \leavevmode
  \begin{enumerate}
  \item The Cantor space $2^\omega \subseteq \pBool^\omega$ is
    searchable.  A selection functional is given by $\Pi(\lambda
    i.\e_{\Bool})$ where $\e_{\Bool}$ is a selection
    functional for the finite set $\Bool \subseteq \pBool$.
  \item If $K_i \subseteq \N$ is a sequence of finite sets that are
    finitely enumerable uniformly in $i$, then $\prod_i K_i \subseteq
    \pN^\omega$ is searchable, again using the product functional.
  \end{enumerate}
  If a product $\prod_i K_i$ is searchable, then each set $K_n$ is
  searchable uniformly in $n$, by Proposition~\ref{image} as it is the
  computable image of $\prod_i K_i$ under the $n$-th projection.  \qed
\end{examples}

\begin{rem}
  Berger's selection algorithm  $\e % = \berger
\colon (\pBool^\myomega \to \pBool) \to \pBool^\myomega$ for
the Cantor space, mentioned in the introduction, can
be written as
\[
\e(p) = 
\begin{cases}
\False  * \e(\lambda \alpha.p(\False * \alpha)) & \text{if $p(\False * \e(\lambda \alpha.p(\False  * \alpha)))$,} \\
\True *  \e(\lambda \alpha.p(\True *  \alpha)) & \text{otherwise.}
\end{cases}
\]
If one defines $\exists \colon (\pBool^\myomega \to \pBool) \to \pBool$ by
$\exists(p) = p(\e(p))$, as in the proof of Lemma~\ref{implicit},
then the above definition is equivalent to
\[
\e(p) = 
\begin{cases}
\False  * \e(\lambda \alpha.p(\False *  \alpha)) & \text{if $\exists \alpha. p(\False * \alpha)$,} \\
\True *  \e(\lambda \alpha.p(\True *  \alpha)) & \text{otherwise.}
\end{cases}
\]
Our product algorithm is inspired by this idea. \qed
\end{rem}

\medskip From now on, we rely on Section~\ref{higher:background} for
the definition of totality.  
By Lemma~\ref{implicit} above and by Theorem~\ref{ex:main} below, a
non-empty set of total elements is exhaustible iff it is searchable,
and hence the above theorem shows that non-empty, exhaustible sets of
total elements are closed under countable products. For
Sierpinski-valued, rather than boolean-valued, universal
quantification functionals, a countable-product algorithm is given
in~\cite{escardo:barbados}, but we don't know how to approach
countable products of boolean-valued quantifiers without the detour
via selection functionals at the time of writing.

We now derive a uniform continuity principle from
Theorem~\ref{searchable:tychonoff}, motivated by topological theorems
that assert that, in certain contexts, continuous functions are
uniformly continuous on compact sets. Define
\[
\text{$\alpha =_n \beta \iff \alpha_i = \beta_i$ for all
    $i<n$,} \qquad\qquad \alpha_{|n}(i)=\begin{cases}
    \alpha_i & \text{$i < n$,}\\
    \bot & \text{otherwise.}\end{cases}
\] 
Then $\alpha =_n \beta$ iff $\alpha_{|n} = \beta_{|n}$.

\begin{thm} \label{uniform:cont} If $f \in (D^\myomega \to \pN)$
  is defined on a product $\prod_i K_i$ of searchable sets, then there
  is a number $n$ such that for all $\alpha,\alpha' \in \prod_i K_i$,
  \,\,\licsmath{\alpha =_n \alpha' \implies f(\alpha)=f(\alpha').}
\end{thm}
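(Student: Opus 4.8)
The plan is to reduce the desired uniform modulus to the elementary fact, recorded in the introduction (Remark~\ref{introduction}), that an exhaustible subset of $\N$ is finite, exploiting that computable images of searchable sets are searchable (Proposition~\ref{image}).

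First I would use continuity of $f$ to produce a \emph{pointwise} modulus. Since $f$ is defined on $\prod_i K_i$, for each $\alpha \in \prod_i K_i$ the value $f(\alpha)\in\N$ is $\ne\bot$; as $f(\alpha)=\bigsqcup_k f(\alpha_{|k})$ and $f(\alpha)$ is maximal in $\pN$, the join is attained at a finite stage, i.e.\ $f(\alpha_{|k})=f(\alpha)$ for some $k$. I would package this as a computable function $m \colon D^\myomega \to \pN$ that returns the first $k$ for which the computation of $f(\alpha_{|k})$ halts, implemented by a standard step-indexed dovetailing search. This $m$ is computable (PCF-definable) and, crucially, \emph{defined on} $\prod_i K_i$, because the search terminates exactly when $f(\alpha)$ is defined. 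By monotonicity the returned $k$ has $f(\alpha_{|k})=f(\alpha)$, so $m(\alpha)$ is a genuine modulus at $\alpha$: whenever $\beta =_{m(\alpha)} \alpha$ we have $\alpha_{|m(\alpha)} = \beta_{|m(\alpha)}$, whence $f(\alpha) = f(\alpha_{|m(\alpha)}) = f(\beta_{|m(\alpha)}) \sqsubseteq f(\beta)$, and maximality of $f(\alpha)$ forces $f(\beta)=f(\alpha)$ for $\beta \in \prod_i K_i$.

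The heart of the argument is the passage from these local moduli to a single uniform one. The image $m(\prod_i K_i) \subseteq \N$ is the computable image of a searchable set, hence searchable by Proposition~\ref{image}, and therefore exhaustible by Lemma~\ref{implicit}; being a subset of $\N$, it is then finite by the theorem recalled in the introduction. Setting $n = \max\, m(\prod_i K_i)$ yields the claim: for $\alpha,\alpha' \in \prod_i K_i$ with $\alpha =_n \alpha'$ we have $m(\alpha) \le n$, so $\alpha$ and $\alpha'$ agree on their first $m(\alpha)$ coordinates, and the pointwise modulus property just established gives $f(\alpha)=f(\alpha')$.

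The main obstacle is precisely this last leap to a finite uniform bound, which carries the genuine compactness content of the statement; the routine verifications (computability and definedness of $m$ via the dovetailing interpreter, and the monotonicity manipulations) I expect to be unproblematic. A secondary point to treat with care is presenting $m$ as a bona fide computable functional defined on the whole product, rather than merely a pointwise recipe, so that Proposition~\ref{image} genuinely applies to its image.
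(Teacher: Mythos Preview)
The step you call ``secondary'' is precisely where the argument fails. In the paper's framework $f$ is merely a continuous map $D^\omega\to\pN$: there is no machine to step-index, so ``the first $k$ for which the computation of $f(\alpha_{|k})$ halts'' has no denotation, and dovetailing is not available. The natural extensional candidate, $m(\alpha)=\text{least }k$ with $f(\alpha_{|k})\ne\bot$, is not Scott-monotone: if $\alpha\sqsubseteq\alpha'$ then $\alpha_{|k}\sqsubseteq\alpha'_{|k}$ for every~$k$, so the set of admissible~$k$ for~$\alpha'$ contains that for~$\alpha$ and its minimum can strictly \emph{decrease}, yielding two distinct and hence incomparable values in the flat domain~$\pN$. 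More generally there is no continuous witness-selector $(\pN\to\Sierp)\to\pN$ picking an element of a non-empty $\top$-set (extend the values at $\chi_{\{3\}}$ and $\chi_{\{5\}}$ monotonically to $\lambda k.\top$ and derive a contradiction), and that is exactly what your dovetailing would have to denote. So $m$ is not a morphism, Proposition~\ref{image} cannot be applied to it, and you never obtain an exhaustible subset of~$\N$ on which to invoke the finiteness fact. (That fact, incidentally, is Remark~\ref{introduction} in Section~\ref{criteria}, resting on Lemma~\ref{lemma:criterion}; even granted a correct~$m$ you would be making a forward reference the paper's own proof avoids.)

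The paper's argument bypasses any pointwise modulus. It uses Scott-continuity of $\forall_{\prod_i K_i}$, available from Theorem~\ref{searchable:tychonoff} via Lemma~\ref{implicit}, on the increasing chain $q_n=\lambda\alpha.\,f(\alpha_{|n})==f(\alpha)$, whose supremum is $\lambda\alpha.\,f(\alpha)==f(\alpha)$, constantly~$\True$ on the product; continuity delivers a single~$n$ with $\forall_{\prod_i K_i}(q_n)=\True$. Since $q_n$ need not be defined on~$\prod_i K_i$, one replaces $f(\alpha_{|n})$ by $g_n(\alpha)=f(\alpha_0\cdots\alpha_{n-1}\beta)$ for a fixed tail $\beta\in\prod_i K_{i+n}$: this dominates $f_{|n}$, \emph{is} defined on the product, and hence $g_n(\alpha)=f(\alpha)$ there, from which the uniform modulus~$n$ follows directly.
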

\begin{proof}
  Let $(==) \in (\pN \times \pN \to \pBool)$ be the unique
  total function such that
  %\begin{quote}
  $(x == y)=\True$ iff $x \sim y$. 
  %\end{quote}
  Then $\forall_{\prod_i {K_i}}(\lambda \alpha.f(\alpha) ==
  f(\alpha))=\True$.  If we define \[ f_{|n}(\alpha)=f(\alpha_{|n}),\]
  then $f=\bigsqcup_n f_{|n}$ and hence $(\lambda \alpha.f(\alpha) ==
  f(\alpha)) = \bigsqcup_n (\lambda \alpha.f_{|n}(\alpha) ==
  f(\alpha))$. So, by continuity of $\forall_{\prod_i {K_i}}$, there
  is $n$ such that \[ \forall_{\prod_i {K_i}}(\lambda
  \alpha.f_{|n}(\alpha) == f(\alpha))=\True.\] We cannot conclude that
  $f_{|n}(\alpha) == f(\alpha)$ for all $\alpha \in \prod_i K_i$
  because there is no reason why the predicate $\lambda
  \alpha.f_{|n}(\alpha) == f(\alpha)$ should be defined on~$\prod_i
  K_i$.  To overcome this difficulty, let $\beta \in \prod_{i} K_{i +
    n}$ and define
  $g_n(\alpha)=f(\alpha_0\alpha_1\dots\alpha_{n-1}\beta)$ so that
  $g_n$ is defined on~$\prod_i K_i$ and
  above~$f_n$. % $f_{|n}(\alpha) \below g_n(\alpha)$.
  By monotonicity, $ \forall_{\prod_i {K_i}}(\lambda
  \alpha.g_n(\alpha) == f(\alpha))=\True.  $ Now the predicate
  $\lambda \alpha.g_n(\alpha) == f(\alpha)$ is defined on $\prod_i
  K_i$ and hence $g_n(\alpha) = f(\alpha)$ for all $\alpha \in \prod_i
  K_i$.  But if $\alpha =_n \alpha'$ then $g_n(\alpha)=g_n(\alpha')$,
  and so $f(\alpha)=f(\alpha')$, as required.
\end{proof}

\medskip
The following is an immediate consequence of this and
Theorem~\ref{searchable:tychonoff}:
\begin{cor} \label{fan}
  The functional $\operatorname{fan}=\operatorname{fan}_{\prod_i K_i}
  \colon (\D^\myomega \to \pN) \to \pN$ defined by
\[
  \operatorname{fan}(f) = \mu n.\forall \alpha,\beta \in \prod_i K_i.\,\, \alpha =_n \beta \implies f(\alpha)=f(\beta)
\]
is computable uniformly in any sequence of selection functionals for
the sets~$K_i \subseteq D$, and is defined on any $f$ that is defined
on $\prod_ i {K_i}$. Moreover, if the sets $K_i$ consist of total
elements of a domain $D=D_\sigma$, then the fan functional is total.
\end{cor}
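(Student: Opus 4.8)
The plan is to prove Corollary~\ref{fan} by assembling three ingredients: the existence of the quantification functional from Theorem~\ref{searchable:tychonoff} (via Lemma~\ref{implicit}), the uniform continuity bound from Theorem~\ref{uniform:cont}, and an unbounded-search ($\mu$) operator to locate the least such bound. First I would observe that the defining equation
\licsmath{\operatorname{fan}(f) = \mu n.\forall \alpha,\beta \in \textstyle\prod_i K_i.\,\, \alpha =_n \beta \implies f(\alpha)=f(\beta)}
is to be read as an actual program: the body ``$\forall \alpha,\beta \in \prod_i K_i.\,\alpha=_n\beta\implies f(\alpha)=f(\beta)$'' is a boolean-valued test in the variable~$n$, and $\mu$ returns the least~$n$ for which it yields $\True$. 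So the first task is to exhibit that test as a computable predicate in~$n$, uniformly in the selection functionals $\e_i$ and in~$f$.

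Next I would make the inner test precise. Fix~$n$. The condition $\alpha=_n\beta$ means $\alpha_{|n}=\beta_{|n}$, so I can quantify $\alpha$ and $\beta$ jointly over the searchable product $\prod_i K_i \times \prod_i K_i$ (searchable by Theorem~\ref{searchable:tychonoff} and Proposition~\ref{fin:prod}), using $==$ from the proof of Theorem~\ref{uniform:cont}. To keep the quantified predicate \emph{defined} on the product I would guard the implication exactly as in that proof: replace $f$ by the majorant $g_n(\alpha)=f(\overline\alpha(n)*\beta_0)$ obtained by pasting a fixed total tail $\beta_0\in\prod_i K_{i+n}$, so that $g_n$ is defined on $\prod_i K_i$ and agrees with $f$ there. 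Thus the test becomes
\licsmath{\forall_{\prod_i K_i}\!\bigl(\lambda\alpha.\forall_{\prod_i K_i}(\lambda\beta.(\alpha=_n\beta) \Rightarrow (g_n(\alpha)==g_n(\beta)))\bigr),}
a composite of computable pieces, hence computable uniformly in $(\e_i)_i$, $f$ and~$n$. Applying $\mu$ to this gives a computable functional matching the stated specification, and Theorem~\ref{uniform:cont} guarantees that, whenever $f$ is defined on $\prod_i K_i$, some~$n$ passes the test, so the search terminates and $\operatorname{fan}(f)$ is defined.

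For the final sentence---totality when the $K_i$ are sets of total elements of $D=D_\sigma$---I would argue that $\operatorname{fan}$ sends total inputs to total outputs. If $f\in\C_{\sigma\to\iota}$ then $f$ is total, hence defined on the (total) product $\prod_i K_i$, so by the termination just established $\operatorname{fan}(f)\in\N$, i.e.\ $\operatorname{fan}(f)\ne\bot$; and for totality of the functional one also checks that $\totaleq{}$-equivalent inputs receive equal outputs, which follows because the value is determined by the input--output behaviour of $f$ on the total elements of $\prod_i K_i$ alone. \textbf{The main obstacle} I anticipate is the same definedness subtlety that already surfaced in Theorem~\ref{uniform:cont}: the naive predicate $\lambda\alpha,\beta.(\alpha=_n\beta)\Rightarrow(f(\alpha)==f(\beta))$ need not be defined on $\prod_i K_i$, so the quantifiers $\forall_{\prod_i K_i}$ have unspecified behaviour on it and the $\mu$-search could return a spurious value. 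The fix---passing to the majorant $g_n$ with a pasted total tail before quantifying---is exactly what restores definedness, and getting this guard right (so that $g_n$ is both defined on the product and equal to $f$ there) is the crux; once it is in place, the rest is routine assembly and an appeal to Theorems~\ref{searchable:tychonoff} and~\ref{uniform:cont}.
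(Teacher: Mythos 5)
Your high-level plan---implement the inner condition as a computable boolean-valued test in $n$ using the quantifiers supplied by Theorem~\ref{searchable:tychonoff} and Lemma~\ref{implicit}, apply $\mu$, and get termination from Theorem~\ref{uniform:cont}---is exactly how the paper intends the corollary to follow (the paper gives no separate proof; it states the corollary as an immediate consequence of those two theorems). The gap is in your concrete rendering of the test, which fails in two independent ways. First, the guard $\alpha =_n \beta$ is not computable: it asks for an equality test on the arbitrary domain $D$, and the paper's $(==)$ is defined only on $\pN \times \pN$. For $D=\pN$ your guard would be fine, but the point of the corollary, as the paper notes right after it, is the generalization to arbitrary $D=D_\sigma$; when $K_i$ is, say, the Cantor space $\Bool^\myomega \subseteq \pBool^\myomega$, no continuous---let alone computable---predicate defined on $K_i \times K_i$ can be $\True$ exactly on the diagonal, since the diagonal is not open there. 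Second, and fatal even if the guard were available: your consequent $g_n(\alpha)==g_n(\beta)$ makes the test vacuous. Since $g_n(\alpha)=f(\overline{\alpha}(n)*\beta_0)$ factors through the first $n$ coordinates, the guard $\alpha=_n\beta$ already forces $g_n(\alpha)=g_n(\beta)$; hence every $n$, including $n=0$, passes, and your $\mu$-search computes the constant functional~$0$, not $\operatorname{fan}$. The slip is your claim that $g_n$ ``agrees with $f$'' on $\prod_i K_i$: that holds precisely when $n$ is at least the modulus of uniform continuity, which is the very number being searched for.

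Both defects disappear if you drop the guard and compare $g_n$ against $f$ rather than against itself. Put $\beta_0 = \lambda i.\,\e_{n+i}(\lambda x.\True) \in \prod_i K_{i+n}$ (computable uniformly in $n$ and the $\e_i$), let $g_n(\alpha)=f(\overline{\alpha}(n)*\beta_0)$, and define
\[
\operatorname{fan}(f) = \mu n.\,\forall_{\prod_i K_i}\bigl(\lambda \alpha.\, g_n(\alpha) == f(\alpha)\bigr),
\]
with $\forall_{\prod_i K_i}$ obtained from $\Pi(\e)$ via Lemma~\ref{implicit}. The quantified predicate is defined on $\prod_i K_i$ whenever $f$ is, because both $g_n(\alpha)$ and $f(\alpha)$ are then natural numbers and $(==)$ is total, so each test returns $\True$ or $\False$; and for each $n$ the test is equivalent to the condition in the statement: if it holds and $\alpha=_n\beta$, then $f(\alpha)=g_n(\alpha)=g_n(\beta)=f(\beta)$, while conversely the condition applied to the pair $\alpha$ and $\overline{\alpha}(n)*\beta_0$, which agree up to~$n$, gives $f(\alpha)=g_n(\alpha)$. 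Hence the search returns exactly the stated least~$n$, and Theorem~\ref{uniform:cont}---whose proof in fact establishes that this very test succeeds for some $n$---gives termination for every $f$ defined on $\prod_i K_i$. (Equivalently, one may quantify over a common prefix in $\prod_{i<n}K_i$ and two independent tails $\alpha,\beta \in \prod_i K_{i+n}$ and test $f(\gamma*\alpha)==f(\gamma*\beta)$; this is the same guard-elimination idea.) Your final paragraph on definedness and totality is fine once the test is repaired, noting only that totality of $\operatorname{fan}$ in the paper's sense just means that every total $f$ receives a value in $\N$, which is immediate since a total $f$ is defined on the product of sets of total elements.
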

This holds, in particular, if $\D=\pN$ and each $K_i$ is a finite
subset of $\N$ defined uniformly in $i$, which is the case that has
been considered in higher-type computability theory regarding the fan
functional (see e.g.~\cite{gandy77}).
Here we have generalized this to
arbitrary higher types $\D=\D_{\sigma}$.
A consequence of the exhaustibility of the Cantor space is that:
\begin{cor}\label{decidable:equivalence}
  The total elements of the function space $(\pBool^\omega \to \pN)$
  have decidable equivalence.
\end{cor}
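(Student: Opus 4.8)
The plan is to reduce the decidability of equivalence to the exhaustibility of the Cantor space. First I would unwind what equivalence means at this type: two total functionals $f,g$ of type $\pBool^\omega \to \pN$ satisfy $f \totaleq{} g$ precisely when $f(\alpha)=g(\alpha)$ (as natural numbers) for every $\alpha$ in the Cantor space $2^\omega$, since the total elements of $\pBool^\omega$ are exactly the sequences in $2^\omega$ and equivalence on them is plain equality. Thus deciding $f \totaleq{} g$ amounts to deciding whether $f$ and $g$ agree on all of $2^\omega$.

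Second, I would assemble the two computable ingredients. From the proof of Theorem~\ref{uniform:cont} I reuse the total computable equality test $(==) \in (\pN \times \pN \to \pBool)$, for which $(m == n)=\True$ iff $m = n$ on total arguments. From the Examples following Theorem~\ref{searchable:tychonoff}, the Cantor space is searchable, hence exhaustible by Lemma~\ref{implicit}, so it carries a computable quantification functional $\forall_{2^\omega} \colon (\pBool^\omega \to \pBool) \to \pBool$. I then define the candidate decision procedure
\[ E(f,g) = \forall_{2^\omega}(\lambda \alpha.\, f(\alpha) == g(\alpha)) = \forall \alpha \in 2^\omega.\,(f(\alpha) == g(\alpha)), \]
which is computable as a composite of computable maps.

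Third, I would check correctness on total inputs. For total $f,g$ and any $\alpha \in 2^\omega$ the values $f(\alpha),g(\alpha)$ are total, so $(f(\alpha) == g(\alpha))$ is a genuine boolean; hence the predicate $\lambda\alpha.\,f(\alpha) == g(\alpha)$ is defined on $2^\omega$, with value $\True$ exactly when $f(\alpha)=g(\alpha)$ and $\False$ otherwise. By the defining property of $\forall_{2^\omega}$ this yields $E(f,g)=\True$ when $f(\alpha)=g(\alpha)$ for all $\alpha \in 2^\omega$, i.e.\ when $f \totaleq{} g$, and $E(f,g)=\False$ when $f(\alpha)\neq g(\alpha)$ for some $\alpha$, i.e.\ when $f \not\totaleq{} g$. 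Thus $E$ is a total computable functional deciding $\totaleq{}$.

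The argument is short, and the only place demanding care --- the main obstacle, such as it is --- is the very first step: correctly unfolding the inductive definition of $\totaleq{}$ at the function type $\pBool^\omega \to \pN$ to confirm that equivalence of total functionals there really coincides with pointwise equality over the Cantor space, and hence that the test predicate is \emph{defined on} $2^\omega$ in the sense of Definition~\ref{definedon}, so that exhaustibility applies. Everything downstream is a routine composition of the equality test with the Cantor-space quantifier.
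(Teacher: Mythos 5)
Your proposal is correct and is essentially the paper's own proof: the paper defines exactly the same algorithm $(f==g)=\forall\alpha\in\Bool^\omega.\,f(\alpha)==g(\alpha)$, relying on the exhaustibility of the Cantor space and the equality test on $\pN$. Your additional unfolding of total equivalence at this type and the check that the test predicate is defined on $2^\omega$ are just the details the paper leaves implicit.
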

\begin{proof}
  The algorithm $(==) \colon (\pBool^\omega \to \pN) \times (\pBool^\omega
  \to \pN) \to \pBool$ given by  \[(f == g) = \forall \alpha \in
  \Bool^\omega. f(\alpha) == g(\alpha)\]
  does the job.
\end{proof}

This can be generalized as follows, where we now rely on
Section~\ref{kk:background} for the definition of the Kleene--Kreisel
spaces~$C_\sigma$.  
\begin{defi} \label{discrete:compact}
The \emph{discrete} and \emph{compact}
types are inductively defined as
\newcommand{\discrete}{\mathtt{discrete}}
\newcommand{\compact}{\mathtt{compact}}
\begin{eqnarray*}
\discrete & ::= & o \mid \iota \mid \discrete \times \discrete \mid \compact \to \discrete, \\
\compact & ::= & o \mid \compact \times \compact \mid \discrete \to \compact.
\end{eqnarray*}
The reason for this terminology is that the space $C_{\sigma}$ is
discrete if $\sigma$ is discrete, and it is compact if $\sigma$ is
compact, as observed in~\cite{escardo:barbados}. \qed
\end{defi}
\pagebreak[3]
\begin{thm} \label{thm:discrete:compact}
\leavevmode
\begin{enumerate}
\item If $\sigma$ is discrete, then $C_{\sigma}$ is computably enumerable. 
\item The total elements of a domain of compact type form a searchable set.
\item The total elements of a domain of discrete type have decidable equivalence.
\end{enumerate}
\end{thm}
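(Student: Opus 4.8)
The plan is to prove the three statements simultaneously by induction on the mutually recursive grammar of Definition~\ref{discrete:compact}, establishing (1) and (3) at every discrete type and (2) at every compact type; the interleaving is essential, since each function-type case consumes the other statements at smaller types. The base cases are immediate. At $o$ the space is the two-element set $2$: it is finite with decidable equality, giving (1) and (3), and it is a non-empty finite set of computable elements, hence searchable, giving (2). At $\iota$ the space is $\N$, which is trivially computably enumerable and has decidable equality.

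I would dispatch the product cases and the two routine function cases directly. For a discrete product $\sigma\times\tau$ the space is $C_\sigma\times C_\tau$; statement (1) follows by dovetailing the two enumerations from the induction hypothesis, and (3) by comparing components using the decidable equalities from the induction hypothesis. For a compact product, (2) is exactly Proposition~\ref{fin:prod}. For a discrete type $\tau\to\rho$ with $\tau$ compact and $\rho$ discrete, statement (3) follows the pattern of Corollary~\ref{decidable:equivalence}: by the induction hypothesis $\rho$ has decidable equality, and $\tau$ is compact, so its total elements are searchable by the induction hypothesis for (2) and hence exhaustible by Lemma~\ref{implicit}; two total functions $f,g$ are then compared by evaluating $\forall x.\,(f(x) == g(x))$, whose body is decidable and is defined on the total elements of $\tau$.

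For a compact type $\tau\to\rho$ with $\tau$ discrete and $\rho$ compact, I would obtain (2) by exhibiting the total functionals as a computable image of a countable product, combining Theorem~\ref{searchable:tychonoff} with Proposition~\ref{image}. Because $\tau$ is discrete, every function $C_\tau\to C_\rho$ is continuous, so at the Kleene--Kreisel level this space is just $\prod_{d\in C_\tau}C_\rho$. Using an enumeration $d_0,d_1,\dots$ of $C_\tau$ from (1) and the decidable equality from (3), I would define a computable map sending a sequence $\alpha\in\domain{\rho}^{\myomega}$ to the functional whose value at a total $x$ is $\alpha_j$ for the first $j$ with $d_j$ equal to $x$, the first-occurrence convention guaranteeing extensionality even when the enumeration repeats. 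Each factor $\total{\rho}$ is searchable by the induction hypothesis for (2), so $\prod_i\total{\rho}$ is searchable by Theorem~\ref{searchable:tychonoff}; this map carries it onto the total functionals of type $\tau\to\rho$, which are therefore searchable by Proposition~\ref{image}. The only care needed is to check Scott-continuity and totality of the constructed functional, including its behaviour on partial inputs.

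The genuinely hard case, which I expect to be the main obstacle, is statement (1) for a discrete type $\tau\to\rho$ with $\tau$ compact and $\rho$ discrete: computably enumerating the total functions from a compact space to a discrete one. Such a function is locally constant with finite image and is therefore pinned down by a finite amount of data; this is a uniform-continuity phenomenon whose computational content is supplied by Theorem~\ref{uniform:cont} and Corollary~\ref{fan}. I would enumerate candidate finite descriptions --- a modulus $n$ together with a labelling, by elements of $C_\rho$, of the finitely many classes of $C_\tau$ that agree to within information $n$ --- use the searchability of $C_\tau$ from the induction hypothesis to decide which candidates extend to a genuine total functional, and note that every total functional is captured by the description read off from its own modulus. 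The difficulty is precisely that Theorem~\ref{uniform:cont} and Corollary~\ref{fan} are phrased for the product domain $\D^{\myomega}$, whereas a general compact $\tau$ need not be literally of that form; I therefore expect to have to first lift the uniform-continuity and finite-description principle to an arbitrary compact $C_\tau$ --- which is legitimate because it is searchable, hence compact, by the induction hypothesis --- and only then read off the enumeration.
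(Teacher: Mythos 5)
Your handling of statements (2) and (3), and of the base and product cases, is essentially the paper's own proof: (2) at function types is Theorem~\ref{searchable:tychonoff} applied with the aid of the enumeration from (1), and (3) is the argument of Corollary~\ref{decidable:equivalence}. The problem is the case you yourself flag as the main obstacle, statement (1) at a discrete type $\tau \to \rho$ with $\tau$ compact: here you miss the idea the paper uses, and the substitute you propose contains a step that fails. The paper's proof of (1) is a one-liner covering all discrete types at once, with no case analysis: the Kleene--Kreisel density theorem gives a computable sequence whose shadow is dense in $C_\sigma$, and a discrete space has no proper dense subset, so this sequence already enumerates all of $C_\sigma$. No uniform continuity, no fan functional, no finite descriptions.

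The failing step in your substitute is the decidability claim. Even after one has represented $C_\tau$ as a computable quotient $q \colon 2^\omega \to C_\tau$ of the Cantor space (this representation itself has to be built --- ``agreeing to within information $n$'' has no intrinsic meaning for a general compact type), a candidate description is a modulus $n$ together with a table assigning elements of $C_\rho$ to the $2^n$ cylinders, and ``extends to a genuine total functional'' means that the table factors through $q$: for all $\alpha, \beta \in 2^\omega$, if the table gives different values on $(\alpha_0,\dots,\alpha_{n-1})$ and $(\beta_0,\dots,\beta_{n-1})$, then $q(\alpha) \neq q(\beta)$. The consequent is an apartness statement in the compact space $C_\tau$, and equality there is in general not decidable but only co-semi-decidable (decidable equality is exactly what statement (3) provides for \emph{discrete} types; for $C_{\iota \to o} = 2^\omega$ the diagonal is closed but not open). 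Hence the predicate you would feed to the search functional for $2^\omega$ is not defined on $2^\omega$ in the sense of Definition~\ref{definedon}, and searchability gives you no way to \emph{decide} which candidates extend; the factoring condition is genuinely only semi-decidable, since it amounts to disjointness of images of compact sets. One can repair this --- filter with the semi-decision procedure by dovetailing, and evaluate the factored functional at a total argument by a further dovetailing argument --- but that is a different and substantially longer argument, drawing on machinery in the spirit of Sections~\ref{criteria} and~\ref{characterization} (which come after this theorem), and none of it appears in your sketch. The density-theorem observation makes all of it unnecessary.
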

\begin{proof}
  By induction on the definitions of discrete and compact type.  The
  first condition holds by the Kleene--Kreisel density theorem, which
  gives a computable dense sequence of~$C_\sigma$, and by the fact
  that $C_\sigma$ is discrete.  For the second condition, use
  Theorem~\ref{searchable:tychonoff} with the aid of the first
  condition, and, for the third one, use the argument of
  Corollary~\ref{decidable:equivalence}.
\end{proof}

We conclude this section with a natural notion that plays a
fundamental role in our investigation of exhaustible and searchable
sets and their relationship. Let $D=D_{\sigma}$ and $D'=D_{\sigma'}$
for types~$\sigma$ and~$\sigma'$.
\begin{defi} \label{def:entire} We say that a set $K \subseteq
  D$ is \emph{entire} if it consists of total elements and is
  closed under total equivalence. \qed
\end{defi}
Notice that if $p$ is total then it is defined on every entire set.
If $p$ is not total and $K$ is not entire, but if $p$ is defined
on~$K$, then $p(x) = p(x')$ for all $x \sim x'$ in $K$, because if $x
\sim x'$ then $x$ and $x'$ are bounded above and hence so are $p(x)$
and $p(x')$, which then must be equal as they are non-bottom by
definition.  But if $x \in K$ and $x' \sim x$ for $x'$ outside $K$, it
doesn't follow that $p(x') \ne \bot$ (consider e.g.\ $K=\{\lambda
i.\True\}$ for $\sigma=\iota \to o$ and $p(\alpha)=\alpha(\bot)$).

\medskip
The following closure properties of entire sets are easily verified:
\begin{enumerate}
\item If $K \subseteq \D$ and $K' \subseteq \D'$ are entire, so is $K
  \times K' \subseteq \D \times \D'$
\item If $K_i$ is a sequence of entire subsets of $\D$, then
$\prod_i K_i$ is an entire subset of $\D^\myomega$.
\end{enumerate}
\begin{defi}
  The image of an entire set by a total function doesn't need to be
  entire, but it consists of total elements, and hence its closure
  under total equivalence is entire. We refer to this as its
  \emph{entire image}. (Thus, entire images are defined for total
  functions and entire sets only.) \qed
\end{defi}

\begin{prop} \label{image:bis} Exhaustible and searchable sets are
  closed under the formation of computable entire images.
\end{prop}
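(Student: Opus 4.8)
The plan is to reduce to Proposition~\ref{image} by observing that a predicate defined on the entire image cannot distinguish between total elements that are related by~$\sim$, so that searching or quantifying over the ordinary image $f(K)$ already yields the correct answer for its $\sim$-closure. Accordingly, I would fix an entire set $K \subseteq \D$ that is exhaustible (resp.\ searchable) and a computable total function $f \colon \D \to \D'$, and let $L$ denote the entire image, that is, the $\sim$-closure of $f(K)$ in $\total{\sigma'}$. Since $f$ is total and $K$ consists of total elements, we have $f(K) \subseteq L \subseteq \total{\sigma'}$.

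Next I would record the key observation that drives the reduction: if $q \in (\D' \to \pBool)$ is defined on $L$, then $q$ takes the same value on any two $\sim$-equivalent elements of $L$. This is exactly the bounded-above argument given after Definition~\ref{def:entire}, applied to the entire set $L$: equivalent total elements are bounded above, hence so are their images under the monotone map $q$, and two non-bottom booleans that are bounded above must coincide. Two consequences follow. First, for every $x \in K$ we have $f(x) \in f(K) \subseteq L$, so $\lambda x.\, q(f(x))$ is defined on $K$. Second, for every $y \in L$ there is $x \in K$ with $y \sim f(x)$, and since $q$ is defined on $L$ this gives $q(y) = q(f(x))$.

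From these I would assemble the two halves of the correctness argument using the very formulas of Proposition~\ref{image}. For quantification, the observation yields that $q(y)=\True$ for all $y \in L$ iff $q(f(x))=\True$ for all $x \in K$, and that $q(y)=\False$ for some $y \in L$ iff $q(f(x))=\False$ for some $x \in K$; hence $\forall_L(q) = \forall x \in K.\, q(f(x))$ meets the specification for $L$. For selection, the $\True$-case of this equivalence, together with $f(K)\subseteq L$, shows that $\e_L(q) = f(\e_K(\lambda x.\, q(f(x))))$ always lands in $f(K)\subseteq L$ and returns a genuine witness whenever one exists in $L$. Computability is inherited verbatim from Proposition~\ref{image}, since these are the same expressions built from the computable $\forall_K$ (resp.\ $\e_K$) and $f$.

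The only delicate point, and the one I expect to be the main obstacle, is precisely the passage from $f(K)$ to its strictly larger $\sim$-closure $L$: a priori a predicate defined on $L$ might behave differently on a new point $y \in L \setminus f(K)$, so that search over $f(K)$ could overlook a witness or a counterexample living only outside $f(K)$. The observation above rules this out, because definedness of $q$ on all of $L$, together with entireness of $L$, forces $q$ to be constant on each $\sim$-class meeting $L$; this is where the hypothesis that $q$ is defined on the whole entire image — not merely on $f(K)$ — is essential. With that in hand, everything else is a direct reuse of the algorithms and correctness proof of Proposition~\ref{image}.
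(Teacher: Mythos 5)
Your proposal is correct and matches the paper's own proof essentially step for step: both reuse the formulas $\forall_L(q)=\forall x\in K.\,q(f(x))$ and $\e_L(q)=f(\e_K(\lambda x.\,q(f(x))))$ from Proposition~\ref{image}, and both justify the passage from $f(K)$ to its $\sim$-closure~$L$ by the bounded-above argument showing that a predicate defined on~$L$ is constant on total-equivalence classes. If anything, you are more explicit than the paper, which invokes that observation silently in the step ``$l \sim f(x)$, hence $q(f(x))=\False$'' and dismisses the selection case with ``the argument for searchable sets is similar.''
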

\begin{proof}
For given $f \colon \D \to \D'$ and $K \subseteq \D$ exhaustible,
consider the quantification functional
\licsmath{\forall_{f(K)}(q)=\forall x \in K.q(f(x)).}
defined in the proof of Proposition~\ref{image}.
If $f$ is total and $K$ is
entire with entire image $L$, then we can take
$\forall_{L}=\forall_{f(K)}$. To verify this, let $q$ be defined on
$L$. Then $q$ is defined on $f(K) \subseteq L$, and hence if
$q(l)=\True$ for all $l \in L$, then $\forall_{L}(q)=\True$. If, on the
other hand, $q(l)=\False$ for some $l \in L$, then $l \sim f(x)$ for
some $x \in K$. But then $q(f(x))=\False$, and so
$\forall_{T}(q)=\False$, which concludes the verification.
The argument for searchable sets is similar. 
\end{proof}

\begin{defi} \label{semidecidable} \leavevmode Let $\Sierp = \{
  \bot, \top\}$ by the Sierpinski domain and $F \subseteq D=D_\sigma$
  be entire.
  \begin{enumerate}
  \item $F$ is \emph{decidable} if there is
    a total computable map $\psi_F \colon D \to \pBool$ such that, for
    all total $x \in D$, $\psi_F(x)=\True$ iff $x \in F$.

  \item $F$ is \emph{semi-decidable} if
    there is a computable map $\chi_F \colon D \to \Sierp$ such that,
    for all total $x \in D$, $\psi_F(x)=\top$ iff $x \in F$.

  \item $F$ is \emph{co-semi-decidable} if its complement in
    $T=T_\sigma$ semi-decidable. \qed
  \end{enumerate}
\end{defi}
Notice that the functions $\psi_F$ is not uniquely determined by $F$,
because its behaviour is specified on a subset of~$D$, but that
$\chi_F$ is uniquely determined by~$F$. Notice also 
$F$ is decidable if and only if it is decidable on~$T$ in the sense of
Definition~\ref{decidableonk} with $K=T$.

\section{Compactness of exhaustible sets} \label{criteria}

A notion analogous to exhaustibility, with the Sierpinski
domain~$\Sierp$ playing the role of the boolean
domain~$\pBool$, is considered in~\cite{escardo:barbados}.  A crucial
fact, formulated here as Lemma~\ref{crucial:generalized}, is that the
(now unique) quantification functional $\forall_K \colon (\D \to \Sierp)
\to \Sierp$ is continuous iff the set~$K$ is compact in the Scott
topology of~$\D$.  Hence, because computable functionals are
continuous, Sierpinski-exhaustible sets are compact, and so Sierpinski
exhaustibility is seen as articulating an algorithmic version of the
topological notion of compactness. The computational idea is that,
given any \emph{semi-decidable} property of $\D$, one can
\emph{semi-decide} whether it holds for all elements of~$K$. Closure
properties analogous to the above are established for Sierpinski
exhaustibility in~\cite{escardo:barbados}. % (and redeveloped from a
% purely operational point of view in~\cite{escardo:ho}).

The present investigation can be seen as a natural follow-up of that
work that arises by asking what changes if one moves from
semi-decision problems to decision problems.  One significant change
is that continuity of a quantification functional $\forall_K \colon
(\D \to \pBool) \to \pBool$ doesn't entail the compactness of $K$ in
the Scott topology any longer:

\pagebreak[3]
\begin{examples} \leavevmode
\label{counter:example} 
\begin{enumerate}
\item \label{counter:example:1} \emph{There are exhaustible sets that
    fail to be compact in the Scott topology.}

  \medskip
  \noindent
  By~\cite{scott:datatypes,plotkin:tomega}, any second-countable
  $T_0$~space, e.g.\ the real line~$\R$, can be embedded into the
  domain $\D=\pBool^\myomega$ under the Scott topology. But $\R$ is a
  connected space, which is equivalent to saying that every continuous
  boolean-valued map defined on it is constant. Hence a predicate $p
  \in (\D \to \pBool)$ is defined on $\R$ iff it is constant on~$\R$.
  Therefore $\R$ is trivially exhaustible: $\forall_{\R}(p)=p(0)$.
  But it is not compact.

  Notice also that any space embedded into the total elements of
  $\pBool^\myomega$ must be totally disconnected, and hence any
  embedding of $\R$ into $\pBool^\myomega$ must assign non-total
  elements of $\pBool^\myomega$ to some real numbers. One may suspect
  that if such embeddings are ruled out, this problem would disappear.
  But this is not the case, as the next example shows.

\item \label{counter:example:2} \emph{There are exhaustible sets of
    total elements that fail to be Scott compact.}

  \medskip
  \noindent
  In fact, there is a trivial and pervasive counter-example. Let $f
  \in ((\pN \to \pN) \to \pN)$ be total.  Then the total equivalence
  class $K$ of $f$, as is well known and easy to verify, doesn't have
  minimal elements, and hence cannot be compact in the Scott topology.
  But it is exhaustible with $\forall_K(p)=p(f)$.  \qed
\end{enumerate}
\end{examples}

One may feel somewhat cheated by the second counter-example, because
although the set $K$ is not Scott compact, it is generated by the
singleton $\{f\}$, which is Scott compact, and because we took
$\forall_K$ to be~$\forall_{\{f\}}$ (cf.\ the proof of
Proposition~\ref{image:bis}).
Lemma~\ref{lemma:criterion}(\ref{lemma:criterion:3}) below shows that
any counter-example is generated by a Scott compact set in a similar
fashion.  In any case, although exhaustible sets do fail to be compact
in the Scott topology, if they consist of total elements then they are
compact in the Kleene--Kreisel topology.  In order to formulate and
prove this, we need some definitions. We now rely on
Section~\ref{kk:background}.
\begin{defi} 
  Let $\sigma$ be a type, $\D=\domain{\sigma}$, $\T=\total{\sigma}$,
  $\C=\kk{\sigma}$ and $\qq = \quo{\sigma} \colon \total{\sigma} \to
  \kk{\sigma}$. 
\begin{enumerate}
\item By the \emph{shadow} of a set $K \subseteq \T$ we mean its
  $\qq$-image in~$\C$. 
  Similarly, by the shadow of an element $x \in T$ we mean its
  $\qq$-image $\qq(x)$ in $\C$.

\item A set $K \subseteq T$ is called \emph{Kleene--Kreisel compact}
  if its shadow is compact. \qed
\end{enumerate}
\end{defi}
Recall that the \emph{Cantor space} is the set $\Bool^\omega$ of
maximal elements of $\pBool^\omega$.
\begin{rem}
  Sometimes, for example for the implementation of the product
  functional defined in Section~\ref{building} in the language PCF,
  which lacks countable powers, one works with the Cantor space within
  the function space $\pBool^\pN$. The Cantor space is homeomorphic to
  the subspace of total strict functions $\alpha \in \pBool^\pN$,
  where $\alpha$ is strict if $\alpha(\bot)=\bot$.  It is also
  homeomorphic to the quotient of the set of \emph{all} total elements
  of~$\pBool^\pN$. But notice that the set of \emph{maximal} elements
  of $\pBool^\pN$ is \emph{not} homeomorphic to the Cantor
  space.  This is because the two non-strict elements $\lambda
  i.\False$ and $\lambda i.\True$ are finite (or order compact), and
  hence isolated in the relative Scott topology (meaning that the two
  corresponding singletons are open), and hence the maximal elements
  have a topology strictly finer than that of the Cantor space, as
  there are no isolated points in the Cantor space. As is well known
  in topology, no compact Hausdorff topology can have another compact
  Hausdorff topology as a strict refinement. \qed
\end{rem}

Every (computationally) exhaustible set is topologically exhaustible
in the sense of the following definition, because computable maps are
continuous.
\begin{defi} \label{topologically:exhaustible} \leavevmode
We say that a set $K \subseteq D$ is \emph{topologically
    exhaustible} if there is a continuous map $\forall_K \in ((D \to
  \pBool) \to \pBool)$ satisfying the conditions of
  Definition~\ref{exhaustible:def}. \qed
\end{defi}
\noindent The following is our main tool in the constructions and
proofs of correctness and termination of algorithms developed in
Sections~\ref{characterization}--\ref{arzela:ascoli}.  Its proof relies on
Sections~\ref{compactly} and~\ref{hyland}.  \pagebreak[3]
\begin{lem} \label{lemma:criterion} \leavevmode 
  \begin{enumerate}
  \item \label{lemma:criterion:1} Any topologically
    exhaustible set of total elements is Kleene--Kreisel compact.

  \item \label{lemma:criterion:2} Any non-empty, Kleene--Kreisel
    compact entire set is an entire continuous image of the Cantor
    space and hence is topologically exhaustible.

%     {\em (The empty set is trivially exhaustible, as we have already
%       seen, and hence all Kleene--Kreisel compact entire sets are
%       topologically exhaustible.)}

  \item \label{lemma:criterion:3} Any Kleene--Kreisel compact entire
    set has a Scott compact subset with the same shadow.
  \end{enumerate}
\end{lem}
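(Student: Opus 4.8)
The plan is to prove the three parts of Lemma~\ref{lemma:criterion} in turn, using the topological machinery of Sections~\ref{compactly} and~\ref{hyland}. For part~(\ref{lemma:criterion:1}), suppose $K \subseteq \T$ is topologically exhaustible, so there is a continuous $\forall_K \colon (\D \to \pBool) \to \pBool$ meeting the conditions of Definition~\ref{exhaustible:def}. I would first reduce the compactness of the shadow $\qq(K)$ in $\C$ to checking compactness in the zero-dimensional reflection $\Z\C$, which is legitimate by Lemma~\ref{zero:compact}, since the two spaces have the same compact subsets. By Lemma~\ref{crucial:generalized} applied to the $k$-space $\Z\C$, it suffices to show that the Sierpinski-valued quantifier $\forall_{\qq(K)} \colon \Sierp^{\Z\C} \to \Sierp$ is continuous. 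The base of clopen sets of $\Z\C$ gives, via Lemma~\ref{clopen:extension}, a supply of total predicates $p \in (\D \to \pBool)$ that decide membership in a given clopen $U$ on the total elements. The key move is to feed such $p$ into the given continuous $\forall_K$ and thereby express $\forall_{\qq(K)}$ in terms of $\forall_K$ precomposed with the clopen-extension map; since $\forall_K$ is continuous and the clopen sets form a base for the (zero-dimensional) topology, continuity of $\forall_{\qq(K)}$ follows. I expect this translation between the boolean quantifier on predicates over $\D$ and the Sierpinski quantifier on open sets of $\Z\C$ to be the main obstacle, precisely because it is where Lemmas~\ref{clopen:extension}, \ref{zero:compact}, and~\ref{crucial:generalized} must be chained together correctly.

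For part~(\ref{lemma:criterion:2}), suppose $K$ is a non-empty Kleene--Kreisel compact entire set, so its shadow $S = \qq(K)$ is a non-empty compact subset of $\C$. I would invoke the fact that $\C = \kk{\sigma}$ is a (Hausdorff) $k$-space whose compact subsets are metrizable, or at least that $S$, being a compact Hausdorff subspace of a Kleene--Kreisel space, is a compact metrizable space. By the classical topological fact that every non-empty compact metric space is a continuous image of the Cantor space $\Bool^\omega$, there is a continuous surjection $g \colon \Bool^\omega \to S$. Lifting this through the representation $\qq$ and using that the Cantor space is searchable (hence topologically exhaustible), I would then transport exhaustibility along $g$: the functional $\forall_S(q) = \forall \alpha \in \Bool^\omega.\, q(g(\alpha))$ witnesses topological exhaustibility of $S$, and pulling back to $K$ via $\qq$ gives topological exhaustibility of $K$ in the sense of Definition~\ref{topologically:exhaustible}. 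This is essentially the topological counterpart of Proposition~\ref{image:bis}; the phrase ``entire continuous image of the Cantor space'' is then read off directly from the surjection $g$ composed with a section of $\qq$.

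For part~(\ref{lemma:criterion:3}), let $K$ be Kleene--Kreisel compact and entire, with shadow $S = \qq(K)$ compact in $\C$. I want a Scott compact subset $K_0 \subseteq \T$ with $\qq(K_0) = S$. Here I would exploit the representation map $\qq\colon \T \to \C$ together with properties of the Scott topology on $\D$ and the density of $\T$ in $\D$. The idea is to choose, continuously or at least by a compactness argument, a set of representatives: since $S$ is compact and $\qq$ is a quotient surjection, one can form the preimage $\qq^{-1}(S) \cap \T$ and cut it down to a Scott compact piece whose image is still all of $S$. Concretely, I expect to use that the Scott topology on $\D$ is locally compact and that a continuous section or a suitable saturation under the way-below relation yields a Scott compact set mapping onto $S$; alternatively, one composes the Cantor-space surjection from part~(\ref{lemma:criterion:2}) with a computable (hence continuous) selection into $\T$, and takes $K_0$ to be the image of the Scott compact Cantor space under this map. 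The main technical delicacy will be ensuring that the chosen representatives form a Scott compact set rather than merely a continuous image under $\qq$ of something compact; I anticipate this is where the interplay between the quotient topology defining $\C$ and the genuine Scott topology on $\T \subseteq \D$ must be handled with care.
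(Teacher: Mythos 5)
Your parts (2) and (3) follow essentially the paper's route: compact subsets of $\kk{\sigma}$ are countably based (equivalently, being compact Hausdorff, metrizable), every non-empty compact metrizable space is a continuous image of the Cantor space, topological exhaustibility transfers along entire continuous images, and the Scott compact subset in (3) is obtained as the image of the Scott compact Cantor space $2^\omega \subseteq \pBool^\omega$ under a domain-level map. One correction, though: the domain-level map is not ``$g$ composed with a section of $\qq$'' --- the quotient $\qq \colon \total{\sigma} \to \kk{\sigma}$ has no continuous section in general. What one actually uses is that, by Hyland's characterization of $\kk{\tfunction}$ as the full exponential in $k$-spaces, \emph{every} continuous map from the Cantor space to $\kk{\sigma}$ has a representative $\pBool^\myomega \to \domain{\sigma}$; the entire image of the Cantor space under any such representative is $K$, and the representative's image of the Scott compact set $2^\omega$ is the Scott compact subset needed for (3).

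The genuine gap is in part (1), exactly at the step you yourself flag as the main obstacle. First, a technical error: Lemma~\ref{crucial:generalized} cannot be applied to $\Z\C$, because $\Z\C$ is not a $k$-space --- the proof of Lemma~\ref{zero:compact} shows $\K\Z\C = \C$, and by Schr\"oder's result $\C$ is not zero-dimensional, so $\Z\C \neq \C = \K\Z\C$. (This is repairable, since compactness in \emph{any} space is equivalent to Scott-openness of the open-neighbourhood filter, which is the portion of that lemma's proof not using the $k$-space hypothesis; or one can work with directed clopen covers directly, as the paper does.) Second, and more seriously, your ``key move'' --- expressing $\forall_{\qq(K)}$ as $\forall_K$ precomposed with ``the clopen-extension map'' --- is not available: Lemma~\ref{clopen:extension} produces, for each clopen $U$, \emph{some} total extension $p_U$ by an injectivity argument, and the assignment $U \mapsto p_U$ is neither canonical, nor monotone, nor continuous; for a directed family of clopens the predicates $p_U$ need not even be $\sqsubseteq$-comparable, since they are unconstrained off the total elements. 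So Scott continuity of $\forall_K$ cannot be applied to them, and ``continuity of $\forall_{\qq(K)}$ follows'' is precisely what remains unproved. The missing device in the paper's proof is a truncation: given a directed clopen cover $\mathcal{U}$ of $\qq(K)$, replace each $p_U$ by the partial predicate $q_U$ with $q_U^{-1}(\True) = p_U^{-1}(\True)$ and $q_U^{-1}(\False) = \emptyset$. These truncations form a directed family with continuous supremum $r$ satisfying $r^{-1}(\True) \supseteq K$ and $r^{-1}(\False) = \emptyset$, so $r$ is defined on $K$ and $\forall_K(r) = \True$; Scott continuity of $\forall_K$ then yields a single $U$ with $\forall_K(q_U) = \True$, hence $\forall_K(p_U) = \True$ by monotonicity, and since $p_U$ is total the specification of $\forall_K$ gives $p_U(x)=\True$ for all $x \in K$, whence $\qq(K) \subseteq U$ by the extension property. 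Without this truncation-and-monotonicity argument (or an equivalent), part (1) is not established.
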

\pagebreak[3]
\begin{proof}
  (\ref{lemma:criterion:1}): Let $K \subseteq \T$ be exhaustible. By
  Lemma~\ref{zero:compact} and the fact that clopen sets are closed
  under finite unions, to establish compactness of $\rho(K)$, it is
  enough to consider a directed clopen cover $\mathcal{U}$. By
  Lemma~\ref{clopen:extension}, for every $U \in \mathcal{U}$ there is
  a total $p_U \in (\D \to \pBool)$ with
  \licsmath{(\dagger) \quad \text{$\qq^{-1}(U) \subseteq
      p_U^{-1}(\True)$ and $\qq^{-1}(\C \setminus U) \subseteq
      p_U^{-1}(\False)$.}}
  Define predicates $q_U, r \in (\D \to \pBool)$ by
  \[ q_U^{-1}(\True)=p_U^{-1}(\True), \quad
    r^{-1}(\True)=\bigcup_{U \in \mathcal{U}} p_U^{-1}(\True), \quad
    q_U^{-1}(\False)=r^{-1}(\False)=\emptyset.
  \]  
  Then $q_U \sqsubseteq p_U$, the set $\{ q_U \mid U \in
  \mathcal{U}\}$ is directed, and $r=\bigsqcup_{U \in \mathcal{U}}
  q_U$. Because $\rho(K) \subseteq \bigcup \mathcal{U}$, we have that
  $K \subseteq r^{-1}(\True)$ and hence $\forall_K(r)=\True$. So, by
  continuity of $\forall_K$, there is $U \in \mathcal{U}$ with
  $\forall_K(q_U)=\True$, and hence with $\forall_K(p_U)=\True$ by
  monotonicity.  Let $x \in K$. Then $p_U(x)=\True$ by specification
  of~$\forall_K$ and the fact that $p_U$ is total and hence defined on
  $K$.  But then $\qq(x) \in U$, for otherwise $(\dagger)$ would
  entail $p_U(x)=\False$.  This shows that $\qq(K) \subseteq U$, and
  so $\qq(K)$ is compact.

\medskip

(\ref{lemma:criterion:2}): By e.g.\ \cite{escardo:lawson:simpson}, any
compact subset of $C$ is countably based (even though $C$ is not). But
any non-empty compact Hausdorff countably based space is a continuous
image of the Cantor space. Hence there is a continuous map
$\Bool^\omega \to C$ with image $\rho(K)$ for any entire set $K
\subseteq D$. Then the entire image of the Cantor space under any
representative $\pBool^\myomega \to D$ is~$K$.

(\ref{lemma:criterion:3}): This follows from the argument given in
(\ref{lemma:criterion:2}), because the Cantor space is Scott compact.
\end{proof}

%
%\noindent
\begin{rem} \label{introduction}
In particular, this gives a topological view of the computational fact
stated in the introduction that exhaustible sets of natural numbers
must be finite: all compact sets are finite in a discrete space. \qed
\end{rem}
% Lemma~\ref{lemma:criterion}(\ref{lemma:criterion:3}) says that any
% entire exhaustible set $K$ is the closure under total equivalence of
% some Scott compact set $S \subseteq K$. In
% Example~\ref{counter:example}(\ref{counter:example:2}), this subset
% can be taken to be a singleton. Notice also that any quantification
% functional for~$S$ is also a quantification functional for~$K$,
% because any predicate defined on~$K$ is defined on~$S$, which is
% precisely what makes
% Example~\ref{counter:example}(\ref{counter:example:2}) work.
Kleene--Kreisel compactness can be expressed as a finite-subcover
condition for the Scott topology as follows: An entire set $K
\subseteq D$ is Kleene--Kreisel compact if and only if every cover of
$K$ by Scott open sets that are closed under total equivalence has a
finite subcover.  This is a straightforward consequence of the fact
that the Kleene--Kreisel topology is the quotient by total equivalence
of the relative Scott topology on the total elements.
% \begin{lem} \label{kleene:kreisel:scott}

% \end{lem}
% \begin{proof}
%   Let $\sK$ be the shadow of $K$.

%   $(\Rightarrow)$: Let $\{ U_i \mid \in I\}$ be a cover of $K$ such
%   that each $U_i$ is Scott open and closed under total equivalence.
%   Then the shadow $\sU_i$ of the total elements of $U_i$ is open in
%   the Kleene--Kreisel topology, by construction of the quotient
%   topology, and they cover the shadow $\sK$ of~$K$, and hence there is
%   a finite set $J \subseteq I$ finite such that $\{\sU_j \mid j \in
%   J\}$ covers $\sK$. Because $K$ is entire and $U_j$ is closed under
%   total equivalence, $\{ U_j \mid j \in J\}$ covers $K$, as required.

%   $(\Leftarrow)$: Let $\{ \sU_i \mid i \in I\}$ be an open cover of
%   $\sK$.  Then the sets $U_i = \rho^{-1}(\sU_i)$ are Scott open,
%   because $\rho$ is continuous, and closed under total equivalence. By
%   the assumption, there is $J \subseteq I$ finite such that $\{ U_j
%   \mid j \in \}$ covers~$K$. Because $K$ is entire and $U_j$ is closed
%   under total equivalence, $\{ \sU_j \mid j \in J\}$ covers $\sK$, as
%   required.
% \end{proof}

We also remark that there is a natural topology on $D$, coarser than
the Scott topology, in which \emph{all} exhaustible sets are compact.
Part of the argument of
Lemma~\ref{lemma:criterion}(\ref{lemma:criterion:1}) shows that any
exhaustible set $K$ is compact in the coarsest topology on $D$ such
that all predicates $p \in (D \in \pBool)$ defined on $K$ are
continuous.  This is generated by directed unions of basic open sets
of the form $p^{-1}(\True)$ with $p$ as above, because such sets are
closed under finite unions and intersections.  This construction is
analogous to the zero-dimensional reflection of a topology, and
happens to coincide with it in the case considered in
Lemma~\ref{lemma:criterion}(\ref{lemma:criterion:1}), modulo
quotienting, and can also be compared with the weak topology in
functional analysis.

\section{Searchability  of exhaustible sets} 
\label{characterization}

We already know that every searchable set is exhaustible
(Lemma~\ref{implicit}).  This implication is uniform, in the sense
that there is a computable functional \[ ((\D \to \pBool) \to D) \to
((\D \to \pBool) \to \pBool)\] that transforms selection functionals
into quantification functionals, namely $\e \mapsto \lambda
p.p(\e(p))$. We now establish the converse for non-empty
entire sets, and some additional results. The fact that exhaustible
entire sets are Kleene--Kreisel compact, established in the previous
section, plays a fundamental role in the construction of the
algorithms
\begin{defi} \label{total:retract} We say that a set $S
  \subseteq \D = \D_{\sigma}$ is a \emph{retract up to total
    equivalence} if there is a function $r \in (D \to D)$ such that
\begin{enumerate}
\item $r(x) \in S$ for all total $x \in \D$,
\item $r(s) \sim s$ for all $s \in S$. \qed
\end{enumerate}
\end{defi}
In this case, $r$ is total, all elements of $S$ are total, and
$r(r(x)) \sim r(x)$ for all total~$x$. Notice that $r$ is
a retract up to total equivalence iff it is a total function and its
Kleene--Kreisel shadow $\rho(r) \colon C \to C$ is a retract in the
usual topological sense, where $C=C_{\sigma}$.
\begin{defi}
  We say that two entire sets $K \subseteq D = D_\sigma$ and $L
  \subseteq E = D_\tau$ are \emph{homeomorphic up to total
    equivalence} if there are total functions $f \in (D \to E)$ and $g
  \in (E \to D)$ such that $g(f(x)) \sim x$ and $f(g(y)) \sim y$ for
  all $x \in K$ and $y \in L$. \qed
\end{defi}
This is equivalent to saying that the shadow functions $\rho(f)$ and
$\rho(g)$ restrict to a (true) homeomorphism between the shadows
of~$K$ and~$L$. In this case, $L$ is the entire $f$-image of~$K$, and
$K$ is the entire $g$-image of~$L$.

\begin{thm} \label{ex:main} If $K \subseteq D=D_{\sigma}$ is a
  non-empty, exhaustible entire set then, uniformly in any
  quantification functional for $K$:
  \begin{enumerate}
  \item $K$ is searchable.
  \item $K$ is a computable entire image of the Cantor space.
  \item $K$ is computably homeomorphic to some entire exhaustible
    subset of the Baire domain~$\pN^\myomega$, up to total equivalence.
  \item $K$ is a computable retract up to total equivalence.
  \item  $K$ is co-semi-decidable.
  \end{enumerate}  
\end{thm}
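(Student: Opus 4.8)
The plan is to reduce everything to a single effective construction on the Baire domain and then transport the five conclusions back. First I would observe that, since computable functionals are continuous, $K$ is topologically exhaustible, so Lemma~\ref{lemma:criterion}(\ref{lemma:criterion:1}) makes $K$ Kleene--Kreisel compact; this compactness is exactly what will let me upgrade the purely topological content of Lemma~\ref{lemma:criterion}(\ref{lemma:criterion:2}) into a \emph{computable} and \emph{uniform} statement. Using Lemma~\ref{pure} I would assume without loss of generality that $\sigma$ is pure, and since exhaustible subsets of $\pN$ are finite the only interesting case is $\sigma = \tau \to \iota$; I would then check that all five properties transport along the computable retraction of Lemma~\ref{pure} (using Propositions~\ref{image} and~\ref{image:bis} for the image and searchability clauses, and composing the witnessing maps for the rest), noting that non-emptiness is preserved.

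Next I would embed $K$ into the Baire domain. The Kleene--Kreisel density theorem supplies a computable dense sequence $(t_i)$ in $\total{\tau}$, and the map $\Phi = \lambda f.\lambda i. f(t_i) \colon \domain{\sigma} \to \pN^\myomega$ is computable, total, and separates total elements up to $\totaleq{\sigma}$, because a continuous functional is determined by its values on a dense set. Hence the shadow of $\Phi$ restricts to a homeomorphism of $\rho(K)$ onto the shadow of its entire image $L$, which is the continuous image of a compact set and so compact; moreover $L$ is exhaustible uniformly, via $\forall_L(q) = \forall_K(\lambda f. q(\Phi(f)))$. This reduces the theorem to the corresponding statement for a non-empty, exhaustible, compact entire subset of $\pN^\myomega$.

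The heart of the argument is then an effective König's lemma: from $\forall_L$ one computes a selection functional for $L$ and a computable surjection $\pBool^\myomega \to \pN^\myomega$ with entire image $L$. The idea is that compactness bounds each coordinate, even conditionally on a prefix: the least $b$ with $\forall_L(\lambda\beta.(\beta =_n w)\implies \beta_n < b)=\True$ is computable by minimization, the quantified predicate being total, so $L$ sits inside a finitely branching tree whose branching is discovered on the fly. One then runs the dynamic analogue of the product search of Theorem~\ref{searchable:tychonoff} level by level through these computed finite branchings, obtaining $\e_L$ and, via a binary coding of the branches, the surjection from the Cantor space. Transporting through $\Phi$ and a reconstruction map $\Psi \colon \pN^\myomega \to \domain{\sigma}$ (supplied by the density theorem, rebuilding $f$ from its profile using the uniform modulus afforded by compactness) yields clauses~(1) and~(2) directly; clause~(3) from the pair $(\Phi,\Psi)$; clause~(4) by composing $\Psi$ with the selection-functional projection of $\pN^\myomega$ onto $L$ and then with $\Phi$; and clause~(5) by semi-deciding $\Phi(x) \notin L$, an open condition detected at a finite stage because the finite prefixes occurring in $L$ are computably enumerable.

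The main obstacle is precisely this effective König's lemma: converting the non-constructive compactness of $\rho(K)$ into the per-coordinate bounds and the recursive search. Everything downstream is bookkeeping once the branching of $L$ is computed from $\forall_L$ and the search of Theorem~\ref{searchable:tychonoff} is adapted to finite sets determined dynamically rather than given in advance. The two genuinely delicate points are ensuring that $\Psi$ is actually computable --- which is where the \emph{uniform} modulus from compactness, as opposed to mere pointwise continuity, is indispensable --- and arranging that the retract of clause~(4) fixes $K$ up to $\totaleq{\sigma}$, rather than merely landing in it.
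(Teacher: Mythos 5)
Your proposal is correct and follows essentially the same route as the paper's own proof: compactness of $K$ via Lemma~\ref{lemma:criterion}, reduction to pure types via Lemma~\ref{pure}, embedding into the Baire domain by the Kleene--Kreisel density theorem, an effective finitely-branching-tree argument there (the paper uses per-coordinate suprema plus a tree-following retraction where you use prefix-conditional bounds, an interchangeable variant), and a reconstruction map computed from the exhaustibility of the prefix-conditioned sets $K_n^\alpha$ together with the agreement modulus guaranteed by compactness (the paper's Lemma~\ref{c}). The two delicate points you flag --- computability of $\Psi$ and the retract fixing $K$ up to total equivalence --- are exactly the ones the paper resolves with its explicit algorithm for $E$ and its correctness proof.
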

In particular, after the theorem is proved, one can w.l.o.g.\ work
with total predicates rather than predicates defined on~$K$, as for any
predicate $p \in (D_{\sigma} \to \pBool)$ defined on~$K$ one can
uniformly find a total predicate that agrees with $p$ on $K$, by
composition with the retraction.

\pagebreak[3] 
\begin{proof}
  We proceed by cases, of increasing generality, on the type of~$K$.
  The case $K \subseteq \pN$ is trivial and is implicitly used in the
  case $K \subseteq \pN^\myomega$, which in turn is used in the next
  case $K \subseteq (D \to \pN)$. The general case $K \subseteq D$ is
  reduced to this last case via retracts using Lemma~\ref{pure}.

\medskip

\emph{(i)} Case $K \subseteq \pN$: We can define
\[
\e_K(p) = 
\begin{cases}
  \mu n.\exists m \in K. n = m \wedge p(n) & \text{if $\exists n \in K.p(n),$} \\
  \mu n.\exists m \in K. n = m & \text{otherwise.}
\end{cases}
\]
This construction defines $\e_K$ uniformly in
$\exists_K$. We could now easily show that $K$ satisfies the other
conditions of the theorem, but this won't be required for our proof,
as this will follow in later cases.

\medskip

For future use, notice that if $K \subseteq \pN$ is entire and
exhaustible, then the supremum of the finite set~$K$ (which is zero
if $K$ is empty and the largest element of $K$ otherwise) can be
computed uniformly in any quantification functional for~$K$ as
\[
\sup K  = \mu m. \forall n \in K. n \le m.  
\] 
Hence, the finite enumeration $e_n$ of
the elements of $K$, in ascending order, 
for $0 \le n < \operatorname{cardinality}(K)$, is
uniformly computable as
\[
e_n = \mu y.\exists m \in K. \forall i<n. m \ne e_i \wedge m = y. 
\]
We stop when we find $n$ such that $e_n = \sup K$, and we include
$e_n$ iff $\exists m \in K. m = \sup K$.

\medskip

\emph{(ii)}  Case $K \subseteq \pN^\myomega$.
We first argue that we can find some $\alpha \in K$, uniformly in
$\exists_K$, by the following algorithm defined by course-of-values
induction on $n$:
\[
\alpha_n = \mu k.\exists \beta \in K.  \alpha =_n \beta \wedge \beta_n = k.
\]
Recall that we defined 
\[
\alpha =_n \beta \iff \forall 
i<n.\beta_i = \alpha_i
\]
in the paragraph preceding Theorem~\ref{uniform:cont}.  By
construction, for every $n$ there is $\beta \in K$ with $\alpha =_n
\beta$, and in particular $\alpha$ is total.  Because the shadow of
$K$ is compact, it is closed, and because $K$ is entire, $\alpha \in
K$, as required.  Then we can define, using
Proposition~\ref{prop:intersec} and the above algorithm to construct
$\alpha$ in both cases,
\[
\e_K(p) =
\begin{cases}
  \text{some $\alpha \in K \cap p^{-1}(\True)$} & \text{if $\exists \alpha \in K.p(\alpha)$,}\\
  \text{some $\alpha \in K$} & \text{otherwise.}
\end{cases}
\]
Again,
this construction defines $\e_K$ uniformly in $\exists_K$.

\medskip

We now show that $K$ is a computable retract up to total equivalence,
uniformly in any quantification functional for~$K$.  Define \[ r = r_K
\colon \pN^\myomega \to \pN^\myomega\] by course-of-values induction
on $n$:
\[
 r(\alpha)(n) = 
 \begin{cases}
   \alpha_n & \text{if $\exists \beta \in K. \beta =_n r(\alpha) \wedge \beta_n = \alpha_n$,} \\
   \mu m. \exists \beta \in K. \beta =_n r(\alpha) \wedge
     \beta_n = m & \text{otherwise.}
 \end{cases}
\]
Because the shadow of $K$ is closed, the finite prefixes of its
members form a tree whose infinite paths correspond to the elements
of~$K$. 
The above algorithm follows the infinite path $\alpha$ through the
tree, either for ever (always following the first case) or until the
path exits the tree (reaching the second case). If and when $\alpha$
exits the tree, we replace the remainder of $\alpha$ by the left-most
infinite branch of the subtree at which $\alpha$ exits the tree. Then
$r$ clearly satisfies the required conditions.  % (Incidentally, notice
% that if~$\alpha$ exits the tree at level~$n$, then $r(\alpha) =_k
% \alpha$ for all $k < n$ and $r(\alpha) \ne_{k} \alpha$ for all $k \ge
% n$.  So $r$ is non-expansive (and hence uniformly continuous) in the
% sense that $r(\alpha) =_n r(\beta)$ for all $\alpha =_n \beta$.)

\medskip

A semi-decision procedure for the complement of~$K$ is given by
\[
\alpha \not\in K \iff r(\alpha) \ne \alpha,
\]
using the fact that apartness of total elements of $\pN^\myomega$ is
semi-decidable.
% as $
% \alpha \ne \beta \iff \exists n. \alpha_n \ne \beta_n.
% $
(This is a computational version of the topological fact
that retracts of Hausdorff spaces are closed.) 

\medskip

We now show that $K$ is a computable entire image of the Cantor space.
For any $i$, the set $K_i = \{ \alpha_i \mid \alpha \in K \}$ is
exhaustible by Proposition~\ref{image} as evaluation at $i$ is
computable. It is enough to show that $\prod_i
\{0,1,\dots,\sup K_i\} \subseteq \pN^\myomega$ is an entire image of the
Cantor space by a computable map $t \colon \pBool^\myomega \to
\pN^\myomega$, because then $r \comp t$ has $K$ as its entire image
since $K$ is contained in that product. But this is straightforward:
at each stage $j$ of the computation of $t(\alpha)$, look at the next
$\lceil \log_2(\sup K_j) \rceil$ digits of the input~$\alpha$,
% (regarding $\False$ and $\True$ as digits $0$ and~$1$), 
compute the natural
number $f(j)$ represented by this finite sequence, and let
$t(\alpha)(j)=\min(\sup K_j,f(j))$.

\medskip

\pagebreak[3] \emph{(iii)} Case $K \subseteq (D \to \pN)$ where
$D=D_{\sigma}$ for an arbitrary type~$\sigma$: In order to reduce this
to case~\emph{(ii)}, we invoke the Kleene--Kreisel density
theorem, % (see e.g.~\cite{berger:total}),
to get a computable sequence $d \in D^\myomega$ such that the shadow
sequence $\rho(d_n)$ is dense in $C=\kk{\sigma}$.  Define
\[
  P \colon (D \to \pN) \to \pN^\myomega \\
\]
\[
  P(f) = \lambda n.f(d_n).
\]
We will define a total function $E=E_K$ in the other direction,
\[
  E \colon \pN^\myomega \to  (D \to \pN)
\]
such that
\[
R=E \comp P\colon (\D \to \pN) \to (\D \to \pN)
\]
exhibits $K$ as a retract of $(\D \to \pN)$ up to total equivalence.
Thus, for $f \in K$, one can recover the behaviour of $f$
at total elements from its behaviour on the dense sequence~$d$.
Because this implies that $K$ is the entire image of $P(K)$, and
because $P(K)$ is searchable by case~\emph{(ii)}, it will follow that
$K$ is searchable and an entire image of the Cantor space, because $E$
preserves total equivalence on~$P(K)$.

For $\alpha \in \pN^\myomega$ and $n \in \N$, define
\[
F_n^\alpha = \{ f \in (D \to \pN) \mid \forall i<n. f(d_i) = \alpha_i \}, \qquad\quad
K_n^\alpha = K \cap F_n^\alpha.
\]
Here we regard $\alpha$ as potentially coding the action of some $f$
on the set of elements $d_i$.  The set $F_n^\alpha$ is decidable on
$K$ uniformly in $\alpha$ and $n$, and hence $K_n^\alpha$ is uniformly
exhaustible by Proposition~\ref{prop:intersec}.  \pagebreak[3]
\begin{lem} \label{c} If $K \subseteq (D \to \pN)$ is a
  Kleene--Kreisel compact entire set, then for all total $\alpha \in
  \pN^\myomega$ and all total $x \in D$ there is $n$ such that $f(x) =
  f'(x)$ for all $f,f' \in K_n^\alpha$.
\end{lem}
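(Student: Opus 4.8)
The plan is to pass to shadows in the Kleene--Kreisel space, where the compactness hypothesis bites, and then combine a density argument (to pin down the ``limit'') with a finite-intersection-property argument (to propagate constancy back to a finite stage). Throughout, write $C = \kk{\sigma}$, so that each shadow $\rho(f)$ with $f \in K$ is a continuous map $C \to \N$, and $\rho(K)$ is a compact Hausdorff subspace of $\kk{\sigma \to \iota} = \N^{C}$.

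First I would describe the shadows of the sets $K_n^\alpha$. Put $G_n = \{\,\phi \colon C \to \N \mid \phi(\rho(d_i)) = \alpha_i \text{ for all } i < n\,\}$. Since each evaluation $\phi \mapsto \phi(\rho(d_i))$ is continuous (cartesian-closed evaluation) and $\N$ is discrete, $G_n$ is a finite intersection of preimages of points, hence clopen in $\kk{\sigma \to \iota}$. Using that $\rho(f)(\rho(d_i)) = f(d_i)$ for total $f$ and $d_i$, a direct check in both directions gives $\rho(K_n^\alpha) = \rho(K) \cap G_n$: if $\phi = \rho(f) \in \rho(K) \cap G_n$ then the representative $f \in K$ satisfies $f(d_i) = \alpha_i$ for $i < n$, so $f \in K_n^\alpha$. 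Thus the $\rho(K_n^\alpha)$ form a decreasing chain of clopen, in particular closed, subsets of the compact space $\rho(K)$, and $\bigcap_n \rho(K_n^\alpha) = \rho(K) \cap \bigcap_n G_n$ — it is important here to work with this explicit form rather than with $\rho(\bigcap_n K_n^\alpha)$, since the shadow operation need not commute with infinite intersections.

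Next I would observe that this limit intersection has at most one point. If $\phi, \psi \in \rho(K) \cap \bigcap_n G_n$, then $\phi$ and $\psi$ agree at every $\rho(d_i)$, hence on the dense subset $\{\rho(d_i) \mid i \in \N\}$ of $C$ supplied by the density theorem; as $\N$ is Hausdorff and $\phi, \psi$ are continuous, their equalizer is closed and dense, so $\phi = \psi$. Now consider $\mathrm{ev}_{\rho(x)} \colon \rho(K) \to \N$, $\phi \mapsto \phi(\rho(x))$, which is continuous and, $\rho(K)$ being compact and $\N$ discrete, has finite image $\{v_1,\dots,v_k\}$; the level sets $A_j = \mathrm{ev}_{\rho(x)}^{-1}(v_j)$ form a finite clopen partition of $\rho(K)$. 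I claim some $\rho(K_n^\alpha)$ lies in a single block. We may assume every $K_n^\alpha$ is non-empty, the conclusion being vacuous otherwise. Let $J_n = \{\, j \mid A_j \cap \rho(K_n^\alpha) \neq \emptyset\,\}$; since the $\rho(K_n^\alpha)$ decrease, $J_{n+1} \subseteq J_n$, so this decreasing chain of subsets of $\{1,\dots,k\}$ stabilises at some $J_N$. If $J_N$ held two distinct indices $j \neq j'$, then for every $n \geq N$ both $A_j \cap \rho(K_n^\alpha)$ and $A_{j'} \cap \rho(K_n^\alpha)$ would be non-empty closed subsets of the compact space $\rho(K)$; by the finite-intersection property their intersections over $n$ are non-empty, yielding points of $A_j$ and of $A_{j'}$ both lying in $\bigcap_n \rho(K_n^\alpha)$. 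As $A_j \cap A_{j'} = \emptyset$, these are two distinct points of a set with at most one point, a contradiction. Hence $J_N$ is a singleton, $\mathrm{ev}_{\rho(x)}$ is constant on $\rho(K_N^\alpha)$, and therefore $f(x) = \rho(f)(\rho(x))$ is the same value for every $f \in K_N^\alpha$; taking $n = N$ finishes the argument.

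The main obstacle I anticipate is organising the passage between $K$ and its shadow cleanly: one must verify the identity $\rho(K_n^\alpha) = \rho(K) \cap G_n$ and resist the temptation to identify $\bigcap_n \rho(K_n^\alpha)$ with the shadow of $\bigcap_n K_n^\alpha$, since only the former is controlled by the clopen constraints $G_n$ and by density. Once the limit intersection is shown to be at most a singleton, the remainder is routine compact-Hausdorff topology, and the stabilisation of $J_n$ together with the finite-intersection property does the rest.
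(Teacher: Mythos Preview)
Your argument is correct and shares the paper's overall structure: pass to shadows, use density of the sequence $(d_i)$ to see that $\bigcap_n \rho(K_n^\alpha)$ has at most one point, and then exploit compactness of $\rho(K)$ to pull constancy of evaluation at $\rho(x)$ back to a finite stage. The only real difference is in how the compactness step is executed. You partition $\rho(K)$ into the finitely many level sets of $\mathrm{ev}_{\rho(x)}$ and run a stabilisation--plus--finite-intersection-property argument on the index sets $J_n$. The paper is more direct: in the nonempty case it picks the unique $\g$ in $\bigcap_n \sK_n^\alpha$, notes that the single clopen set $B_{\g} = \{\f \mid \f(\rho(x)) = \g(\rho(x))\}$ contains that intersection, and invokes the standard fact that a decreasing sequence of compact sets whose intersection lies in an open set is eventually contained in it. The paper's route is a line or two shorter; yours is more self-contained, and your explicit verification that $\rho(K_n^\alpha) = \rho(K) \cap G_n$ with $G_n$ clopen supplies the justification for compactness of the $\sK_n^\alpha$ that the paper leaves tacit. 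Your caveat about $\rho$ not commuting with infinite intersections is harmless but in fact unnecessary here: because $K$ is entire and the constraints defining $F_n^\alpha$ are pointwise at total arguments, any $f \in K$ with $\rho(f) \in \bigcap_n G_n$ already lies in $\bigcap_n K_n^\alpha$.
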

\begin{proof}
  For any $\g \in \kk{\sigma \to \iota}$ the set $B_{\g} = \{ \f \in
  \kk{\sigma \to \iota} \mid \f(\x) = \g(\x) \}$ is clopen, where we
  write $\x = \qq(x)$. By density of $d_n$, the set $\bigcap_n
  \sK_n^\alpha$ has at most one element, where $\sK_n^\alpha$ denotes
  the shadow of $K_n^\alpha$. Hence if $\g \in \bigcap_n \sK_n^\alpha$
  then $\bigcap_n \sK_n^\alpha = \{ \g \} \subseteq B_{\g}$.  Because
  $\kk{\sigma \to \iota}$ is Hausdorff and because each $\sK_n^\alpha$
  is compact and $B_{\g}$ is open, there is $n$ such that already
  $\sK_n^\alpha \subseteq B_{\g}$. So for all $\f \in \sK_n^\alpha$
  one has $\f(\x) = \g(\x)$, and hence for all $\f,\f' \in
  \sK_n^\alpha$ one has $\f(\x) = \f'(\x)$.
\end{proof}
By Proposition~\ref{image:bis}, the entire $P$-image $L \subseteq
\pN^\myomega$ of $K$ is exhaustible.  Let $r = r_{L}$
% \colon (\pN \to \pN) \to (\pN \to \pN)$ 
be defined as in case~\emph{(ii)}, and define $E \colon \pN^\myomega \to
(D \to \pN)$ by
\[
E(\alpha)(x) = 
  \mu y. \exists f \in K_n^{r(\alpha)}.f(x)=y,
\]
where
%\begin{quote}
  $n$ is the least number such that $\forall f,f' \in
  K_n^{r(\alpha)}.f(x)=f'(x).$
%\end{quote}
  By exhaustibility of~$K_n^{r(\alpha)}$, this can be found uniformly
  in $\alpha$, and hence $E$ is computable uniformly in~$K$.

\pagebreak[3]
\begin{proof}[Proof of correctness of $E$.]
~

\emph{(a)~$E$ is total and maps $L$ into $K$.} 
Let $\alpha \in \pN^\omega$ be total.  Then $r(\alpha) \in L$, by
construction of $r$, and hence there is $g \in K$ with $r(\alpha) \sim
P(g)$, and so with $g \in K_n^{r(\alpha)}$ for any~$n$. Let $x \in D$
be total and $n$ be the least number such that $f(x)=f'(x)$ for all
$f,f' \in K_n^{r(\alpha)}$. Then $f(x)=g(x)$ for all $f \in
K_n^{r(\alpha)}$, and hence $E(\alpha)(x)=g(x)$.  Therefore $E(\alpha)
\sim g \in K$, and hence $E(\alpha) \in K$ as $K$ is entire, and in
particular $E$ is total.  By construction $E \sim E \comp r$, and
hence, because $r$ exhibits $L$ as a retract up to total equivalence
and $K$ is entire, the $E$-image of $L$ is $K$.

\medskip
\emph{(b)~If $f \in (D \to \pN)$ is total then $R(f)=E(P(f)) \in
    K$.} 
Because $P(f) \in L$.

\medskip
\emph{(c)~If $f \in K$ then $R(f) \sim f$.} 
Continuing from the proof of (a), for $\alpha=P(f)$ we have $r(\alpha)
\sim \alpha$ by construction of $r$, and hence for any $g \in K$ such
that $P(g) = r(\alpha)$ we have $g(d_i) = P(g)(i)=r(\alpha)(i) =
\alpha_i = P(f)(i) = f(d_i)$ and so $g \sim f$ by density, which shows
that $R(f)=E(P(f)) \sim f$, as required.
\end{proof}

\medskip

% [[ Complement of the shadow of K. ]] 
A semi-decision procedure for the complement of~$K$ is
given as in case \emph{(ii)},
\[
f \not\in K \iff R(f) \ne f,
\]
because 
$
f \ne f' \iff \exists n \in \N. f(d_n) \ne f'(d_n),
$
for total functions $f'$ and $f$ since $K$ is entire and $d$ is dense.

Because $E$ and $P$ are total, they induce computable Kleene--Kreisel
functionals $\sE = \rho(E) \colon \N^\N \to \N^{C}$ and $\sP = \rho(P)
\colon \N^{C} \to \N^\N$ where $C=C_{\sigma}$. If $\sK \subseteq
\N^{C}$ is the shadow of~$K$, then the restriction of $\sP$ to $\sK$
followed by the co-restriction to its image is a homeomorphism $\sK
\to \sP(\sK)$: abstractly because any continuous bijection of compact
Hausdorff spaces is a homeomorphism, and concretely because the
bi-restriction of $\sE$ is a continuous inverse.  Hence any
exhaustible subset of $(D \to \pN)$ is computably homeomorphic to the
shadow of some exhaustible subset of the Baire domain~$\pN^\myomega$,
up to total equivalence.

\medskip

\pagebreak[3] \emph{(iv)} General case. We derive this from the case
\emph{(iii)}. By Lemma~\ref{pure}, for any $D=D_{\sigma}$ there are
$D'=D_{\tau \to \iota}$ and computable $P \colon D' \to D$ and $E
\colon D \to D'$ such $R = E \comp P$ is a retraction up to total
equivalence and $T_{\sigma}$ is the entire image of $P$.  Let $K
\subseteq D$ be a non-empty, exhaustible entire set, and let $K'$ be
the entire $E$-image of $K$. Then $K$ is the entire image of $P(K')$,
and, because $K$ is entire, a predicate $p' \in (D' \to \pBool)$
defined on $K'$ holds for all $x' \in K'$ if and only if $p' \comp E$
holds for all $x \in K$.  Hence $K'$ is exhaustible with
$\forall_{K'}(p')=\forall_K(p'\comp E)$. By case \emph{(iii)} above,
$K'$ is searchable. Therefore $K$ is searchable by
Proposition~\ref{image}. Similarly, the other properties we need to
establish are closed under the formation of retracts and hence are
inherited from case~\emph{(iii)}.  

\medskip This concludes the proof of Theorem~\ref{ex:main}.
\end{proof}

\section{Ascoli--Arzela type characterizations of exhaustible sets} \label{arzela:ascoli}

We reformulate a theorem of Gale's~\cite{gale} that characterizes
compact subsets of function spaces (Theorem~\ref{gale:modified}).
This suggests a characterization of exhaustible entire sets
(Theorem~\ref{exhaustible:arzela}), whose topological version is
developed first (Theorem~\ref{exhaustible:arzela:topological}).  The
main idea is to replace a condition in Gale's theorem by a continuity
condition (Section~\ref{arzela:compact}), and then further replace it
by a computability condition (Section~\ref{arzela}). This method of
transforming topological theorems into computational theorems is the
main thrust of the paper~\cite{escardo:barbados}, which develops many
instances of computational manifestations of topological theorems.

\subsection{Topological version}
\label{arzela:compact}

The Heine--Borel theorem characterizes the compact subsets of
Euclidean space~$\R^n$ as those that are closed and bounded. The
Arzela--Ascoli theorem generalizes this to subsets of $\R^X$, where
$X$ is a compact metric space and $\R^X$ is the set of continuous
functions endowed with the metric defined by \[ d(f,g)=\max \{
d(f(x),g(x)) \mid x \in X\}.\] A set $K \subseteq \R^X$ is compact if
and only if it is closed, bounded and equi-continuous.
Equi-continuity of $K$ means that the functions $f \in K$ are
simultaneously continuous, in the sense that for every $x \in X$ and
every $\epsilon>0$, there is $\delta>0$ such that $d(x,x') < \delta
\implies d(f(x),f(x'))$ for all $x' \in X$ and all $f \in K$. The
Heine--Borel theorem is the particular case in which $X$ is the
discrete space $\{1,\dots,n\}$, for equi-continuity holds
automatically for any subset of $\R^X$ in this case.  The above metric
on $\R^X$ induces the compact-open topology. More general
Arzela--Ascoli type theorems characterize compact subsets of
spaces~$Y^X$ of continuous functions under the compact-open topology,
for a variety of spaces~$X$ and~$Y$, with a number of generalizations
or versions of the notion of equi-continuity, notably even continuity
in the sense of Kelley~\cite{kelley}.

Among a multitude of generalizations of the Arzela--Ascoli theorem,
that of Gale~\cite[Theorem 1]{gale} proves to be relevant concerning
exhaustibility of entire sets:
\pagebreak[3]
\begin{quote}
\em
  If $X$ and $Y$ are Hausdorff $k$-spaces with $Y$ regular, 
  a set $K \subseteq Y^X$ is compact 
  if and only if
  \begin{enumerate}
  \item $K$ is closed,
  \item the set $K(x)=\{f(x) \mid f \in K\}$ is compact for every $x \in X$,
  \item the set $\bigcap_{f \in K \cap F} f^{-1}(V)$ is open for every
    closed set $F \subseteq Y^X$ and for every open set $V \subseteq Y$.
  \end{enumerate}
\end{quote}
% (Unfortunately, we haven't been able to use the same names for the
% sets: notice that our sets $K,F$ correspond to the sets $F,Q$ in
% Gale's paper respectively.)
%
Gale didn't assume $Y$ to be a $k$-space and formulated this for the
compact-open topology, but his theorem holds for the exponential
topology if we require $Y$ to be a $k$-space.  Regarding compactness,
we have already mentioned that a Hausdorff space has the same compact
sets as its $k$-coreflection, and that the exponential topology is the
$k$-reflection of the compact-open topology. Although there are more
closed sets in the exponential topology, Gale's argument works with
closedness of $K$ in the exponential topology. This follows
from the general considerations of Kelley~\cite[Chapter 7]{kelley}.

The last condition is a version of equi-continuity.  Because $X$ is
not assumed to be compact, the set $K$ cannot be globally bounded in
any sense, but it is pointwise bounded in the sense of the second
condition.  This gives a characterization of compact subsets of
Kleene--Kreisel spaces of the form $\N^C$ and in particular of
Kleene--Kreisel spaces of pure type, because $\N$ is regular. However,
as discussed in Section~\ref{hyland}, Matthias Schr\"oder has recently
shown that $\N^{\N^\N}$ is not regular, and this justifies the
restriction of our characterizations of exhaustible entire sets to
particular kinds of types in Section~\ref{arzela}.

Notice that when $X=Y=\N$, this amounts to the well known
characterization of compact subsets~$K$ of the Baire space $\N^\omega$
as finitely branching trees. The equi-continuity condition, as in the
case of the Heine--Borel theorem, is superfluous, because any set is
equi-continuous in this case as the topology of the exponent is
discrete. Condition (1) says that the elements of $K$ are the paths of
a tree, and (2) says that the tree is finitely branching, because the
compact subsets of the base space are finite.

\medskip

Lemma~\ref{crucial:generalized} and the remarks preceding it allow one
to consider continuity of functions involving points of a
$k$-space~$X$, open sets and closed sets (using the function space
$\Sierp^X$ and representing open sets and closed sets by their
characteristic functions), and compact sets (using the function space
$\Sierp^{\Sierp^X}$ and representing compact sets by their universal
quantification functionals). We now reformulate Gale's theorem by
expressing condition (3) as a continuous version of a slight
strengthening of condition~(2). 
\begin{thm} \label{gale:modified}
  If $X$ and $Y$ are Hausdorff $k$-spaces with $Y$ regular, a set $K \subseteq
  Y^X$ is compact if and only if
  \begin{enumerate}
  \item $K$ is closed, and
  \item $(K \cap F)(x)$ is compact, continuously in $F$ and $x$,
    for any closed set $F \subseteq Y^X$ and any $x \in X$.
  \end{enumerate}
\end{thm}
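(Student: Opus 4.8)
The plan is to deduce each direction of the equivalence from the corresponding direction of Gale's theorem \cite[Theorem~1]{gale} as displayed above, the point being that the present condition~(2) packages Gale's conditions~(2) and~(3) into a single continuity requirement. Throughout, $Y^X$ is the exponential in the cartesian closed category of Hausdorff $k$-spaces, hence is itself a Hausdorff $k$-space, so that Lemma~\ref{crucial:generalized} applies to its subsets. Following the discussion preceding that lemma, I represent a closed set $F \subseteq Y^X$ by the characteristic map $\chi_{Y^X \setminus F} \in \Sierp^{Y^X}$ of its open complement and a compact set $C \subseteq Y$ by its universal quantification functional $\forall_C \in \Sierp^{\Sierp^Y}$, so that the phrase ``$(K \cap F)(x)$ is compact, continuously in $F$ and $x$'' is read as: each $(K \cap F)(x)$ is compact and the assignment $(F,x) \mapsto \forall_{(K \cap F)(x)}$ is a continuous map $\Sierp^{Y^X} \times X \to \Sierp^{\Sierp^Y}$. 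The computations rest on the Sierpinski identity
\[
\forall_{(K \cap F)(x)}(\chi_V) = \forall_K\bigl(\lambda f.\, \chi_{Y^X \setminus F}(f) \vee \chi_V(f(x))\bigr),
\]
valid for every closed $F \subseteq Y^X$, open $V \subseteq Y$ and $x \in X$, since both sides are $\top$ precisely when $f(x) \in V$ for all $f \in K$ that do not lie in $Y^X \setminus F$, that is, when $(K \cap F)(x) \subseteq V$.

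For the direction from compactness, assume $K$ is compact. Then $K$ is closed in the Hausdorff space $Y^X$, giving condition~(1). For condition~(2), $K \cap F$ is a closed subset of the compact set $K$ and hence compact, so its image $(K \cap F)(x)$ under the continuous evaluation map $f \mapsto f(x)$ is compact. For the continuous dependence, recall that $\forall_K \colon \Sierp^{Y^X} \to \Sierp$ is continuous by Lemma~\ref{crucial:generalized}. By cartesian closedness the map $(F,x,V) \mapsto \bigl(\lambda f.\, \chi_{Y^X \setminus F}(f) \vee \chi_V(f(x))\bigr) \in \Sierp^{Y^X}$ is continuous, as its uncurried form is assembled from the continuous evaluation maps $\Sierp^{Y^X} \times Y^X \to \Sierp$, $X \times Y^X \to Y$ and $\Sierp^Y \times Y \to \Sierp$ together with the continuous join $\Sierp \times \Sierp \to \Sierp$. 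Post-composing with $\forall_K$ and transposing, the identity above exhibits $(F,x) \mapsto \forall_{(K \cap F)(x)}$ as continuous, which is condition~(2).

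For the converse, assume conditions~(1) and~(2). Taking $F = Y^X$ shows that $K(x) = (K \cap Y^X)(x)$ is compact for every $x$, which is Gale's condition~(2). For Gale's condition~(3), fix a closed $F \subseteq Y^X$ and an open $V \subseteq Y$; then
\[
\bigcap_{f \in K \cap F} f^{-1}(V) = \{ x \in X \mid (K \cap F)(x) \subseteq V \} = \{ x \in X \mid \forall_{(K \cap F)(x)}(\chi_V) = \top \}.
\]
Condition~(2) makes $x \mapsto \forall_{(K \cap F)(x)}$ continuous for this fixed $F$, and composing with evaluation at $\chi_V$ yields a continuous map $X \to \Sierp$ whose preimage of $\top$ is precisely the displayed set; hence it is open, which is Gale's condition~(3). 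Gale's theorem now gives that $K$ is compact.

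The main obstacle is the joint-continuity argument in the first direction: one must recognise Gale's equicontinuity clause~(3), together with the pointwise compactness~(2), as nothing but the statement that the compact-set-valued assignment $(F,x) \mapsto (K \cap F)(x)$ is continuous in the $\Sierp$-representation, and then reduce this to the continuity of $\forall_K$ through the displayed identity, the continuity of the evaluation maps furnished by cartesian closedness, and the continuity of $\vee$. The remaining points---that $(K \cap F)(x)$ is genuinely compact, so that $\forall_{(K \cap F)(x)}$ is a bona fide point of $\Sierp^{\Sierp^Y}$, and that $Y^X$ is a $k$-space so that Lemma~\ref{crucial:generalized} is applicable---are routine.
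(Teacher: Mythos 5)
Your proposal is correct and follows essentially the same route as the paper's proof: the same Sierpinski identity $\forall_{(K\cap F)(x)} = \lambda v.\forall_K(\lambda f.\,\bar\chi_F(f) \vee v(f(x)))$ established via the same pointwise equivalence, continuity of $\Phi$ obtained from continuity of $\forall_K$ (Lemma~\ref{crucial:generalized}) together with cartesian closedness, and the converse reduced to Gale's conditions (1)--(3), with Gale~(2) from the case $F = Y^X$ and Gale~(3) from evaluating the continuous map $x \mapsto \forall_{(K\cap F)(x)}$ at $\chi_V$. The only difference is expository: you spell out the assembly of the uncurried map from evaluation maps and the join, where the paper simply appeals to $\lambda$-definability from continuous maps.
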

\noindent % because this is a clarification to the formulation of the thm
The dependence of $(K \cap F)(x)$ in the parameters $F$ and $x$ is
given by the functional
\[
\Phi \colon \Sierp^{Y^X} \times X \to \Sierp^{\Sierp^Y}
\]
defined by
$\Phi(\bar{\chi}_F,x) = \forall_{(K\cap F)(x)},$
where we write
\[
\bar{\chi}_F = \chi_{F^c}.
\]
\begin{proof}
  ($\Rightarrow$): The set $K$ is closed because $Y^X$ is Hausdorff.
  The set $K \cap F$ is compact because $F$ is closed.  Because the
  evaluation map is continuous and because $(K \cap F)(x)$ is the
  continuous image of $K \cap F$ under evaluation at $x$, it is
  compact. To see that $\Phi$ is continuous, let $v \in \Sierp^Y$.  Then
  $v(y)=\top$ for all $y \in (K\cap F)(x)$ $\iff$ $v(f(x))=\top$ for
  all $f \in K\cap F$ $\iff$ $f \in F^c$ or $v(f(x))$ for all $f \in K$.
  Hence \[ \Phi(w,x)=\lambda v.\forall f \in K.w(f) \vee v(f(x)), \]
  where $(\vee) \colon \Sierp \times \Sierp \to \Sierp$ is defined by
  $a \vee b = \top$ iff $a=\top$ or $b=\top$.  Because the
  functional $\forall_K$ is continuous as $K$ is compact, and because
  the category of $k$-spaces is cartesian closed and the above is a
  $\lambda$-definition from continuous maps, $\Phi$ is
  continuous.

\medskip

 ($\Leftarrow$): It suffices to show that Gale's conditions (1)-(3)
 hold.  Condition (1) is the same as ours, and Gale~(2) follows from
 our condition (2) with $F=Y^X$. To prove Gale~(3), let $F \subseteq
 Y^X$ be closed and $V \subseteq Y$ be open. Then the set 
 \[ U = \{ x \in X \mid \Phi(\bar{\chi}_F,x)(\chi_V)=\top \} \]
 is open because $\Phi$ is continuous, and
 \begin{eqnarray*}
   x \in U 
  & \iff & \forall_{(K\cap F)(x)}(v)= \top 
    \iff  \text{$\chi_V(y) = \top$ for all $y \in (K \cap F)(x)$} \\
   & \iff & \text{$\chi_V(f(x)) = \top$ for all $f \in K \cap F$} 
    \iff  \text{$f(x) \in V$  for all $f \in K \cap F$}  \\
   & \iff & {\textstyle x \in \bigcap_{f \in K \cap F} f^{-1}(V)},
 \end{eqnarray*}
 which shows that the set $\bigcap_{f \in K \cap F} f^{-1}(V)$ is the
 same as $U$ and hence is open.
\end{proof}

\begin{defi}
We say \emph{topologically decidable} etc.\ taking
  the continuous versions of Definitions~\ref{decidableonk}
  and~\ref{semidecidable}. \qed
\end{defi}

We now formulate and prove an analogue of this theorem, which replaces
(i)~the Sierpinski space~$\Sierp$ by the boolean domain~$\pBool$,
(ii)~Hausdorff $k$-spaces by Scott domains, (iii)~compact subsets by
topologically exhaustible entire subsets, (iv)~closed subsets by
topologically decidable sets (cf.\ Definitions~\ref{exhaustible:def}
and~\ref{topologically:exhaustible}).  We again apply Gale's theorem,
exploiting Hyland's characterization of the Kleene--Kreisel spaces as
$k$-spaces.  The proof follows the same pattern as that of
Theorem~\ref{gale:modified}, but there are a number of additional
steps. Firstly, using Gale's theorem, we get continuous maps defined
on Kleene--Kreisel spaces.  These are extended to continuous maps on
domains using the Kleene--Kreisel density theorem and Scott's
injectivity theorem, as in Lemma~\ref{clopen:extension}.  (In
Theorem~\ref{exhaustible:arzela}, such an extension will be instead
defined by an algorithm, but still relying on the density theorem.)
Secondly, the set $F$ in condition (2) is closed in
Theorem~\ref{gale:modified} but is neither open nor closed in
Theorem~\ref{exhaustible:arzela:topological}, although it has clopen
shadow, because the Sierpinski space has been replaced by the boolean
domain. To overcome this difficulty, we rely on the following version
of Gale's theorem:

\begin{rem} \label{F:specialcase} An inspection of the proof of Gale's
  theorem shows that it also holds if, in condition (3), the set $F$
  ranges over subbasic closed sets in the compact-open topology:
\begin{enumerate}
\item[$3'$.] the set $\bigcap_{f \in K \cap N(Q,B)} f^{-1}(V)$ is open
  for every compact set $Q \subseteq X$, every closed set $B \subseteq
  Y$, and every open set $V \subseteq Y$.
\end{enumerate}
In one direction this is clear: if condition (3) holds for all closed
$F$, then it holds for~$F=N(Q,B)$. For the other direction, notice
that condition (3) is used only in the ``Lemma'' \cite[page 305]{gale}
for $F$ of this form (the sets $W_x$ in the second last line of that
page, and the set $T$ of page 306). \qed
\end{rem}
Let $D=D_{\sigma}$ and $C = C_{\sigma}$ for an arbitrary type
$\sigma$, and recall the concepts and notation introduced in
Definitions~\ref{semidecidable} and~\ref{topologically:exhaustible}.
\pagebreak[3]
\begin{thm} \label{exhaustible:arzela:topological} An entire set
  $K \subseteq (D \to \pN)$ is topologically exhaustible if and only
  if the following two conditions hold:
\pagebreak[3]
  \begin{enumerate}
  \item[1.] $K$ is topologically co-semi-decidable. 
  \item[2.] The set $(K \cap F)(x)$ is topologically exhaustible for
    any $F$ that is topologically decidable on $K$, and any $x \in D$
    total, continuously in~$F$ and~$x$.
  \end{enumerate}
\end{thm}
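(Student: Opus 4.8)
The overall plan is to transport everything through the quotient $\rho$ and apply Gale's theorem on the Kleene--Kreisel side. Writing $X=C_\sigma$ and $Y=\N$, the shadow $\rho(K)$ lives in the function space $Y^X=\N^{C_\sigma}$; since $\N$ is discrete, hence regular, and $X$ and $Y$ are Hausdorff $k$-spaces by Hyland's characterization, Gale's theorem and its sub-basic refinement (Remark~\ref{F:specialcase}) apply, following the pattern of Theorem~\ref{gale:modified}. The dictionary I rely on is: by Lemma~\ref{lemma:criterion}(\ref{lemma:criterion:1}) and~(\ref{lemma:criterion:2}) an entire set is topologically exhaustible iff its shadow is compact (the empty set being trivially exhaustible and having compact shadow), a set is topologically co-semi-decidable iff its shadow is closed, and a set decidable on $K$ has relatively clopen trace on $\rho(K)$. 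I would then prove the two implications separately.

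For the forward implication, assume $K$ is topologically exhaustible, so $\rho(K)$ is compact and therefore closed in the Hausdorff space $\N^{C_\sigma}$. Condition~(1) follows by extending a characteristic function: since $\rho(K)$ is closed, $T_\sigma\setminus K$ is relatively open, so its characteristic function into $\Sierp$ is continuous on the total elements and extends to a continuous $\chi\colon D_\sigma\to\Sierp$ by density of the total elements and injectivity of $\Sierp$, exactly as in Lemma~\ref{clopen:extension}; this $\chi$ co-semi-decides $K$. For condition~(2) I would mirror the functional $\Phi$ of Theorem~\ref{gale:modified} directly at the domain level: given a witness $\psi$ that $F$ is decidable on $K$ and a total $x$, put
\[
\Psi(\psi,x)(q)=\forall_K\bigl(\lambda f.\;\neg\psi(f)\vee q(f(x))\bigr),
\]
where $\vee$ is the left-first sequential disjunction, so that $\True\vee\bot=\True$. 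A routine case analysis shows that for $q$ defined on $(K\cap F)(x)$ the inner predicate is defined on $K$ (when $f\notin F$ the first disjunct is already $\True$) and that $\Psi(\psi,x)$ is a quantification functional for $(K\cap F)(x)$; as a $\lambda$-term over the continuous $\forall_K$ it depends continuously on $(\psi,x)$, which is condition~(2).

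For the reverse implication, assume (1) and (2); I would show $\rho(K)$ compact by verifying Gale's conditions, using Remark~\ref{F:specialcase} to restrict attention to sub-basic closed sets. Gale's condition~(1), closedness of $\rho(K)$, is condition~(1) read through $\rho$, and Gale's pointwise-boundedness condition follows from condition~(2) applied to $F=K$ (decidable on $K$ via $\lambda f.\True$), giving each $\rho(K)(\x)$ compact. The heart is Gale's equi-continuity clause~$(3')$ for a sub-basic closed set $N(Q,B)$ with $Q\subseteq C_\sigma$ compact and $B\subseteq\N$. Although $N(Q,B)$ is only closed in the ambient space, its trace on $\rho(K)$ is clopen and is the shadow of a set decidable on $K$: taking an entire exhaustible representative $Q'\subseteq D_\sigma$ of $Q$ (Lemma~\ref{lemma:criterion}(\ref{lemma:criterion:2}), with $\forall_{Q'}$ trivial when $Q=\emptyset$) and the continuous $\chi_B\colon\pN\to\pBool$ determined by $\chi_B(\bot)=\bot$, the predicate $\psi(f)=\forall_{Q'}(\lambda y.\chi_B(f(y)))$ is defined on $K$ and decides $F=\{f\in K\mid\rho(f)\in N(Q,B)\}$ on $K$ --- it is precisely the compactness of $Q$, via continuity of $\forall_{Q'}$, that turns the closed condition $f(Q)\subseteq B$ into a clopen one. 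Feeding this $F$ into (2) gives that $(K\cap F)(x)$, whose shadow is $(\rho(K)\cap N(Q,B))(\x)$ for $\x=\rho(x)$, is topologically exhaustible continuously in $x$. Converting the resulting boolean quantifier into a Sierpinski quantifier and passing to the shadow, the set $\{\x\mid(\rho(K)\cap N(Q,B))(\x)\subseteq V\}$ is open for every open $V\subseteq\N$, which is exactly clause~$(3')$. Hence $\rho(K)$ is compact and $K$ is topologically exhaustible.

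The main obstacle is this reverse implication, for a structural reason: hypothesis~(2) only speaks about sets decidable on $K$, that is, sets with clopen trace on $\rho(K)$, whereas Gale's theorem tests equi-continuity against all closed sets. The two delicate points are (a) the reduction through Remark~\ref{F:specialcase} to the sub-basic closed sets $N(Q,B)$, together with the observation that compactness of $Q$ upgrades their closed traces to clopen ones carried by an explicit decidability witness, and (b) the bookkeeping that transports the domain-level continuity in $x$ furnished by (2) to Kleene--Kreisel continuity in $\x$ while interchanging boolean- and Sierpinski-valued quantifiers. By contrast the forward implication reduces to the single $\lambda$-definition of $\Psi$ and one density/injectivity extension.
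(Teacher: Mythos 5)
Your proposal is correct and follows essentially the same route as the paper's proof: the forward direction via the density/injectivity extension for co-semi-decidability and the $\lambda$-defined quantifier $\Gamma(\psi_F,x) = \lambda p.\,\forall f \in K.\,\psi_F(f) \implies p(f(x))$ (your $\Psi$ with sequential disjunction is the same functional), and the reverse direction by verifying Gale's conditions through Remark~\ref{F:specialcase}, with the decidability witness $\psi_F(f)=\forall_Q(\lambda y.\chi_B(f(y)))$ built from the topological exhaustibility of the entire set covering the compact $Q$, then feeding it into hypothesis~(2) and passing through the quotient $\rho$ to get openness. The only differences are cosmetic (taking $F=K$ rather than $F=(D\to\pN)$ for Gale's pointwise condition, and your explicit handling of the empty case of Lemma~\ref{lemma:criterion}(\ref{lemma:criterion:2}), which the paper glosses over).
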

Here the dependence of $(K \cap F)(x)$ in $F$ and $x$ is
to be given by a functional
\[
\Gamma \colon ((D \to \pN) \to \pBool) \times D \to ((\pN \to \pBool)
\to \pBool)
\]
such that
$\Gamma(\psi_F,x) = \forall_{(K\cap F)(x)}.$

\begin{proof}
  $(\Rightarrow)$: (1): By Lemma~\ref{lemma:criterion}, the shadow
  $\sK=\rho(K) \subseteq \N^\C$ of $K$ is compact and hence closed.
  Hence the map $\N^\C \to \pBool$ that sends $\f \in \sK$ to~$\bot$
  and $\f \not\in \sK$ to~$\True$ is continuous. By composition with
  the quotient map $\rho \colon T \to \N^C$, where $T=T_{\sigma \to
    \iota}$, we get a map $T \to \pBool$. Because $T$ is dense in $(D
  \to \pN)$ and $\pBool$ is densely injective, the domain
  $D_{\sigma\to\iota}$ under the Scott topology is injective over
  dense embeddings, which means that this map extends to a continuous
  map $(D \to \pN) \to \pBool$.  By construction, this exhibits $K$ as
  a topologically co-semi-decidable subset of $(D \to \pN)$.

  (2): Define $\Gamma(\psi_F,x) = \lambda p.\forall {f \in K}.
  \psi_F(f) \implies p(f(x))$. 
  The result then follows from the fact that the category of Scott
  domains under the Scott topology is cartesian closed, and hence
  functions that are $\lambda$-definable from continuous maps are
  themselves continuous. 

  $(\Leftarrow)$: We apply Gale's theorem to show that the shadow
  $\sK=\rho(K)$ is compact. Then it is topologically exhaustible by
  Lemma~\ref{lemma:criterion}. 

  Gale~(1): If $K$ is topologically co-semi-decidable, then, by
  definition, we have a continuous function $(D \to \pN) \to \pBool$
  that maps $f \in K$ to $\bot$ and $f \not\in K$ to $\True$.  Hence
  $K$ is closed in $T$ because it is the inverse image of the closed
  set $\{\bot\}$ restricted to~$T$.  Because $K$ is entire, it is
  closed under total equivalence by definition, and hence, because
  $\rho \colon T \to \N^C$ is a quotient map, $\sK$ is closed.

  Gale~(2): The assumption gives that for any $x \in D$ total, $K(x)$
  is exhaustible, considering $F=(D \to \pN)$. Because $K$ is entire
  and $x$ is total, $K(x) \subseteq \N$. Hence by
  Lemma~\ref{lemma:criterion}, $K(x)$ is compact in $\N \subseteq
  \pN$.

  \pagebreak[3] Gale~(3): Let $\sF \subseteq \N^C$ be a subbasic open
  set of the form $N(\sQ,V)$ with $\sQ \subseteq C$ compact and $V
  \subseteq \N$ (necessarily) clopen. Then the set $Q=\rho^{-1}(\sQ)$
  is entire and Kleene--Kreisel compact, and hence, by
  Lemma~\ref{lemma:criterion}, it is topologically exhaustible. Also,
  $V$ is a topologically decidable subset of $\pN$. So the predicate
  $p \colon (D \to \pN) \to \pBool$ defined by $p(f) = \forall x \in
  Q. \chi_V(f(x))$ is continuous and defined on~$K$, and $p=\psi_F$
  for $F = T \cap p^{-1}(\True)$.  Now define $u \colon D
  \to \pBool$ by \[ u(x) = \Gamma(\psi_F,x)(\chi_V). \] Then
  $u$ is continuous and
  \[
  u(x) =  \forall_{(K\cap F)(x)}(\chi_V) 
  = \forall f \in K \cap F.\chi_V(f(x)).
  \]
  Hence the set $U = u^{-1}(\True) = \{ x \in T \mid \forall f \in K
  \cap F.\chi_V(f(x)) =\True \}$ is open. Therefore
  its shadow $\bigcap_{\f \in \sK \cap \sF} \f^{-1}(V)$ is open,
  because it is closed under total equivalence and because $\rho$ is a
  quotient map. %, as required.
\end{proof}

\subsection{Computational version}
\label{arzela}

% NB. The proof of the theorem really is manipulating searchable
% rather than exhaustible sets!

At this stage of our investigation, such a characterization is
available only for certain types, which include pure types, and for
entire sets (for the reasons explained in
Section~\ref{arzela:compact}). Let $D=D_{\sigma}$ and $C = C_{\sigma}$
for an arbitrary type $\sigma$.  We establish the computational
version of Theorem~\ref{exhaustible:arzela:topological}.
\pagebreak[3]
\begin{thm} \label{exhaustible:arzela}
  An entire set $K \subseteq (D \to \pN)$ is exhaustible if and only if 
  the following two conditions hold:
  \begin{enumerate}
  \item[1.] $K$ is co-semi-decidable.
  \item[2.] The set $(K \cap F)(x)$ is exhaustible for any $F$
    decidable on $K$, and any $x \in D$ total, uniformly in~$F$
    and~$x$.
  \end{enumerate}
  Moreover, the equivalence is uniform. 
\end{thm}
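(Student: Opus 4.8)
The plan is to prove both implications while tracking uniformity, reusing the machinery of Theorem~\ref{ex:main} rather than constructing quantifiers from scratch. For the forward direction $(\Rightarrow)$, assume $K$ is exhaustible with quantification functional $\forall_K$. Condition~1 is immediate from the co-semi-decidability clause of Theorem~\ref{ex:main}, which already yields a co-semi-decision procedure uniformly in $\forall_K$. For condition~2 I would take
\[ \Gamma(\psi_F,x) = \lambda p.\,\forall f \in K.\,(\psi_F(f) \implies p(f(x))), \]
exactly as in the proof of Theorem~\ref{exhaustible:arzela:topological}; this is computable uniformly in $\forall_K$, being a $\lambda$-definition from $\forall_K$, $\psi_F$ and application. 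To see it is a quantification functional for $(K\cap F)(x)=\{f(x)\mid f\in K\cap F\}$, note that for $p$ defined on this set the predicate $\lambda f.(\psi_F(f)\implies p(f(x)))$ is defined on $K$: if $f\in K\cap F$ then $\psi_F(f)=\True$ and $f(x)$ is total, so $p(f(x))\neq\bot$, while if $f\in K$ with $f\notin F$ then $\psi_F(f)=\False$ and the implication is $\True$. Evaluating shows $\Gamma(\psi_F,x)(p)=\True$ iff $p(f(x))=\True$ for all $f\in K\cap F$, and $=\False$ iff $p(f(x))=\False$ for some such $f$, which is precisely the specification of $\forall_{(K\cap F)(x)}$.

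For the converse $(\Leftarrow)$, the first step is to secure compactness. Since computable maps are continuous, conditions~1 and~2 hold in their topological forms, so Theorem~\ref{exhaustible:arzela:topological} applies and $K$ is topologically exhaustible; hence by Lemma~\ref{lemma:criterion}(\ref{lemma:criterion:1}) its shadow is Kleene--Kreisel compact, in particular closed, which also makes Lemma~\ref{c} available. I would then invoke the density theorem to fix a computable $d\in D^\myomega$ whose shadow is dense in $C=\kk{\sigma}$, and set $P(f)=\lambda n.f(d_n)$ as in case~(iii) of the proof of Theorem~\ref{ex:main}. The crucial observation is that the quantifiers supplied by condition~2 are exactly the ingredients used there: for a finite prefix $\beta$ the set $F_n^\beta=\{f\mid f(d_i)=\beta_i \text{ for } i<n\}$ is decidable on $K$ uniformly, and $(K\cap F_n^\beta)(d_n)\subseteq\N$ is exhaustible via $\Gamma$, hence finite and computably enumerable by the $\sup$/enumeration algorithm of case~(i). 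Thus the finite prefixes of the entire $P$-image $L$ of $K$ form a computable, finitely branching tree.

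Reading the searches off $\Gamma$ in place of a global $\exists_L$, I would define $r_L\colon\pN^\myomega\to\pN^\myomega$ by the very recursion of case~(ii), so that $r_L$ follows $\alpha$ through the tree and completes it by the leftmost branch on exit; $r_L$ is total, sends every total $\alpha$ into $L$, and fixes $L$ up to total equivalence. With $K_n=K(d_n)$ (each $\sup K_n$ computable uniformly in $n$ from $\Gamma$ applied to the whole space), the product $M=\prod_n\{0,\dots,\sup K_n\}$ is searchable as a product of finite sets by Theorem~\ref{searchable:tychonoff}, and $L\subseteq M$ is the entire image of $M$ under the total computable $r_L$, so $L$ is searchable by Proposition~\ref{image:bis}. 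Finally I would rebuild $E\colon\pN^\myomega\to(D\to\pN)$ by the formula of case~(iii), again reading its existential and agreement tests off $\Gamma$ and using Lemma~\ref{c} to guarantee that the search for the stabilising index terminates; then $K$ is the entire image of $L$ under $E$ and is searchable, a fortiori exhaustible, by Proposition~\ref{image:bis}. Every construction is a $\lambda$-definition from the given witnesses, so the passage from the pair $(\psi_K,\Gamma)$ to $\forall_K$, and by the forward direction back again, is computable, giving the asserted uniform equivalence.

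I expect the main obstacle to be the $(\Leftarrow)$ direction, specifically the bookkeeping that verifies that condition~2's family of one-dimensional quantifiers $\forall_{(K\cap F)(x)}$ really does supply every search used in the proof of Theorem~\ref{ex:main} while preserving uniform dependence on $F$ and $x$. Termination of the recursions defining $r_L$ and $E$ is not automatic, and it is exactly Kleene--Kreisel compactness, obtained through Theorem~\ref{exhaustible:arzela:topological} and Lemma~\ref{c}, that is needed to underwrite it.
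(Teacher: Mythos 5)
Your proposal is correct and takes essentially the same route as the paper: the forward direction via Theorem~\ref{ex:main} and the $\lambda$-defined $\Gamma$, and the converse by first securing Kleene--Kreisel compactness through Theorem~\ref{exhaustible:arzela:topological} and Lemma~\ref{lemma:criterion}, then rebuilding the $r$ and $E$ of Theorem~\ref{ex:main} with all searches supplied by $\Gamma$, obtaining $L$ searchable as an entire $r$-image of a searchable product (Theorem~\ref{searchable:tychonoff}, Proposition~\ref{image:bis}) and $K$ searchable as the entire $E$-image of $L$. The only cosmetic difference is your ambient product $\prod_n \{0,\dots,\sup K(d_n)\}$ in place of the paper's $\prod_i K(d_i)$, which changes nothing essential.
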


\medskip

A few remarks are in order before embarking into the proof.
The claim holds, with the same proof, if
conditions (1) and (2) are replaced by any of the following
conditions, respectively:
\begin{enumerate}
\item[$1'$.] $K$ is topologically co-semi-decidable.
\item[$1''$.] $K$ has closed shadow.
\item[$1'''$.] The shadow of $K$ is closed in the topology of
  pointwise convergence.
\item[$1''''$.] $K$ has compact shadow.
\item[$2'$.]  The set $(K \cap F_n^\alpha)(x)$ is exhaustible, uniformly
    in $n \in \pN$, $\alpha \in \pN^\myomega$ and $x \in D$ total.
\end{enumerate}
Recall (proof of Theorem~\ref{ex:main}) that
we defined
\[
F_n^\alpha = \{ f \in (D \to \pN) \mid \forall i<n. f(d_i) = \alpha_i
\}. 
\]

In the formulation of the theorem, the fact that
conditions~(1) and~(2) uniformly imply the exhaustibility of $K$ is in
principle given by a computable functional of type

{\footnotesize
\[
\underbrace{\overbrace{((D \to \pN) \to \Sierp)}^{\chi_{K^c}}}_{\text{condition 1}} 
\times \underbrace{(\overbrace{((D \to \pN) \to \pBool)}^{\psi_F} \times \overbrace{D}^{x} \to  \overbrace{((\pN \to \pBool) \to \pBool)}^{\forall_{K \cap F(x)}})}_{\text{condition 2}}
\to \underbrace{\overbrace{(((D \to \pN) \to \pBool) \to \pBool)}^{\forall_K}}_{\text{conclusion}}.
\]}

However, the computational information given by condition (1) is not
used in the construction of the conclusion (although the topological
information is used in its correctness proof). Moreover, the
information given by condition (2) is not fully used in the
construction. Replacing it by ($2'$) we get
\[
\underbrace{(\overbrace{\pN^\myomega}^{\alpha} \times \overbrace{\pN}^{n} \times \overbrace{D}^{x}) \to \overbrace{((\pN \to \pBool) \to \pBool)}^{\forall_{K\cap F_n^\alpha(x)}})}_{\text{condition $2'$}}
\to \underbrace{(((D \to \pN) \to \pBool) \to \pBool)}_{\text{conclusion}}.
\]
Additionally the pair $\alpha,n$ is really coding a finite sequence,
and, as we have seen, exhaustible sets of natural numbers are uniformly
equivalent to finite enumerations of natural numbers. Hence the
above can be written as
\[
\underbrace{((\overbrace{\pN^*}^{\alpha,n} \times \overbrace{D}^{x}) \to \overbrace{\pN^*}^{\forall_{K\cap F_n^\alpha}(x)})}_{\text{condition $2'$}}
\to \underbrace{(((D \to \pN) \to \pBool) \to \pBool)}_{\text{conclusion}}.
\]
Therefore the above characterization reduces the type level of
$\forall_K$ by two.

\medskip

The last step of the proof of this theorem mimics topological proofs
of Arzela--Ascoli type theorems (which we haven't included): to show
that $K \subseteq Y^X$ is compact under assumptions such as those of
Gale's theorem (Section~\ref{arzela:compact}), one first concludes
that $\prod_{x \in X} K(x)$ is compact by the Tychonoff theorem, then
shows that the relative topology of $K$ is the topology of pointwise
convergence, and that it is pointwise closed, and hence concludes that
it is homeomorphically embedded into the product as a closed subset,
and therefore that it must be compact. In the proof below, we have
replaced the Tychonoff theorem by its countable computational version
given by Theorem~\ref{searchable:tychonoff}, using a dense sequence of
the exponent. The first steps of the proof are needed in order to
make this replacement possible, and they are modifications of the
constructions developed in Section~\ref{characterization}.

\begin{proof}
($\Rightarrow$) (1): Theorem~\ref{ex:main}.

\medskip

(2): Define $\forall_{K \cap F}(p) = \forall {f \in K}.  \psi_F(f)
\implies p(f(x))$.

\medskip

($\Leftarrow$): By Theorem~\ref{exhaustible:arzela:topological}, the
set $K$ is topologically exhaustible, and hence is Kleene--Kreisel
compact by Lemma~\ref{lemma:criterion}. This compactness conclusion is
our only use of Theorem~\ref{exhaustible:arzela:topological} in this
proof. We apply this to establish the correctness of the algorithms
defined below.

Define $P \colon (D \to \pN) \to \pN^\myomega$ by $P(f)(i) = f(d_i)$, as
in the proof of Theorem~\ref{ex:main}, where $d \in D^\myomega$ is a
computable dense sequence, and let $L=P(K)$.  Because
$F_n^\alpha$ is decidable on $K$, the set $K_n^\alpha = K \cap F_n^\alpha$ is
exhaustible by Proposition~\ref{prop:intersec}, and $K_n^\alpha(x)$ is
exhaustible uniformly in $\alpha$, $n$ and $x \in K$ by
Proposition~\ref{image} applied to evaluation at~$x$.
Now modify the definition of $r \colon \pN^\myomega \to \pN^\myomega$
given in Theorem~\ref{ex:main} as follows:
\[
 r(\alpha)(n) = 
 \begin{cases}
   \alpha_n & \text{if $\exists y \in K_n^{r(\alpha)}(d_n). \alpha_n = y$,} \\
   \mu y. \exists y' \in K_n^{r(\alpha)}(d_n). y = y' & \text{otherwise.}
 \end{cases}
\]
Then $r$ is computable, and satisfies
\[
 r(\alpha)(n) = 
 \begin{cases}
   \alpha_n \qquad \qquad \quad \text{if $\exists f \in K. f(d_n) = \alpha_n \wedge \forall i < n. f(d_i) = r(\alpha)(i) $,} \\
   \mu y. \exists f \in K. f(d_n) = y \wedge \forall i < n. f(d_i) = r(\alpha)(i)  \qquad\,\,\,
   \text{otherwise.}
 \end{cases}
\]
Hence it also satisfies
\[
 r(\alpha)(n) = 
 \begin{cases}
   \alpha_n & \text{if $\exists \beta \in L. \beta_n = \alpha_n \wedge \beta_i =_n r(\alpha) $,} \\
   \mu y. \exists \beta \in L. \beta_n = y \wedge \beta =_n r(\alpha)  & 
   \text{otherwise.}
 \end{cases}
\]
This shows that $r=r_L$ for $r_L$ as defined in Theorem~\ref{ex:main}.
But notice that, although the second and third equations hold, the
algorithm is not the same as in Theorem~\ref{ex:main}.  In fact, the
second and third equations don't establish computability of $r$,
because exhaustibility of $K$ and $L$ are not known at this stage of
the proof. In any case, the last equation shows that $r$ exhibits $L$
as a retract up to total equivalence, using the fact that $L$, being
the continuous $P$-image of $K$, is topologically exhaustible and
hence is Kleene--Kreisel compact, as in Theorem~\ref{ex:main}

Similarly, modify the definition of $E \colon \pN^\myomega \to (D \to
\pN)$ in Theorem~\ref{ex:main} as follows:
\[
E(\alpha)(x) = 
  \mu y. \exists y' \in K_n^{r(\alpha)}(x).y = y',
\]
where
%\begin{quote}
  $n$ is the least number such that $\forall y,y' \in
  K_n^{r(\alpha)}(x).y=y'.$
%\end{quote}
Because this condition is equivalent to
%\begin{quote}
  $\forall f,f' \in K_n^{r(\alpha)}.f(x)=f'(x),$
%\end{quote}
  such a number exists by Lemma~\ref{c} and the compactness of the
  shadow of~$K$. By uniform exhaustibility of the
  set~$K_n^{r(\alpha)}(x)$, this can be found uniformly in $\alpha$
  and $x$, and hence $E$ is computable. Moreover, although the
  definition of $E$ is not the same, as before, we again have $E=E_L$
  for $E_L$ defined as in Theorem~\ref{ex:main}.

  Finally, because $K = K \cap F$ for $F=(\D \to \pN)$, the set $K(x)$
  is exhaustible uniformly in $x \in D$ total, and hence the set $M =
  \prod_i K(d_i) \subseteq \pN^\myomega$ is searchable uniformly in $x
  \mapsto \forall_{K(x)}$.  In fact, each $K(d_i)$ is searchable
  uniformly in $i$, by Theorem~\ref{ex:main}, and hence $M$ is
  searchable by Theorem~\ref{searchable:tychonoff}. Now $L \subseteq
  M$ and hence the entire $r$-image of $M$ is $L$, and hence $L$ is
  searchable by Proposition~\ref{image:bis}. In turn $K$ is the entire
  $E$-image of $L$ and hence is also searchable.  Therefore it is
  exhaustible.
\end{proof}

Notice that the proof actually concludes that $K$ is searchable, and
hence we could have formulated the theorem as: An entire set $K
\subseteq (D \to \pN)$ is searchable iff $K$ is co-semi-decidable and
the set $(K \cap F)(x)$ is exhaustible for any $F$ decidable on $K$,
and any $x \in D$ total, uniformly in~$F$ and~$x$.  But this
strengthening of the theorem follows from the given formulation and
the results of Section~\ref{characterization}. However, we could have
included the above theorem, with the stronger formulation, before
Section~\ref{characterization} and then derived the results of that
section as a corollaries. But we feel that the developments of both
sections become more mathematically transparent with the current
organization of the technical material.

\section{Technical remarks, further work, applications and directions} \label{technical}

We now discuss some technical aspects of the above development,
announce some results that we intend to report elsewhere, and discuss
potential applications and directions for future work in this field.

\subsection{Analysis of the selection functional given by the product functional}

\newcommand{\eval}{\operatorname{eval}}

Using course of values induction, one easily sees that a functional
\licsmath{\Pi \colon ((\D \to
  \pBool) \to \D)^\omega \to ((\D^\omega \to \pBool) \to \D^\omega)}
satisfies the equation of Definition~\ref{product:functional} if and
only if it satisfies the equation
\[
\text{$\Pi(\e)(p) = x_0 * \Pi(\e')(p_{x_0})$ where $x_0 = \e_0(\lambda x. p_x(\Pi(\e')(p_x)))$}
\]
and where $p_x(\alpha) = p(x * \alpha)$ and $\e'_i = \e_{i+1}$.  Now
define a selection function \[ \e_2 \colon (\pBool \to \pBool) \to
\pBool \] for $2 \subseteq \pBool$ by
\[
\e_2(p) = p(\True) = \If p(\True) \Then 1 \Else 0
\]
and a selection function \[ \d \colon (\pBool^\omega \to \pBool) \to
\pBool^\omega\] for the Cantor space $2^\omega \subseteq \pBool^\omega$ by
\[
\d = \Pi(\lambda i.\e_{2}).
\]
Then $\d$ satisfies the equation
\[
\text{$\d(p) = x_0 * \d(p_{x_0})$ where $x_0 = p_1(\d(p_1))$.}
\]
An interesting aspect of this selection function for the Cantor space
is that it doesn't perform case analysis on the value of~$p$, and so,
in some sense, it doesn't work by trial and error.

In order to understand this, first notice that the above recursive
definition of $\d$ makes sense if the domain of booleans is replaced
by any domain $T$ with an element $1 \in T$:
\[
\d \colon (T^\omega \to T) \to \T^\omega.
\]
We consider the case in which~$T=T_{\omega}$ is the domain of possibly
non-well-founded $\omega$-branching trees with leaves labelled by $1$.
We define this as the canonical solution of the domain equation
\[ T \cong \{\True\} + T^\omega,\] where the sum is lifted.  Thus, a
tree is either $\bot$, or else a leaf $1$, or else an unlabelled root
followed by a forest of countably many trees.  Denote the
canonical isomorphism by
\[
\{\True\} + T^\omega  \stackrel{[1,P]}{\longrightarrow}  T.
\]
Then $P \colon T^\omega \to T$ and the forest $\d(P) \in T^\omega$
gives a general formula for solving $p(\alpha)=1$ with $\alpha$
ranging over~$2^\omega$.  In fact, for any given $p \in (\pBool^\omega
\to \pBool)$, define an evaluation function $\eval = \eval_p \colon T
\to \pBool$ by
\begin{eqnarray*}
%   \eval \colon T \to \pBool^\omega & \to & \pBool \\
   \eval(\True) & = & \True, \\
   \eval(P(\alpha)) & = & p(\lambda i.\eval(\alpha_i)).
\end{eqnarray*}
Equivalently, $\eval$ is the unique homomorphism
from the initial algebra $[1,P] \colon \{1\} + T^\omega\to T$ to the
algebra $[1, p] \colon \{1\} + \pBool^\omega\to \pBool$.  Hence the
solution $\alpha$ of the equation $p(\alpha)=1$ is given by evaluating
the general solution $\d(P)$ at $p$:
\[ \alpha_i = \eval_p(\d(P)(i)).\] We illustrate this with finite
forests.  Any $p \in (\pBool^\omega\to\pBool)$ defined on $2^\omega$
is uniformly continuous, and hence of the form
$p(\alpha)=q(\alpha_0,\dots,\alpha_{n-1})$ for some~$n$ and for $q
\colon \pBool^n \to \pBool$ defined by this equation.  Now consider
the domain $T=T_n$ of $n$-branching trees,
\[ T \cong \{\True\} + T^n,\] and denote the canonical isomorphism by
\[
\{\True\} + T^n  \stackrel{[1,Q]}{\longrightarrow}  T.
\]
To make sense of the above definition of $\d$ for this choice of $T$,
define $x * (\alpha_0, \dots, \alpha_{n-2},\alpha_{n-1})= (x,
\alpha_0, \dots, \alpha_{n-2}) \in T^n$ for $x \in T$ and $\alpha \in
T^n$. We tabulate some forests, which grow doubly exponentially, but
only exponentially if auxiliary variables are used to denote common
subtrees (corresponding to the variable $x_0$ in the recursive
definition of~$\d$):
\[
\begin{array}{ll}
n & \d(Q) \\ \hline
1 &  Q(1) \\
2 & (Q(1,Q(1,1)), Q(Q(1,Q(1,1)),1))\\
3 & (x_0, x_1, x_2)
\end{array}
\]
where
\begin{eqnarray*}
x_0 & = & \text{$Q(\True,y,Q(\True,y,\True))$ with $y  = Q(\True,\True,Q(\True,\True,\True))$,}\\
x_1 & = & Q(x_0,\True,Q(x_0,\True,\True)), \\
x_2 & = & Q(x_0,x_1,\True).
\end{eqnarray*}
In order to find $(x_0, x_1, x_2) \in \pBool^3$ such that $q(x_0, x_1,
x_2)=\True$ holds, we substitute $q$ for $Q$ in the above equations,
compute $(x_0, x_1, x_2)$, and check whether $q(x_0, x_1, x_2)=\True$
holds. If it does, then we have found a solution (in fact the largest
in the lexicographic order), and otherwise we conclude that there is
no solution.  Thus, the forest $\d(Q)$ gives a closed formula for
solving the equation $q(\alpha)=1$, and telling whether there is a
solution, composed only from $q$ and the constant~$1$.  To solve
$q(\alpha)=0$, just replace $1$ by $0$ in the formula.

\subsection{Solution of equations with exhaustible domain}

By definition, a set $K \subseteq D$ is searchable iff for every
predicate $p \in (D \to \pBool)$ defined on $K$ one can find $x_0 \in
K$, uniformly in $p$, such that if the equation $p(x)=\True$ has a
solution $x \in K$, then $x=x_0$ is a solution.  We first observe that
this is equivalent to requiring that for every function $f \in (D \to
\pN)$ defined on $K$ and any total $y \in \pN$ one can find $x_0 \in
K$, uniformly in $f$ and $y$, such that if the equation $f(x)=y$ has a
solution $x \in K$, then $x=x_0$ is a solution. For one direction,
consider the predicate $p(x) = (f(x) == y)$, and, for the other,
consider the natural inclusion of $\pBool$ into $\pN$ (which is the
identity under our notation). Clearly, this generalizes from $\pN$ to
any domain $E=D_{\sigma}$ with $\sigma$ discrete in the sense of
Definition~\ref{discrete:compact}. But notice that in this case the
equation has to be written in the form $f(x) \sim y$.

\pagebreak[4]
It is natural to ask whether this generalizes to functions $f \in (D
\to E)$ with $E$ arbitrary. But it is known that, in general, if an
equation has more than one solution, it is typically not possible to
algorithmically find some solution~\cite{beeson}.  We announce the
following result: \pagebreak[3]
\begin{quote} \em Let $D=D_{\sigma \to \iota}$ and $E=D_{\tau \to
    \iota}$ for types $\sigma$ and $\tau$, let $K \subseteq D$ be an
  exhaustible entire set, $F \in (D \to E)$ be total and $g_0 \in E$
  be total.
\begin{enumerate}
\item If the equation $F(f) \sim g_0$ has a solution $f \in K$, unique
  up to total equivalence, then some $f_0 \sim f$ is computable,
  uniformly in $F$, $g_0$ and any universal quantification functional
  for~$K$.
\item It is semi-decidable whether $F(f) \sim g_0$ doesn't have a
  solution $f \in K$, with the same uniformity condition.
\end{enumerate}
\end{quote}
In order to establish this, we prove the following generalization of
Lemma~\ref{c}: Let $K_n \subseteq D_{\sigma \to \iota}$ be a sequence
of entire sets such that $K_n \supseteq K_{n+1}$ and that $\bigcap_n
K_n$ is the equivalence class of some total~$f$. Then one can find a
computable total function $f_0 \sim f$, uniformly in any sequence of
universal quantification functionals for~$K_n$.

This is not very useful in computable analysis via representations,
because typically uniqueness, when it holds, is only up to
equivalence of representations rather than total equivalence. But we
do have a corresponding result for equations involving real numbers.
In light of the following, it is natural to ask whether there is a
further corresponding result for real valued functions of real
variables.

\subsection{An exhaustible set of analytic functions}
An application of the exhaustibility of the Cantor space to the
computation of definite integrals and function maxima has been given
by Simpson~\cite{simpson:integration}. A generalization of this is
developed by Scriven~\cite{scriven}. We consider computation with real
numbers via admissible Baire-space
representations~\cite{weihrauch:analysis} and domain
representations~\cite{blanck}.  For any $x \in \I = [-1/2,1/2]$ and
any sequence $a \in [-b,b]^\omega$, the Taylor series $\sum_n a_n x^n$
converges to a number in the interval $[-2b,2b]$.
We announce the following example of a searchable, and hence
exhaustible, set:
\begin{quote}
  \em For any real number $b > 0$, the set $A = A_b$ of analytic
  functions $f \colon \I \to \R$
\[
\text{$f(x)=\sum_n a_n x^n$ \qquad with \quad $a \in [-b,b]^\omega$}
\]  
has a searchable set of representatives, uniformly in $b$.
\end{quote}
In our proof of this, we argue that any $f \in A$ can be computed
uniformly in its Taylor coefficients and use the fact that
 $[-b,b]^\omega$ has a searchable set of representatives.
This can be used to deduce that:
\pagebreak[3]
  \begin{enumerate}
  \item The Taylor coefficients of any $f \in A$ can be computed
        uniformly in~$f$.

  \item The distance function $d_A \colon \R^\I \to \R$ defined by
    \[ d_A(g) = \min \{ d(f,g) \mid f \in A \} \] is computable (cf.\
    Bishop's notion of locatedness~\cite{MR36:4930,bishop:bridges}).

  \item For any $f \in \R^\I$, it is semi-decidable, uniformly in $f$,
        whether $f \not \in A$. 
  \end{enumerate}

\subsection{Peano's theorem}

This celebrated theorem asserts that certain differential equations
have solutions, but without indicating what the solutions might look
like.  Its proofs are typically based on the Arzela--Ascoli theorem,
and proceed by applying Euler's algorithm to produce a sequence of
approximate solutions. In general, however, this sequence is not
convergent, but, by an application of compactness, there is a
convergent subsequence, although no specific example is exhibited by
this argument, which is then easily seen to produce a solution of the
equation.  It is therefore natural to ask whether our tools could be
applied to compute unique solutions of such differential equations
under suitable assumptions.  Here the goal is not to obtain a usable
algorithm, but rather to understand the classical proof from a
computational perspective in connection with the notion of
exhaustibility and its interaction with the notion of compactness and
with the Arzela--Ascoli theorem.

\subsection{Uncountable products of searchable sets}

It is natural to ask whether the countable product
theorem~\ref{searchable:tychonoff} can be generalized to uncountable
index sets.  This question is pertinent in view of well known
constructive versions of the Tychonoff theorem in locale
theory~\cite{MR641111} and formal topology~\cite{MR1150923}, which
don't restrict the cardinality of the index set.  However, this seems
unlikely in the realm of Kleene--Kreisel higher type computability
theory.  Consider the case in which the index set is the Cantor space.
By the classical Tychonoff theorem, the product of $\Bool^\omega$-many
copies of~$\Bool$ is compact.  This product could be written
as~$\Bool^{\Bool^\omega}$. But this notation in higher-ype
computation is interpreted as a function space, and in the category of
Kleene--Kreisel spaces one has $\Bool^{\Bool^\omega}\cong \N$, because
the base is discrete and the exponent is compact (cf.\
Theorem~\ref{thm:discrete:compact}).  This phenomenon in fact also
takes place in the categories of locales~\cite{hyland:functionspaces}
and topological spaces~\cite{escardo:barbados}. In the Tychonoff
theorem for locales or spaces, the indices form a set or equivalently a
discrete space. But a discrete Kleene--Kreisel space is countable (and
more generally a discrete QCB space is countable).

\subsection{Totality of the product functional and bar recursion}

\newcommand{\CBR}{\operatorname{CBR}}
\newcommand{\MBR}{\operatorname{MBR}} In
Theorem~\ref{searchable:tychonoff} we constructed a computable
functional $\Pi$ such that $\Pi(\e)(p) \in \prod_i K_i$
whenever~$\e_i$ is a selection functional for a set $K_i$ and $p$ is
defined on $\prod_i K_i$. It is natural to ask whether the
functional~$\Pi$ is actually total. Paulo Oliva has shown that this is
indeed the case (personal communication). Moreover, he has observed
that if the type of booleans is replaced by the type of natural
numbers, our recursive definition of $\Pi$ still makes sense and that
it also gives rise to a total functional, which he calls $\CBR$
(course-of-values bar recursion). He additionally proved that $\CBR$
is primitively recursively inter-definable with the modified bar
recursion functional $\MBR$ defined in~\cite{berger:oliva:mbr}. We are
currently investigating together the ramifications of these
observations.

\subsection{Alternative notions of exhaustibility}

If one is interested only in total functionals and sets of total
elements, it is natural formulate the following alternative notion of
exhaustibility: A set $K \subseteq D$ is \emph{entirely exhaustible}
if it is entire and there is a total computable functional $\forall_K
\colon (D \to \pBool) \to \pBool)$ such that for every total $p \in (D
\to \pBool)$ one has $\forall_K(p)=\True$ iff $p(x)=\True$ for all $x
\in K$. Because, as we have seen, non-empty, exhaustible entire sets
are computable retracts, it follows that any exhaustible entire set is
entirely exhaustible. The converse fails (but see the next paragraph),
because e.g.\ any dense subset of the Cantor space is entirely
exhaustible using Berger's algorithm and the fact that any total
predicate is uniquely determined, up to total equivalence, by its
behaviour on a dense of set of total elements.

Moreover, when one is only interested in total functions and total
elements, it is perhaps more natural to work with Kleene--Kreisel
spaces directly, without the detour via domains, e.g.\ defined as
$k$-spaces.  QCB spaces are a natural and general setting for such
considerations~\cite{MR2328287,MR1948051}.  One might say that a
subset $K$ of a space~$X$ is \emph{totally exhaustible} if there is a
computable functional $\forall_K \colon (X \to \Bool) \to \Bool$ such
that for every $p \in (X \to \Bool)$, we have that
$\forall_K(p)=\True$ iff $p(x)=\True$ for all $x \in K$.  When e.g.\
$X=\N^\N$, total exhaustibility of $K \subseteq X$ doesn't entail
compactness of~$K$, again considering the example of a dense subset of
the Cantor space. But Matthias Schr\"oeder (personal communication in
2006) proved that if $X$ is a QCB space which is the sequential
coreflection of a zero-dimensional Hausdorff space, then any totally
exhaustible \emph{closed} set $K\subseteq X$ is compact. This includes
the case in which $X$ is a Kleene--Kreisel space. Using this and the
above observations, one can show that, as far as higher-type
computation with total continuous functionals is concerned, the
notions of exhaustibility and total exhaustibility agree for closed
sets.

\subsection{A unified type system for total and partial computation}

As we have already discussed, Kleene--Kreisel spaces and Ershov--Scott
domains live together in the cartesian closed category of compactly
generated spaces, and in fact in the subcategory of QCB spaces.
Additionally, the inclusions of $k$-spaces and of Ershov--Scott
domains into these categories preserve the cartesian-closed
structure~\cite{escardo:lawson:simpson,MR2328287}.  Hence total and
partial higher-type functionals coexist in the same cartesian closed
category.  One can envisage a higher-type system that simultaneously
incorporates, but explicitly distinguishes, total and partial objects,
and corresponding PCF-style formal systems. Among the formation rules
one can have two types for the natural numbers, with and without
$\bot$, and it would make sense to stipulate that $\sigma \to \tau$ is
a partial type whenever $\sigma$ is any type and $\tau$ is a partial
type, and that $\sigma \to \tau$ is a total type when both $\sigma$
and $\tau$ are total types. In its simplest form, such a language
could include G\"odel's system $T$ for total types and PCF for partial
types. Such a formalism would have simplified, and made more
transparent, much of the development concerning exhaustible sets of
total elements, where we could have benefited from functionals that
take total inputs and produce potentially partial outputs.  In
particular, all the technical considerations of total equivalence and
shadows could have been avoided in this way, making the development
more transparent. Such functionals are actually total, but their
construction uses modes of definition that belong to the realm of
partial computation. The system-$T$ fragment could be further extended
with total computable functionals such as bar recursion and some of
those developed here, once one has shown they are indeed total.

\subsection{Time complexity of exhaustive search}

In the paper~\cite{escardo:lics07}, we report some surprisingly fast
experimental results, which serve to counteract an impression that
might be gained from the technical development that the algorithms
presented would be essentially intractable and of purely theoretical
interest. Moreover, that paper formulates run-time conjectures that
provide examples of questions that one would like to be able to treat
rigorously and that are potentially useful as target problems for work
in higher-type complexity theory.  The conjectures express the run
time in terms of the modulus of uniform continuity of the input
predicate on the exhaustible set, and hence topology seems to play a
role in higher-type complexity too.

It might be possible to apply our search algorithms to practical
problems, e.g.\ in real analysis and in program verification.  But it
is more likely that, in order to obtain feasible algorithms, such
applications will need to rely on the development of particular
algorithms for particular kinds of infinite search tasks, perhaps
inspired or guided by the general algorithms we have developed, but in
any case needing new insights and techniques.  In fact, this is
already the case for finite search problems, as is well known.  But
the fast examples reported in~\cite{escardo:lics07} do highlight that
the task of obtaining particular search algorithms that are efficient
for particular kinds of infinite search problems of interest is a
direction of research that deserves attention and is likely to be
fruitful, and that a study of feasible infinite search problems cries
to be carried out.

\subsection{A fast product functional}

We have just discussed that one should look for efficient search
algorithms for specialized problems.  But it is still interesting to
ask how fast a general infinite search algorithm can be.  We don't
know the answer, but we report an algorithm that outperforms all the
algorithms applied for the experimental results
of~\cite{escardo:lics07}, and whose theoretical run-time behaviour
remains to be investigated.

\newcommand{\Root}{\operatorname{root}}
\newcommand{\Left}{\operatorname{left}}
\newcommand{\Right}{\operatorname{right}}
\newcommand{\branch}{\operatorname{branch}} 

We regard an infinite sequence~$t$ as an infinite binarily
branching tree with the elements of the sequence organized in a
breadth-first manner: the root is $t_0$, and the left and right
branches of the node $t_n$ are $t_{2n+1}$ and $t_{2n+2}$.  With this
in mind, define functions
\begin{quote}
  $\Root \colon E^\myomega \to E$, \,\, $\Left,\Right \colon E^\myomega \to
  E^\myomega$, \,\, $\branch \colon E \times E^\myomega \times E^\myomega \to
  E^\myomega$
\end{quote}
by
\[
\Root(t) = t_0, \qquad
\Left (t) = \lambda i. t_{2i + 1}, \qquad
\Right(t) = \lambda i. t_{2i + 2},
\]
\[
\branch(x,l,r) = \lambda i.
\begin{cases}
x & \text{if $i = 0$,} \\
l_{(i-1)/2} & \text{if $i$ is odd,} \\
r_{(i-2)/2} & \text{otherwise.}
\end{cases}
\]
Then, for $x \in E$ and $l,r \in E^\myomega$,
\[
\Root(\branch(x,l,r))=x, \quad
\Left(\branch(x,l,r))= l, \quad
\Right(\branch(x,l,r))=r, 
\]
\[
\branch(\Root(t),\Left(t),\Right(t))=t.
\]
Our experimentally faster product algorithm is then recursively defined by
\[
\Pi(\e)(p) = \branch(x_0,l_0,r_0)
\]
where
\[
  \e_{\Root}  =  \Root(\e), \qquad
  \e_{\Left}  =  \Pi(\Left(\e)), \qquad
  \e_{\Right}  =  \Pi(\Right(\e)),
\]
\[  \exists_{\Left}(p)  = p(\e_{\Left}(p)), \qquad
  \exists_{\Right}(p)  =  p(\e_{\Right}(p)),
\]
\begin{eqnarray*}
x_0 & = & \e_{\Root}(\lambda x.\exists_{\Left} \,l.\,\exists_{\Right} \,r.\,p(\branch(x,l,r))),
\\  l_0 & = & \e_{\Left}(\lambda l.\exists_{\Right} \,r.\,p(\branch(x_0,l,r))),
\\  r_0 & = & \e_{\Right}(\lambda r.p(\branch(x_0,l_0,r)).
\end{eqnarray*}
The idea is that treating sequences as trees reduces some linear
factors to logarithmic factors (very much like in the well-known
heap-sort algorithm).

\subsection{Operational perspective}
An advantage of the proof of Theorem~\ref{searchable:tychonoff}
sketched in~\cite{escardo:lics07} is that it can be directly
interpreted in the operational
setting~\cite{escardo:barbados,escardo:ho}.  The proofs of the other
results of Section~\ref{building} are also easily seen to work in the
above operational setting.  But a development of operational
counter-parts for those of later sections is left as an open problem.
This requires an operational reworking of the topological
Section~\ref{criteria}, which seems challenging.

\section{Concluding remark on the role of topology} 
\label{conclusion}

The algorithms developed in this work have purely computational
specifications, which allow them to be applied without knowledge of
specialized mathematical techniques in the theory of computation.
However, the correctness proofs of some of the algorithms crucially
rely on topological techniques. In this sense, this work is a genuine
application of topology to computation: theorems formulated in the
language of computation, proofs developed in the language of topology.

But there is another sense in which topology proves to play a crucial
role.  Compact sets in topology are advertised as sets that behave, in
many important respects, as if they were finite.  Then exhaustively
searchable sets \emph{ought} to be compact.  And compact sets are
known to be closed under continuous images and under finite and
infinite products.  Moreover, for countably based Hausdorff spaces,
they are the continuous images of the Cantor space.  Hence searchable
sets \emph{ought} to have corresponding closure properties and
characterization, which is what this work establishes, among other
things, \emph{motivated} by these considerations.  Thus, in a more
abstract level, topology is applied as a paradigm for discovering
unforeseen notions, algorithms and theorems in computability theory.

%\pagebreak
\nocite{smyth:topology}
\bibliographystyle{plain}
\bibliography{exhaustive-journal}

\begin{thebibliography}{10}

\bibitem{abramsky:jung}
S.~Abramsky and A.~Jung.
\newblock Domain theory.
\newblock In S.~Abramsky, D.M. Gabbay, and T.S.E. Maibaum, editors, {\em
  Handbook of Logic in Computer Science}, volume~3 of {\em Oxford science
  publications}, pages 1--168. Clarendon Press, 1994.

\bibitem{barendregt}
H.P. Barendregt.
\newblock {\em The Lambda-Calculus: its Syntax and Semantics}.
\newblock North-Holland, 1984.

\bibitem{MR2328287}
I.~Battenfeld, M.~Schr{\"o}der, and A.~Simpson.
\newblock A convenient category of domains.
\newblock In {\em Computation, meaning, and logic: articles dedicated to Gordon
  Plotkin}, volume 172 of {\em Electron. Notes Theor. Comput. Sci.}, pages
  69--99. Elsevier, Amsterdam, 2007.

\bibitem{MR1948051}
A.~Bauer.
\newblock A relationship between equilogical spaces and type two effectivity.
\newblock {\em MLQ Math. Log. Q.}, 48(suppl. 1):1--15, 2002.
\newblock Dagstuhl Seminar on Computability and Complexity in Analysis, 2001.

\bibitem{bauer:escardo:simpson}
A.~Bauer, M.H. Escard\'o, and A.K. Simpson.
\newblock Comparing functional paradigms for exact real-number computation.
\newblock In {\em ICALP}, volume 2380 of {\em Lect. Not. Comp. Sci.}, pages
  488--500, 2002.

\bibitem{beeson}
M.J. Beeson.
\newblock {\em Foundations of Constructive Mathematics}.
\newblock Springer, 1985.

\bibitem{berger:thesis}
U.~Berger.
\newblock {\em Totale {O}bjekte und {M}engen in der {B}ereichstheorie}.
\newblock PhD thesis, Ma\-the\-ma\-ti\-sches Institut der Universit{\"a}t
  M{\"u}nchen, 1990.

\bibitem{berger:total}
U.~Berger.
\newblock Total sets and objects in domain theory.
\newblock {\em Ann. Pure Appl. Logic}, 60(2):91--117, 1993.

\bibitem{berger:dependent:1}
U.~Berger.
\newblock Continuous functionals of dependent and transfinite types.
\newblock In {\em Models and computability (Leeds, 1997)}, volume 259 of {\em
  London Math. Soc. Lecture Note Ser.}, pages 1--22. Cambridge Univ. Press,
  Cambridge, 1999.

\bibitem{berger:dependent}
U~Berger.
\newblock Density theorems for the domains-with-totality semantics of dependent
  types.
\newblock {\em Appl. Categ. Structures}, 7(1-2):3--30, 1999.
\newblock Applications of ordered sets in computer science (Braunschweig,
  1996).

\bibitem{berger:oliva:mbr}
U.~Berger and P.~Oliva.
\newblock Modified bar recursion.
\newblock {\em Math. Structures Comput. Sci.}, 16(2):163--183, 2006.

\bibitem{bird}
R.~Bird and P.~Wadler.
\newblock {\em Introduction to Functional Programming}.
\newblock Prentice-Hall, New York, 1988.

\bibitem{MR36:4930}
E.~Bishop.
\newblock {\em Foundations of constructive analysis}.
\newblock McGraw-Hill Book Co., New York, 1967.

\bibitem{bishop:bridges}
E.~Bishop and D.~Bridges.
\newblock {\em Constructive Analysis}.
\newblock Springer, Berlin, 1985.

\bibitem{blanck}
J.~Blanck.
\newblock Domain representations of topological spaces.
\newblock {\em Theoret. Comput. Sci.}, 247(1-2):229--255, 2000.

\bibitem{MR1150923}
T.~Coquand.
\newblock An intuitionistic proof of {T}ychonoff's theorem.
\newblock {\em J. Symbolic Logic}, 57(1):28--32, 1992.

\bibitem{egli:constable}
H.~Egli and R.L. Constable.
\newblock Computability concepts for programming languages.
\newblock {\em Theoret. Comput. Sci.}, 2:133--145, 1976.

\bibitem{escardo:barbados}
M.H. Escard\'o.
\newblock Synthetic topology of data types and classical spaces.
\newblock {\em Electron. Notes Theor. Comput. Sci.}, 87:21--156, 2004.

\bibitem{escardo:lics07}
M.H. Escard\'o.
\newblock Infinite sets that admit fast exhaustive search.
\newblock In {\em Proceedings of the 22nd Annual IEEE Symposium on Logic In
  Computer Science}, pages 443--452. IEEE Computer Society, 2007.

\bibitem{escardo:ho}
M.H. Escard\'o and W.K. Ho.
\newblock Operational domain theory and topology of a sequential programming
  language.
\newblock In {\em Proceedings of the 20th Annual IEEE Symposium on Logic In
  Computer Science}, pages 427--436, 2005.

\bibitem{escardo:lawson:simpson}
M.H. Escard{\'o}, J.~Lawson, and A.~Simpson.
\newblock Comparing {C}artesian closed categories of (core) compactly generated
  spaces.
\newblock {\em Topology Appl.}, 143(1-3):105--145, 2004.

\bibitem{gale}
D.~Gale.
\newblock Compact sets of functions and function rings.
\newblock {\em Proc. Amer. Math. Soc.}, 1:303--308, 1950.

\bibitem{gandy77}
R.~O. Gandy and J.~M.~E. Hyland.
\newblock Computable and recursively countable functions of higher type.
\newblock In {\em Logic Colloquium 76 (Oxford, 1976)}, pages 407--438. Studies
  in Logic and Found. Math., Vol. 87. North-Holland, Amsterdam, 1977.

\bibitem{gierz:domains}
G.~Gierz, K.H. Hofmann, K.~Keimel, J.D. Lawson, M.~Mislove, and D.S. Scott.
\newblock {\em Continuous Lattices and Domains}.
\newblock Cambridge University Press, 2003.

\bibitem{haskell:hutton}
G.~Hutton.
\newblock {\em {Programming in Haskell}}.
\newblock Cambridge University Press, 2007.

\bibitem{hyland:functionspaces}
M.~Hyland.
\newblock Function spaces in the category of locales.
\newblock In {\em Continuous lattices}, volume 871 of {\em Lect. Notes Math.},
  pages 264--281, 1981.

\bibitem{MR641111}
P.~T. Johnstone.
\newblock Tychonoff's theorem without the axiom of choice.
\newblock {\em Fund. Math.}, 113(1):21--35, 1981.

\bibitem{kelley}
J.L. Kelley.
\newblock {\em General Topology}.
\newblock D. van Nostrand, New York, 1955.

\bibitem{MR2143877}
J.R. Longley.
\newblock Notions of computability at higher types. {I}.
\newblock In {\em Logic Colloquium 2000}, volume~19 of {\em Lect. Notes Log.},
  pages 32--142. Assoc. Symbol. Logic, Urbana, IL, 2005.

\bibitem{longley:ubiquitous}
J.R. Longley.
\newblock On the ubiquity of certain type structures.
\newblock {\em Mathematical Structures in Computer Science}, 17:841--953, 2007.

\bibitem{normann:recursion}
D.~Normann.
\newblock {\em Recursion on the countable functionals}, volume 811 of {\em Lec.
  Not. Math.}
\newblock Springer, 1980.

\bibitem{normann:totality}
D.~Normann.
\newblock Computability over the partial continuous functionals.
\newblock {\em J. Symbolic Logic}, 65(3):1133--1142, 2000.

\bibitem{normann:zero}
D.~Normann.
\newblock Comparing hierarchies of total functionals.
\newblock {\em Log. Methods Comput. Sci.}, 1(2):2:4, 28, 2005.

\bibitem{normann:computer}
D.~Normann.
\newblock Computing with functionals---computability theory or computer
  science?
\newblock {\em Bull. Symbolic Logic}, 12(1):43--59, 2006.

\bibitem{plotkin:lcf}
G.D. Plotkin.
\newblock {LCF} considered as a programming language.
\newblock {\em Theoret. Comput. Sci.}, 5(1):223--255, 1977.

\bibitem{plotkin:tomega}
G.D. Plotkin.
\newblock $\mathbb{T}^{\omega}$ as a universal domain.
\newblock {\em J. Comput. System Sci.}, 17:209--236, 1978.

\bibitem{plotkin:domains}
G.D. Plotkin.
\newblock Pisa notes on domains.
\newblock Department of Computer Science, University of Edinburgh. Available at
  the author's web page, 1983.

\bibitem{plotkin:totality}
G.D. Plotkin.
\newblock Full abstraction, totality and {PCF}.
\newblock {\em Math. Structures Comput. Sci.}, 9(1):1--20, 1999.

\bibitem{schroeder:not:regular}
M.~Sch\"oder.
\newblock The sequential topology on {$\N^{\N^\N}$} is not regular.
\newblock Preprint. Institut f\"ur Theoretische Informatik und Mathematik,
  Fakult\"at f\"ur Informatik, Universit\"at der Bundeswehr M\"unchen, May
  2008.

\bibitem{scott:datatypes}
D.S. Scott.
\newblock Data types as lattices.
\newblock {\em SIAM J. Comput.}, 5:522--587, 1976.

\bibitem{scott:lcf}
D.S. Scott.
\newblock A type-theoretical alternative to {CUCH}, {ISWIM} and {OWHY}.
\newblock {\em Theoret. Comput. Sci.}, 121:411--440, 1993.
\newblock Reprint of a 1969 manuscript.

\bibitem{scriven}
A.~Scriven.
\newblock A functional algorithm for exact real integration with invariant
  measures.
\newblock In {\em Mathematical Foundations of Programming Semantics}, page To
  appear, 2008.
\newblock Electr.\ Notes.\ Theret.\ Comp.\ Sci.

\bibitem{simpson:integration}
A.~Simpson.
\newblock Lazy functional algorithms for exact real functionals.
\newblock {\em Lec. Not. Comput. Sci.}, 1450:323--342, 1998.

\bibitem{smyth:effectively}
M.B. Smyth.
\newblock Effectively given domains.
\newblock {\em Theoret. Comput. Sci.}, 5(1):256--274, 1977.

\bibitem{smyth:topology}
M.B. Smyth.
\newblock Topology.
\newblock In S.~Abramsky, D.M. Gabbay, and T.S.E. Maibaum, editors, {\em
  Handbook of Logic in Computer Science}, volume~1 of {\em Oxford science
  publications}, pages 641--761. Clarendon Press, 1994.

\bibitem{weihrauch:analysis}
K.~Weihrauch.
\newblock {\em Computable analysis}.
\newblock Springer, 2000.

\end{thebibliography}

\vskip-50 pt

% \pagebreak 
% {\footnotesize \tableofcontents} 

% \vfill

% DRAFT version of \today

\end{document}